\newtheorem{theorem}{Theorem}[section]
\newtheorem{definition}[theorem]{Definition}
\newtheorem{claim}[theorem]{Claim}
\newtheorem{proposition}[theorem]{Proposition}
\newtheorem{lemma}[theorem]{Lemma}
\title{\bf If time had no beginning: growth dynamics for past-infinite causal sets}
\author[1]{Bruno Valeixo Bento}
\author[2,3]{Fay Dowker}
\author[2]{Stav Zalel}
\affil[1]{\small Department of Mathematical Sciences, University of Liverpool, Liverpool L69 7ZL}
\affil[2]{Blackett Laboratory, Imperial College London, SW7 2AZ, U.K.}
\affil[3]{Perimeter Institute, 31 Caroline Street North, Waterloo ON, N2L 2Y5, Canada.}
\newcommand{\lc}[1]{\tilde{#1}}
\newcommand{\Z}{\mathbb{Z}}
\newcommand{\N}{\mathbb{N}}
\def\twoch{\,\,\begin{picture}(0,1) 
\thicklines
\multiput(0,0)(0,10){2}{\circle*{2}}
\put(0,0){\line(0,1){10}}
\end{picture}\,\,}
\def\threech{\,\,\begin{picture}(0,1) 
\thicklines
\multiput(0,0)(0,10){3}{\circle*{2}}
\put(0,0){\line(0,1){10}}
\put(0,10){\line(0,1){10}}
\end{picture}\,\,}
\def\twoach{\,\begin{picture}(1,1) 
\thicklines
\multiput(0,0)(5,0){2}{\circle*{2}}
\end{picture}\,\,\,}
\def\threeach{\,\begin{picture}(1,1) 
\thicklines
\multiput(0,0)(10,0){3}{\circle*{2}}
\end{picture}\,\,\,}
\def\oneach{\,\begin{picture}(1,1) 
\thicklines
\multiput(0,0)(10,0){1}{\circle*{2}}
\end{picture}\,\,\,}
\def\lambdacauset{\,\begin{picture}(1,1) 
\thicklines
\multiput(0,0)(6.5,0){2}{\circle*{2}}
\multiput(3.25,10.2)(3.25,10.2){1}{\circle*{2}}
\put(0,0){\line(1,3){3.5}}
\put(6.5,0){\line(-1,3){3.5}}
\end{picture}\,\,\,}
\def\lambdafour{\,\begin{picture}(1,1) 
\thicklines
\multiput(0,0)(6.5,0){2}{\circle*{2}}
\multiput(3.25,10.2)(3.25,10.2){1}{\circle*{2}}
\multiput(3.25,20)(3.25,20){1}{\circle*{2}}
\put(0,0){\line(1,3){3.5}}
\put(6.5,0){\line(-1,3){3.5}}
\put(3.25,10.2){\line(0,1){10}}
\end{picture}\,\,\,}
\def\fourch{\,\,\begin{picture}(0,1) 
\thicklines
\multiput(0,0)(0,10){4}{\circle*{2}}
\put(0,0){\line(0,1){10}}
\put(0,10){\line(0,1){10}}
\put(0,20){\line(0,1){10}}
\end{picture}\,\,}
\def\vee{\,\,\begin{picture}(0,1) 
\thicklines
\put(0,0){\line(1,2){5}}
\put(0,0){\line(-1,2){5}}
\put(0,0){\circle*{2}}
\put(5,10){\circle*{2}}
\put(-5,10){\circle*{2}}
\end{picture}\,\,}
\def\diamond{\,\,\begin{picture}(0,1) 
\thicklines
\put(0,0){\line(1,2){5}}
\put(0,0){\line(-1,2){5}}
\put(0,0){\circle*{2}}
\put(0,20){\circle*{2}}
\put(5,10){\circle*{2}}
\put(-5,10){\circle*{2}}
\put(-5,10){\line(1,2){5}}
\put(5,10){\line(-1,2){5}}
\end{picture}\,\,}
\def\topvee{\,\,\begin{picture}(0,1) 
\thicklines
\put(0,10){\line(1,2){5}}
\put(0,10){\line(-1,2){5}}
\put(0,10){\circle*{2}}
\put(5,20){\circle*{2}}
\put(-5,20){\circle*{2}}
\put(0,0){\circle*{2}}
\put(0,0){\line(0,1){10}}
\end{picture}\,\,}
\begin{document}

\maketitle

\abstract{We explore whether the growth dynamics paradigm of Causal Set Theory is compatible with past-infinite causal sets. We modify the Classical Sequential Growth dynamics of Rideout and Sorkin to accommodate  growth ``into the past'' and discuss what form physical constraints such as causality could take in this new framework. We propose convex-suborders as the ``observables'' or ``physical properties'' in a theory in which causal sets can be  past-infinite and use this proposal to construct a manifestly covariant framework for dynamical models of growth for  past-infinite causal sets.}


\newpage
\tableofcontents

\newpage

\section{Introduction}

Much of the effort directed towards obtaining a dynamics for Causal Set Theory has been guided by the paradigm of \textit{growth} in which a causal set grows via a stochastic process of accretion of spacetime atoms.\footnote{The other main avenue is to construct a ``quantum state sum'' over causal sets each weighted by an amplitude, for example \cite{Brightwell:2007aq,Surya:2011du,Glaser:2017sbe,Loomis:2017jhn,Cunningham:2019rob}.}  After the pioneering work by Rideout and Sorkin \cite{Rideout:1999ub},  work has concentrated on the classical domain -- e.g. \cite{Rideout:phd, Rideout:2000fh, Brightwell:2002yu,Brightwell:2002vw, Brightwell:2009x} -- though work has also been done on investigating how quantum growth models might be constructed \cite{Criscuolo:1998gd,Brightwell:2009x, Sorkin:2012xx,Dowker:2010qh,Surya:2020cfm}.

The archetypal and to date most fruitful and most studied growth dynamics for causal sets is Rideout and Sorkin's family of Classical Sequential Growth (CSG) models \cite{Rideout:1999ub}. In each of these models, a single element is born at each stage and an infinite random causal set is grown when the process is run to infinity. A CSG model is constrained by the requirement of ``internal temporality'' namely that that at each stage of the process, the new element cannot be born to the past of --- cannot \textit{precede} in the causal set order --- an element born at an earlier stage. This internal temporality constraint on the process fixes the sample space of the CSG model: it is the set of infinite \textit{past-finite} causal sets, where the term past-finite will be precisely defined shortly. Essentially, the causal set universe grown in a CSG model must have a beginning, by definition of the model. As such, the CSG models rule out the possibility that there might, for example, have been an infinite sequence of epochs in a bouncing scenario, punctuated by infinitely many ``Big Crunch-and-then-Big Bang'' events,   prior to our present epoch. 

In this work, we consider whether the growth dynamics paradigm necessarily entails past-finiteness or whether it can be compatible with \textit{past-infinite} causal set cosmologies as suggested by W\"uthrich and Callender \cite{Wuthrich:2015vva}.  In particular, we will investigate causal set cosmologies which are both past-infinite and future-infinite, \textit{i.e.} cosmologies in which time has neither a beginning nor an end. After setting out notation and concepts in section 2, in section  \ref{section_seq_growth} we modify the CSG models to accommodate growth of such causal sets. Already at this point, conceptual challenges arise, as might be anticipated. Perhaps the most pressing of these is that our new framework requires that new elements be born to the past of existing ones, thus making it (nearly if not entirely) impossible to conceive of the growth process as a physical process of Becoming \cite{Sorkin:2007qh, spacetimeatoms}. 
Nevertheless, we are able to identify a set of meaningful, comprehensible observables\footnote{We use the term ``observable'' as a shorthand for ``physical property'' and not to imply that there need be any external observer.}  for past-infinite growth dynamics, namely the \textit{convex-events} that specify which convex-suborders  are contained in the growing causal set. This sets the stage for section \ref{section_cov_growth} where we pursue an alternative route to past-infinite growth by constructing a variation of \textit{covtree} which is the basis of a manifestly covariant alternative to the framework of sequential growth models \cite{Dowker:2019qiz}. We show that the resulting framework is compatible with past-infinite growth and that the observables in this case are exactly the formerly identified convex-events. We conclude with a discussion in section \ref{section_discussion}.

\section{Preliminaries}

In this section we present terminology and notation that we use in the rest of this work, beginning with some standard terminology. 

\vspace{2mm}\noindent Let $\Pi$ be a countable (finite or infinite)  causal set (or ``causet'' for short). We adopt the irreflexive convention for the relation on $\Pi$: $x \not\prec x$, $x\in \Pi$. Recall that a causal set is locally finite by definition: $| \{ z\,|\, x \prec z \prec y\} | < \infty $ $\forall x, y \in \Pi$ such that $x\prec y$. 

\vspace{2mm}\noindent The \textbf{past} of $x\in \Pi$ is the subcauset $past(x):=\{y\in \Pi | y\prec x\}$. This is the \textit{non-inclusive} past, \textit{i.e} $x \not\in past(x)$. The \textbf{future} of $x\in \Pi$ is the subcauset $future(x):=\{y\in \Pi | y\succ x\}$. This is the \textit{non-inclusive} future.

\vspace{2mm}\noindent $\Pi$ is \textbf{past-finite} if $|past(x)|<\infty\, \ \forall \ x \in \Pi$. Similarly, $\Pi$ is \textbf{future-finite} if $|future(x)|<\infty\, \ \forall \ x \in \Pi$.

\vspace{2mm}\noindent $\Pi$ is \textbf{past-infinite} (future-infinite) if it is not past-finite (future-finite).

\vspace{2mm}\noindent $\Pi$ is \textbf{two-way infinite} if it is both past-infinite and future-infinite. Building growth dynamics for two-way infinite causet cosmologies is the motivation for this current work.

\vspace{2mm}\noindent A \textbf{stem} in $\Pi$ is a finite subcauset $\Phi$ of $\Pi$ such that if $x\in \Phi$ then $past(x)\subseteq\Phi$. An $n$-stem is a stem with cardinality $n$.

\vspace{2mm}\noindent If $\Pi$ is past-finite then an element $x\in \Pi$ is in \textbf{level $L$} in $\Pi$ if the longest chain of which $x$ is the maximal
element has cardinality $L$, \textit{e.g.} level 1 comprises the minimal elements of $\Pi$. 

\vspace{2mm}\noindent The \textbf{width} of $\Pi$, $w(\Pi)$, is the largest cardinality of an antichain in $\Pi$. The \textbf{height} of $\Pi$, $h(\Pi)$, is largest cardinality of a chain in $\Pi$. 
If $\Pi$ is past finite, the height of $\Pi$  equals the number of levels in $\Pi$. Note, the
height and width may be infinite if $\Pi$ is infinite. 

\vspace{2mm}\noindent A \textbf{path} in $\Pi$ is a (finite or infinite) chain in $\Pi$ such that the relation between each adjacent pair of elements in the chain is a link (\textit{i.e.} a covering relation) in $\Pi$. 

\subsection{Natural labelings and labeled causets}\label{subsec_labeled_causets_chap1}
Labeled causets as defined below are used throughout this paper. We emphasise that the definition of labeled causets which we give here is different to that given in \cite{Dowker:2019qiz}---it is an extension that allows us to discuss past-infinite causets. Correspondingly, definitions deriving from labeled causets (\textit{e.g.} the definition of an $n$-order) and the symbols we use to denote spaces of labeled causal sets (\textit{e.g.} $\lc{\Omega}(n)$ and $\Omega$) take a different meaning here to that in \cite{Dowker:2019qiz, Brightwell:2002vw}. 

\vspace{2mm}\noindent Let $\Psi$ be a countably infinite causet. Let $\mathbb{Z}^-$ be the set of negative integers.

\vspace{2mm}\noindent A \textbf{natural labeling} of $\Psi$ is a bijection $f$ from either $\mathbb{N}$ or $\mathbb{Z}^-$ or $\mathbb{Z}$ to $\Psi$ that satisfies $f(i) \prec f(j)\implies i < j$. 

\vspace{2mm}\noindent
The following lemma will be useful:

\begin{lemma}\label{labeling_lemma} 

Let $\Psi$ be a countably infinite causet. Then, 
\begin{itemize}
 \item[(a)] $\Psi$ has a natural labeling by $\mathbb{N}$ if and only if $\Psi$ is past-finite \cite{Brightwell:2011};
 \item [(b)]  $\Psi$ has a natural labeling by $\mathbb{Z}^-$ if and only if $\Psi$ is future-finite (a corollary of (a));
  \item[(c)] $\Psi$ has a natural labeling by $\mathbb{Z}$ if and only if one of the following conditions holds
  \cite{honan:2018,Gupta:2018}: 
  \begin{enumerate} \item[(i)] $\Psi$ is two-way infinite;
   \item[(ii)] $\Psi$ is past-finite and has infinitely many minimal elements; \item[(iii)] $\Psi$ is future-finite and has infinitely many maximal elements.\end{enumerate}
\end{itemize}
\end{lemma}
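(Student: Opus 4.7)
My plan is to treat the three parts in turn, leaning on the cited results \cite{Brightwell:2011,honan:2018,Gupta:2018} for the harder directions and deriving (b) as a clean corollary of (a).

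For part (a) the forward direction is immediate: if $f:\mathbb{N}\to\Psi$ is natural, then $past(f(n))\subseteq\{f(1),\ldots,f(n-1)\}$, hence finite. The backward direction (\cite{Brightwell:2011}) follows from a standard inductive construction: enumerate $\Psi=\{y_1,y_2,\ldots\}$ arbitrarily, and at stage $n$ ensure $y_n$ is labeled by first assigning the finitely many still-unlabeled elements of $past(y_n)$ (taken in a natural order of that finite induced subposet) the next unused positive integers, followed by $y_n$ itself. Past-finiteness keeps every batch finite and every $y_n$ is processed by stage $n$, yielding a bijection that respects the order.

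Part (b) is a corollary of (a) by order reversal. Writing $\Psi^{\mathrm{op}}$ for the causet with the relation reversed, $\Psi$ is future-finite iff $\Psi^{\mathrm{op}}$ is past-finite. Given a natural labeling $g:\mathbb{N}\to\Psi^{\mathrm{op}}$ from (a), set $f(-n):=g(n)$; then $f:\mathbb{Z}^-\to\Psi$ is natural because $f(-m)\prec_\Psi f(-k)$ rewrites as $g(k)\prec_{\Psi^{\mathrm{op}}} g(m)$, which forces $k<m$ and hence $-m<-k$. The reverse implication is symmetric: a natural labeling $f:\mathbb{Z}^-\to\Psi$ gives $g(n):=f(-n)$ as a natural labeling of $\Psi^{\mathrm{op}}$ by $\mathbb{N}$, so (a) forces $\Psi^{\mathrm{op}}$ past-finite and $\Psi$ future-finite.

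For part (c) I would cite \cite{honan:2018,Gupta:2018} and sketch the structure of the argument. For the forward direction, suppose $f:\mathbb{Z}\to\Psi$ is natural and $\Psi$ is past-finite with only finitely many minimal elements, having least label $N$. Then $f(N-1)$ is non-minimal, so its finite non-empty past contains a minimal element of $\Psi$ whose label is strictly below $N$, a contradiction. Hence past-finiteness forces infinitely many minimal elements (case (ii)); the dual argument handles (iii); if neither past- nor future-finiteness holds, $\Psi$ is two-way infinite (case (i)). For the backward constructions, case (ii) follows by assigning non-positive labels $\{0,-1,-2,\ldots\}$ to an enumeration of the infinite set of minimal elements and adapting the inductive procedure of (a) for the rest on the positive labels, observing that any $x\prec y$ with $x$ minimal has the minimal (non-positive) label automatically below the (positive) label of $y$; case (iii) is dual. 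The main obstacle is case (i), where neither past nor future is finite: one must construct a bijection $f:\mathbb{Z}\to\Psi$ by a doubly infinite back-and-forth that simultaneously extends labels in both directions while respecting the order. This is the delicate step for which I would rely on the constructions in \cite{honan:2018,Gupta:2018}.
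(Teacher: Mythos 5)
The paper does not actually prove this lemma: it is stated with citations to \cite{Brightwell:2011,honan:2018,Gupta:2018} and used as a black box, so there is no in-paper argument to compare yours against. Judged on its own terms, your treatment of (a), of (b) by order reversal, and of the ``only if'' direction of (c) is correct. In particular, your inductive construction for (a) works because the set of labeled elements is a down-set after every completed batch, and your contradiction for the forward direction of (c) --- that under past-finiteness the non-empty past of the non-minimal element $f(N-1)$ contains a minimal element of $\Psi$ whose label undercuts the supposed least label $N$ --- is exactly right.

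There is, however, one step that fails as written: the construction for case (c)(ii). You propose to put \emph{all} minimal elements on the non-positive labels and to label ``the rest'' by the positive integers via the procedure of (a). But the set of non-minimal elements need not be infinite: for the infinite antichain (which the paper explicitly notes satisfies (c)(ii) and (c)(iii)) it is empty, and for an infinite antichain with a single extra element on top it is a single point, so there is nothing left to biject with $\{1,2,3,\dots\}$ and the map you build is not a bijection from $\mathbb{Z}$. The repair is small but necessary: split the infinite set of minimal elements into two infinite subsets $M_0$ and $M_1$, assign $M_0$ (in any order) to $\{0,-1,-2,\dots\}$, and apply (a) to the induced poset on $\Psi\setminus M_0$, which is past-finite and is infinite because it contains $M_1$. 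Naturalness across the split still holds because no element of $\Psi$ precedes a minimal element, so every relation crossing the split goes from a non-positive label to a positive one. The dual fix applies to (c)(iii). Deferring the genuinely delicate case (c)(i) --- the doubly infinite back-and-forth --- to the cited references is reasonable, since that is all the paper itself does for the entire lemma.
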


\noindent Note that cases $(c)(ii)$ and $(c)(iii)$ are each disjoint from $(c)(i)$ but not from each other, \textit{e.g.} the infinite antichain satisfies $(c)(ii)$ and $(c)(iii)$.

\vspace{2mm}\noindent  For any pair of integers $k\leq l$, let $[k,l]$ denote the set of integers $\{k,k+1,\dots,l-1,l\}$. Let $\Pi_n$ be a finite causet of cardinality $n$. 

\vspace{2mm}\noindent A \textbf{natural labeling} of $\Pi_n$ is a bijection $f:[k,k + n -1] \rightarrow \Pi$ that satisfies $f(i) \prec f(j)\implies i < j \ \ \forall \  i,j \in [k,k+n-1]$, where $k\in \mathbb{Z}$. 

\vspace{2mm}\noindent  A \textbf{finite labeled causet} is a causet with ground-set $[k,l]$,  where $k \le l$, whose order satisfies the condition: $x\prec y \implies x<y$, \textit{i.e.} it is a causet for which the identity map is a natural labeling (hence its name).

\vspace{2mm}\noindent An \textbf{infinite labeled causet} is a causet with ground-set $\mathbb{N}$ or $\mathbb{Z}^-$ or $\mathbb{Z}$ whose order satisfies the condition: $x\prec y \implies x<y$ (as in the finite case, it is a causet for which the identity map is a natural labeling).

\vspace{2mm}\noindent 
From now on we will denote labeled causets and their subcausets by capital Roman letters with a tilde, \textit{e.g.} $\lc{C}$. We often (but not always) use a subscript to denote the cardinality of a labeled causet, \textit{e.g.} ``$\lc{C}_n$ has cardinality $n$''. 

\vspace{2mm}\noindent 
Given some $n\in\mathbb{N}^+$, we denote the set of all labeled causets with cardinality $n$ by $\tilde{\Omega}(n)$.
Note that given a labeled causal set $\lc{C}_n \in \tilde{\Omega}(n)$ with ground set $[0, n-1]$, for each integer $k$ there is an isomorphic labeled causet with ground set $[k, n-1+k]$ that is gotten from $\lc{C}_n$ by adding $k$ to each of its elements. Therefore there are infinitely many labeled causets in $\tilde{\Omega}(n)$. This is not the case in 
previous works on past-finite causet growth models where the ground set of a finite labeled causet of cardinality $n$ is fixed to be $[0,n-1]$. 

\vspace{2mm}\noindent The set of all infinite labeled causets whose ground set is $\mathbb{N}$, $\mathbb{Z}^-$ or $\mathbb{Z}$ respectively is denoted by $\tilde{\Omega}_{\mathbb{N}},\tilde{\Omega}_{\mathbb{Z}^-}$ or $\tilde{\Omega}_{\mathbb{Z}}$, respectively.\footnote{ A note of caution: in previous works on past-finite causet growth models, the notation $\tilde{\Omega}_{\mathbb{N}}$ has been used for the set of all finite labeled causal sets.}

\vspace{2mm}\noindent 
The set of all infinite labeled causets is denoted by $\tilde{\Omega}\equiv \tilde{\Omega}_{\mathbb{N}}\sqcup\tilde{\Omega}_{\mathbb{Z}^-}\sqcup\tilde{\Omega}_{\mathbb{Z}}$.

\vspace{2mm}\noindent
A CSG model \cite{Rideout:1999ub} grows past-finite causal sets, \textit{i.e.} its sample space is $\lc{\Omega}_{\mathbb{N}}$. 

\subsection{Orders}\label{subsec_orders}

We write $\lc{C}\cong\lc{D}$ if  labeled causets $\lc{C}$ and $\lc{D}$ are equal up to an order-isomorphism.

\vspace{2mm}\noindent An \textbf{order}, $C$, is an order-isomorphism class of labeled causets. We denote orders by capital Roman letters without a tilde.

\vspace{2mm}\noindent Given an order $C$, its cardinality $|C|$ is defined to be the cardinality of a representative of $C$. Similarly, the width and height of an order are those of its representatives. An order is future-finite if its representatives are future-finite \textit{etc.} When we refer to  elements of $C$, we mean elements of a representative of $C$ and the meaning should be clear from the context as in for example: ``$C$ has 5 minimal elements.''

\vspace{2mm}\noindent An \textbf{$n$-order} is an order with cardinality $n$.

\vspace{2mm}\noindent For each $n\in\mathbb{N}$, $\Omega(n)$ denotes the set of $n$-orders. $\Omega(n)$ is a finite set. 

\vspace{2mm}\noindent $\Omega:= \lc{\Omega}/\cong$ is the set of infinite orders.

\vspace{2mm}\noindent $\Omega_{\mathbb{Z}},\Omega_{\mathbb{Z}^-}$ and $\Omega_{\mathbb{N}}$ are the subsets of $\Omega$ that have a representative labeled by $\mathbb{Z},\mathbb{Z}^-$ and $\mathbb{N}$, respectively.

\vspace{2mm}\noindent  Note that $\Omega= \Omega_{\mathbb{Z}}\cup\Omega_{\mathbb{Z}^-}\cup\Omega_{\mathbb{N}}$. By lemma  \ref{labeling_lemma}, the union is not disjoint. $\Omega_{\mathbb{Z}}\cap\Omega_{\mathbb{Z}^-}\cap\Omega_{\mathbb{N}}=\Omega_{\mathbb{Z}^-}\cap\Omega_{\mathbb{N}}$ is the set of past-and-future-finite orders that have infinitely many maximal elements and infinitely many minimal elements and is nonempty: the union of infinitely many disjoint 2-chains for example.  $\Omega_{\mathbb{Z}}\cap\Omega_{\mathbb{Z}^-}$ is the set of future-finite orders that have infinitely many maximal elements. $\Omega_{\mathbb{Z}}\cap\Omega_{\mathbb{N}}$ is the set of past-finite orders that have infinitely many minimal elements. 

\subsection{Convex-suborders}
Let $\Pi$ and $\Psi$ be causal sets.

\vspace{2mm}\noindent 
$\Pi$ is a \textbf{convex-subcauset} in $\Psi$ if $\Pi$ is finite and $\Pi\subseteq \Psi$ and, whenever $x,y\in \Pi$ and $x\prec z \prec y$ in $\Psi$, then $z\in\Pi$.

\vspace{2mm}\noindent We say that $\Psi$ contains a \textbf{copy} of $\Pi$ if there exists a convex-subcauset $\Pi'\subseteq \Psi$ that is order-isomorphic to $\Pi$. 

\vspace{2mm}\noindent 
Let $C$ and $D$ be orders with (arbitrary) representatives $\lc{C}$ and $\lc{D}$, respectively.

\vspace{2mm}\noindent We say that $C$ is a \textbf{convex-suborder} in $D$ if $\lc{D}$ contains a copy of $\lc{C}$. 
Note that this definition is independent of the representatives $\lc{C}$ and $\lc{D}$ because the definition of ``contains a copy of'' is less restrictive than ``contains as a subcauset''.   In that case we also say that $C$ is a convex-suborder in $\lc{D}$. If the cardinality of convex-suborder $C$ equals $n$ we say that $C$ is an \textbf{$n$-convex-suborder} in $D$ or in $\lc{D}$.

\vspace{2mm}\noindent We say that an order $C$ is a \textbf{convex-rogue} if there exists another order $D$ that is not isomorphic to $C$ and that has the same  convex-suborders as $C$. In that case we say that $C$ and $D$ are a \textbf{convex-rogue pair}.\footnote{This terminology follows that of \cite{Brightwell:2002vw} in which a pair of rogues are two past-finite, non-isomorphic orders with the same stems.}

Note that the convex-subcausets (convex-suborders) are ordered by inclusion. If $A$ is a convex-subcauset of $B$ and $B$ is a convex-subcauset of $C$, then $A$ is a convex-subcauset of $C$. 

\section{Sequential growth}\label{section_seq_growth}

The paradigm of growth dynamics is motivated by the heuristic concept of \textit{Becoming}: the discrete causal set spacetime comes into being \textit{ex nihilo} via an unceasing process of the birth of causal set elements. While the concept of Becoming could be regarded simply as a crutch in defining a model of random infinite causal sets and dispensed with once a measure has been defined on the full sigma algebra of events, Sorkin has proposed that growth is a physical process in which the birth of an element is the \textit{happening} of an event, while the element itself signifies that the event (of its birth) has \textit{already happened} \cite{Sorkin:2007qh, spacetimeatoms}. This viewpoint allows the passage of time to be manifested within physics as the growth of a causal set.

Perhaps the most intuitive notion of growth is that of ``sequential growth'' in which the causal set grows through a sequential accretion of elements, somewhat akin to a tree growing at the tips of its branches. A sequential growth process for causal sets is made up of stages, labeled by the natural numbers, a discrete parameter. Starting at stage 0, at each stage $n$ in the sequence a new element  is born. The new element is born with randomly chosen relations with the already existing elements according to a model-dependent probability distribution. So, at the end of  stage $n$, the growing, partial causet contains $n+1$ elements. In the limit $n\rightarrow \infty$, the process generates an infinite causal set.

The Classical Sequential Growth (CSG) models are the archetype of sequential growth models. First introduced in \cite{Rideout:1999ub}, the CSG models have proved to be a fruitful arena for studying causal set cosmology \cite{Martin:2000js,Brightwell:2002vw,Ash:2002un,Varadarajan:2005gg,Dowker:2017zqj} and for developing new dynamical frameworks \cite{Dowker:2019qiz,Surya:2020cfm,Zalel:2020oyf}. Though the CSG models themselves do not generate past-infinite causal sets,  they are a natural starting point for trying to construct dynamics for two-way infinite causal sets.

\subsection{Alternating growth}
As mentioned, the CSG models themselves do not generate past-infinite causal sets. This is not a probabilistic statement:  there are no past-infinite causets at all in the sample space for the process. Each CSG model satisfies a condition known as Internal Temporality which states that at each stage the new element cannot be born to the past of -- cannot precede in the causet order -- an existing element. Indeed, the first challenge in generalising the CSG models to the past-infinite case is generalising the condition of Internal Temporality. If we are to both generate past-infinite causal sets and keep the essence of sequential growth -- \textit{i.e.}, that starting from the empty set, new elements are born in a sequence of stages -- we must loosen the condition of Internal Temporality to allow elements to be born to the past of existing elements. This move breaks the compatibility between the label of the stage of the sequential growth process, the concept of the birth of the element as manifesting the physical happening of the event and the order of the resulting causet as being the physical order -- before and after --  in which the elements are born. Nevertheless, mathematically at least, there is a way to generalise the condition of Internal Temporality that keeps some of its power. 

In a CSG model,  $\mathbb{N}$ is the ground set of the growing causal set, and at the stage labeled $n$ the element $n$ is born. In this context, Internal Temporality is equivalent to the requirement that the growing causal set is naturally labeled by $\mathbb{N}$ -- \textit{i.e.} the sample space of the growth process is $\lc{\Omega}_{\mathbb{N}}$ and the growth process can be conceived of as a random walk up ``labeled poscau.''\footnote{``Poscau'' is short for the ``partial order of causal sets'', and ``labeled'' signifies that the causal sets in the order are labeled causets.}
\begin{definition}\label{def_lab_poscau}\par Labeled poscau is the partial order on the set of finite labeled causets whose ground set is $[0,n]$,  for all $n\in\mathbb{N}$, where $\lc{S}\prec \lc{R}$ if and only if $\lc{S}$ is a stem in $\lc{R}$.\footnote{We use the symbol $\prec$ to denote the relation for several different partial orders in this work. The meaning of $\prec$ in each case is to be inferred from the context.}
\end{definition}

\noindent Labeled poscau is a rooted directed tree and its first three levels  are shown in figure \ref{lposcau}.
\begin{figure}[htpb]
  \centering
	\includegraphics[width=0.6\textwidth]{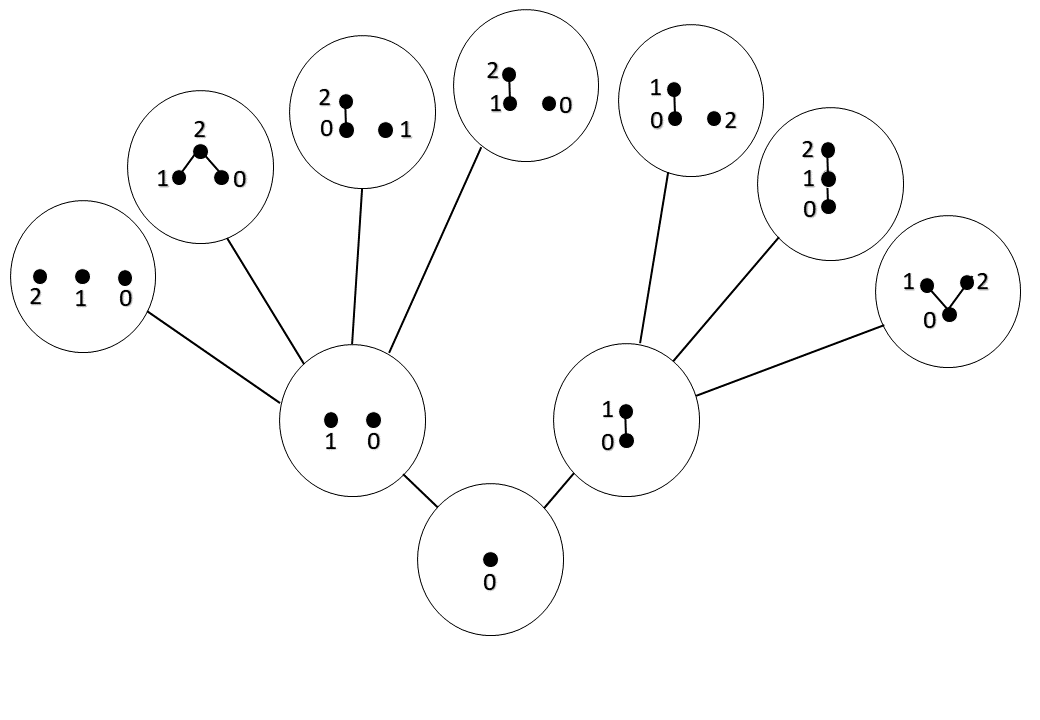}
	\caption{The first three levels of labeled poscau.}
	\label{lposcau}
\end{figure}

Reformulating Internal Temporality as a statement about natural labelings reveals a candidate generalisation of it to the two-way infinite case, namely that the infinite causal set that is grown has a natural labeling by $\mathbb{Z}$: it is an element of $\lc{\Omega}_{\mathbb{Z}}$~. Some freedom remains in how to translate this condition back into a statement about the sequence of the birth of the elements of the causet. 

For definiteness, in this work we fix the freedom thus:  let the positive and negative integers be born in an alternating sequence, $0,-1,1,-2,2...$, so that at stage $n$, if $n$ is even the element $\frac{n}{2}$ is born and if $n$ is odd the element $-\frac{n+1}{2}$ is born. We call a transition in which a positive element is born a ``forward transition''. Similarly, a ``backward transition'' is one in which a negative element is born, so a transition $\lc{C}_n\rightarrow \lc{C}_{n+1}$ is forward when $n$ is even and backward when $n$ is odd. Note that, in this framework of alternating growth, the natural number label of the stage is not equal to the element of the causet born at that stage (as it is in CSG models) though it is still the case that the label of the stage equals the cardinality of the partial causet at the beginning of the stage (as it is in CSG models). 

Internal Temporality in this context becomes the condition: positive elements cannot be born to the past of  elements born at previous stages, negative elements cannot be born to the future of elements born at previous stages. In particular, at no stage can an element be born between two elements that were born at previous stages. This implies that at each stage,  the finite partial causet is a convex-subcauset of the partial causet at the next stage, and thence of the infinite causet that is the union of all the partial causal sets at all the infinitely many stages. 

We dub the resulting dynamical framework ``alternating growth''. The alternating growth process can be represented as a random walk up ``alternating poscau'', a directed rooted tree whose nodes are finite labeled causets. More precisely,
\begin{definition}\label{def_alt_poscau}\par 

Alternating poscau is the partial order on the set of finite labeled causets whose ground set is $[-n,n]$ or $[-n,n+1]$ for all $n>0$, where $\lc{S}\prec \lc{R}$ if and only if $\lc{S}$ is a convex-subcauset in $\lc{R}$.
\end{definition}
The first three levels of alternating poscau are shown in figure \ref{alt_poscau}.\footnote{There is a bijection, $f$ from the set of nodes at level $k$ in labeled poscau to the set of nodes at level $k$ in alternating poscau where $f$ takes a labeled causet and maps each element $x$ to $x - \lfloor{k/2}\rfloor $. Note however that $f$ is not an isomorphism between labeled poscau and alternating poscau.} Note that the levels of alternating poscau are finite because of the restriction on the ground sets of the finite labeled causets to $[-n,n]$ or $[-n,n+1]$.

 There is a bijection from the set of infinite paths starting at the root in alternating poscau to $\lc{\Omega}_{\mathbb{Z}}$, where an infinite path $\lc{C}_1\prec\lc{C}_2\prec...$ maps to $\lc{C}=\bigcup_{n>0} \lc{C}_n$. The standard technology of stochastic processes and measure theory then provides the $\sigma$-algebra of measurable events  generated by the semi-ring of all cylinder sets, each associated with a node of alternating poscau: $cyl(\lc{C}_n)\subset \lc{\Omega}_{\mathbb{Z}}$ is the set of labeled causets on ground set $\mathbb{Z}$  that contain $\lc{C}_n$ as a convex-subcauset.  A random walk on alternating poscau specified in terms of transition probabilities corresponds to a unique measure on this measurable space and, vice versa, every measure on the $\sigma$-algebra 
 generated by the cylinder sets gives a unique collection of transition probabilities for every transition. 
 
By re-interpreting the Internal Temporality condition as above, we are thus able to modify the sequential growth paradigm to allow growth of two-way infinite causets.\footnote{One can consider growth models with different rules, leading to different trees: we refer the reader to \cite{Bruno:2018} for such variations, \textit{e.g.} sequential growth models in which the decision to make a forward or a backward transition at each stage is random.} Recall, however,  that by lemma \ref{labeling_lemma} the set of two-way infinite causets (case $(c)(i)$ in lemma \ref{labeling_lemma})  is a proper subset of the sample space ${\lc{\Omega}}_{\Z}$. It turns out that the set of two-way infinite causets is a measureable set and therefore it will be up to the dynamics (\textit{i.e.}, the specific random walk) whether the set of two-way infinite causets has measure one or not. Indeed, one can ask whether one can identify conditions on the transition probabilities that will imply that the causet will almost surely be two-way infinite. 

\begin{lemma}\label{lemma_two-way}  Let $W$ be the set of two-way infinite labeled causets. $W$ is a measureable set in an alternating growth dynamics, i.e. a random walk up alternating poscau. 

\end{lemma}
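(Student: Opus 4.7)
The plan is to exhibit $W$ as a countable combination of cylinder sets, thereby placing it in the $\sigma$-algebra generated by cylinders. The key auxiliary event is, for each pair of integers $m < n$, the event
\[
A_{m,n} \;:=\; \{\lc{D} \in \lc{\Omega}_{\mathbb{Z}} : m \prec n \text{ in } \lc{D}\}.
\]
Once I show each $A_{m,n}$ is measurable, past- and future-finiteness can be expressed as countable combinations of the $A_{m,n}$ and their complements, and $W$ falls out immediately.

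The main step is to verify that $A_{m,n}$ is measurable. The key observation I would use is that since the identity on $\mathbb{Z}$ is a natural labeling of any $\lc{D} \in \lc{\Omega}_{\mathbb{Z}}$, the relation $x \prec z \prec y$ in $\lc{D}$ forces $x < z < y$. Hence if $x, y \in [-j,j]$ and $x \prec z \prec y$ in $\lc{D}$, then $-j \leq x < z < y \leq j$, so $z \in [-j,j]$. This shows that the restriction $\lc{D}|_{[-j,j]}$ is automatically a convex-subcauset of $\lc{D}$ and is itself a node of alternating poscau. Consequently $m \prec n$ in $\lc{D}$ iff $m \prec n$ in $\lc{D}|_{[-j,j]}$ for any $j \geq \max(|m|,|n|)$, giving
\[
A_{m,n} \;=\; \bigcup \bigl\{\, cyl(\lc{C}) : \lc{C} \text{ is a node of alternating poscau in which } m \prec n \,\bigr\}.
\]
Since each level of alternating poscau has a finite ground set and hence only finitely many nodes, alternating poscau has countably many nodes in total, and $A_{m,n}$ is a countable union of cylinder sets, hence measurable.

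With the $A_{m,n}$'s in hand I would then express past-finiteness as
\[
\{\lc{D} \text{ is past-finite}\} \;=\; \bigcap_{n \in \mathbb{Z}} \bigcup_{L \in \mathbb{N}} \bigcap_{m < -L} A_{m,n}^{\,c},
\]
using the fact that $past(n) \subseteq (-\infty,n-1]$, so $past(n)$ is finite iff it is bounded below. By a symmetric argument $\{\lc{D} \text{ is future-finite}\}$ is measurable, and by Lemma \ref{labeling_lemma} every element of $\lc{\Omega}_{\mathbb{Z}}$ is two-way infinite, past-finite, or future-finite. Thus $W$ is the complement in $\lc{\Omega}_{\mathbb{Z}}$ of $\{\lc{D} \text{ is past-finite}\} \cup \{\lc{D} \text{ is future-finite}\}$, and is therefore measurable. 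The only real substance is the convex-subcauset observation in the second paragraph; once it is in place, the rest is routine $\sigma$-algebra bookkeeping.
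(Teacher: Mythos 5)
Your proof is correct, and it reaches the conclusion by a genuinely different decomposition than the paper's. The paper works ``positively'': it writes $W=W^+\cap W^-$, where $W^+$ is the event that \emph{some} element has an infinite future, and captures $W^+$ as $\bigcup_k\bigcap_m\bigcup_n \Gamma_{k,n,m}$ with $\Gamma_{k,n,m}$ the union of cylinders over nodes on $[-n,n]$ having $m$ elements above the element $k$ --- i.e.\ its atomic events are \emph{counting} events. You instead isolate the single-relation events $A_{m,n}=\{\lc{D}: m\prec n \text{ in } \lc{D}\}$, show each is a countable union of cylinders, and then express past-/future-finiteness ``negatively'' as Boolean combinations of the $A_{m,n}$, obtaining $W$ as a complement; your appeal to Lemma \ref{labeling_lemma} at the end is unnecessary (two-way infinite is \emph{by definition} the complement of past-finite union future-finite), but harmless. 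Both arguments hinge on the same key observation, which you state explicitly and the paper uses implicitly: because the identity is a natural labeling, the restriction $\lc{D}|_{[-j,j]}$ is automatically a convex-subcauset of $\lc{D}$ and a node of alternating poscau, so membership in a cylinder can be read off from a finite restriction. What your route buys is modularity --- the measurability of each relation event $A_{m,n}$ is a reusable fact from which many other covariantly-meaningless but measurable events can be built --- at the cost of a slightly longer chain of countable operations; the paper's route is more direct for this particular lemma.
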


\begin{proof}
$W = W^+ \cap W^-$ where $W^+$ ($W^-$) is the set of causets in $\lc{\Omega}_{\mathbb{Z}}$ that have an element  with an infinite future (past). We will show that $W^+$ is measureable and the proof for  $W^-$  is similar. 

For each integer $k \in \mathbb{Z} $ let  $\Gamma_k$ be the set of  causets in $\lc{\Omega}_{\mathbb{Z}}$ such that the element $k$  has an infinite future. $W^+$ is the union of all the $\Gamma_k$. 

For each $k\in \mathbb{Z}$, $m, n \in \mathbb{N}$ s.t. $m>0$ and $n>|k|+m$ let  $\lc{\Omega}_{k, n, m}$ be the set of finite labeled causets on the ground set $[-n,n]$ such that there are $m$ elements above element $k$. Take the union over the set $\lc{\Omega}_{k, n, m}$ of all the associated cylinder sets and call that union $\Gamma_{k, n, m}$:
\begin{align}
\Gamma_{k,n,m} : =   \bigcup_{ \lc{C} \in \lc{\Omega}_{k, n, m}}  cyl(\lc{C})\,.
\end{align}

 Then 
\begin{align}
\Gamma_k =  \bigcap_{m=1}^\infty\bigcup_{n=| k |+m +1}^\infty  \Gamma_{k, n, m}\,.
\end{align}

\end{proof}

\begin{figure}[h]
\centering
    \includegraphics[scale=0.37]{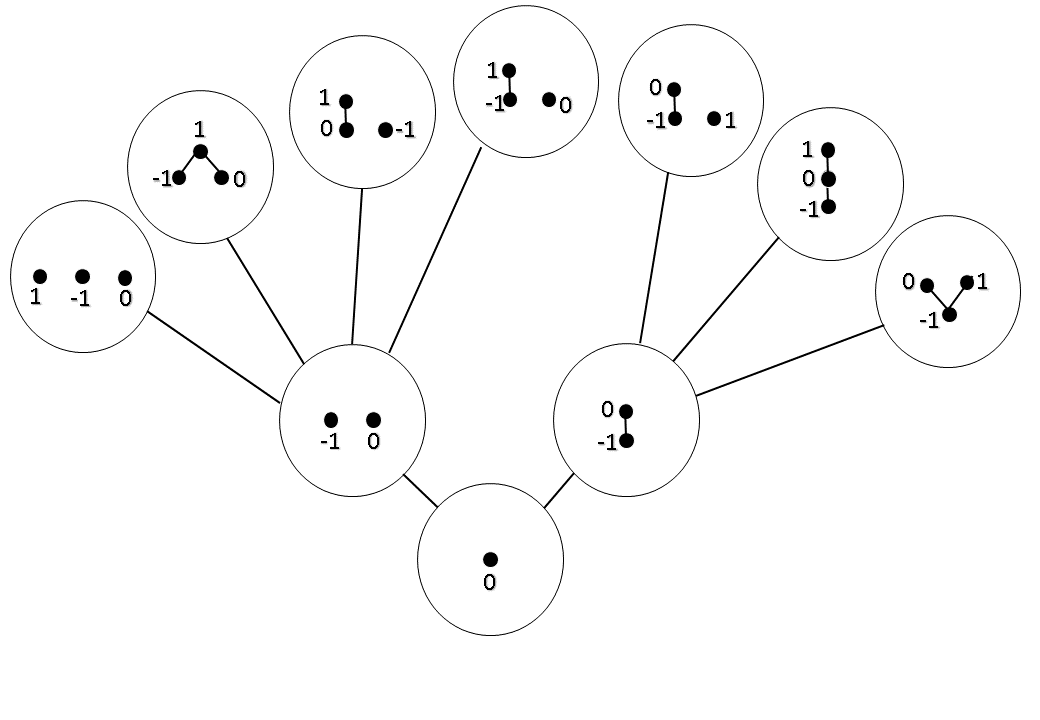}
\caption{The first three levels of alternating poscau.}
\label{alt_poscau}
\end{figure}
\subsection{Alternating growth dynamics}
While any random walk on alternating poscau gives rise to a well-defined measure space with sample space $\lc{\Omega}_{\mathbb{Z}}$, not every such walk will be interesting physically and it remains for us to identify classes of interest. 

This is completely analogous to the past-finite case, where the CSG models were identified as a physically-meaningful subclass of the random walks on labeled poscau. Indeed, the CSG models are exactly the random walks on labeled poscau that satisfy the physically motivated conditions of Discrete General Covariance, and Bell Causality, to be discussed further below. These conditions were solved and the transition probabilities in a CSG model proved to take the following form:
\begin{align}\label{transprob}
\mathbb{P}(\tilde{C}_n \rightarrow \tilde{C}_{n+1}) &= \frac{\lambda(\varpi, m)}{\lambda(n, 0)}\,,
\end{align}
where $\mathbb{P}(\lc{C}_n\rightarrow \lc{C}_{n+1})$ is the probability of transition from  $\lc{C}_n$ to one of its children, $\lc{C}_{n+1}$; $\varpi$ and $m$ are the number of new relations and new links, respectively, formed with the newborn element  at stage $n$; and the function $\lambda$ is given by,
\begin{equation} \label{weight_sec_1}
\lambda(k, p) := \sum_{i = 0}^{k - p} \binom{k-p}{i} t_{p+i},
\end{equation}
where $\{t_0,t_1,t_2,...\}$ is an infinite set of real non-negative parameters or ``couplings'' (with $t_0>0$) that specify the particular CSG model. As the transition probabilities are ratios of linear combinations of the $t_n$'s, there is a (projective) equivalence relation on the sets $\{t_n\}$, which freedom can be fixed by setting $t_0 = 1$. 

Note that above we referred to the ``new relations and new links [...] formed with the newborn element  at stage $n$'' without mentioning that the newborn element is $n$ in a CSG model and without mentioning that in any
new relation the newborn element must succeed (be above) the element born at a previous stage.  We have made these omissions because, by doing so,  we can adopt 
equation \eqref{transprob} and its succeeding text definition \textit{as is}  for the definition of an alternating growth model simply by letting $\lc{C}_n$ and $\lc{C}_{n+1}$ denote nodes in alternating poscau such that  $\lc{C}_n\rightarrow \lc{C}_{n+1}$ is a possible transition in alternating poscau.  Now, however, when the stage label $n$ is odd the transition is a backward transition and in any new relation the newborn element must precede (be below) the already existing element. We call this new family of models the family of Alternating CSG dynamics \cite{honan:2018,Gupta:2018}. Given a CSG model with parameters $\{t_0,t_1,t_2,...\}$, its alternating counterpart is the Alternating CSG model with the same set of parameters. 

Do the Alternating CSG dynamics retain any of the features that make CSG models physically interesting? For example, do the Alternating CSG models satisfy any sort of causality condition? 
In the remainder of this section we identify the form that four key attributes---covariance, causality, causal immortality and meaningful observables---might take in the alternating growth framework and discuss whether the Alternating CSG models possess these attributes.

Before turning to the question of physical conditions, we introduce the example of the most well-studied family of CSG models, Transitive Percolation---a 1-parameter family of CSG models given by $t_k=t^k$ where $t$ is a positive real constant \cite{Alon:1994, Rideout:1999ub}. Its alternating growth counterpart, Alternating Transitive Percolation, is defined by the same couplings: $t_k=t^k$ for some $t>0$. For Transitive Percolation, the transition probability given in equation \eqref{transprob} takes the simple form,
\begin{equation}\label{atp_020321}\mathbb{P}(\tilde{C}_n \rightarrow \tilde{C}_{n+1}) =p^mq^{n-\varpi},\end{equation}
where  $p=\frac{t}{1+t}$ and $q=1-p$, so that $p \neq 1$ and $p \neq 0$ (and as before, $\varpi$ and $m$ are the numbers of new relations and new links, respectively, formed with the element that is born at stage $n$). The interpretation of equation \eqref{atp_020321} is that the new element born in the transition forms a relation with each existing element with probability $p$ independently and then the transitive closure is taken to obtain $\lc{C}_{n+1}$.
With this interpretation, $t_k$ is the relative probability that the new element forms exactly $k$ relations (before taking the transitive closure). Equation \eqref{atp_020321} and the functional form of the couplings $t_k=t^k$ reflect the ``local'' nature of Transitive Percolation. All other CSG models can be seen as ``non-local'' generalisations of Transitive Percolation in which the probability of the newborn forming a relation with a given element depends on whether or not relations are formed with the other existing elements. 

Equation \eqref{atp_020321} and hence this form of ``locality'' is retained by Alternating Transitive Percolation, so that there are close similarities between the two models. For example, let $\lc{C}_n$ and $\lc{D}_n$ be nodes in labeled poscau and alternating poscau respectively, and let $\lc{C}_n\cong\lc{D}_n$. Then \cite{honan:2018,Gupta:2018}, 
\begin{lemma} 
The probability of reaching $\lc{C}_n$ in a particular CSG dynamics is equal to the probability of reaching $\lc{D}_n$ in the alternating growth counterpart of that CSG dynamics
 if and only if the CSG dynamics in question is Transitive Percolation. 
\end{lemma}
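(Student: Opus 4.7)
I will prove the two directions separately.

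For $(\Leftarrow)$: for Transitive Percolation the single-step transition probability takes the product form $p^m q^{n-\varpi}$, which depends only on the local quantities $m, \varpi$ and on the cardinality $n$ of the pre-transition causet, not on which specific label the newborn carries. Multiplying over all stages from the empty causet to any cardinality-$n$ labeled causet then collapses the product to $p^M q^{\binom{n}{2}-R}$, where $M$ and $R$ are the total numbers of links and relations of the resulting causet, and $\binom{n}{2}=0+1+\cdots+(n-1)$ comes from summing the sizes of the intermediate causets. Since $M, R$ and $n$ are isomorphism invariants, two isomorphic labeled causets---one reached by CSG, the other by its alternating counterpart---receive exactly the same probability.

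For $(\Rightarrow)$: I would probe the equal-probabilities hypothesis with the generalised diamond $D_k$, i.e.\ the order with one minimum, one maximum, and a $k$-element antichain of middle elements each covering the minimum and covered by the maximum. In labeled poscau the unique natural labeling on $[0,k+1]$ forces the minimum to be born first, followed by the $k$ middles each born above the minimum alone (contributing $\lambda(1,1)=t_1$ apiece), and finally the maximum (contributing $\lambda(k+1,k)=t_k+t_{k+1}$). In alternating poscau the middles are instead born first as a trivial antichain (each contributing $\lambda(0,0)=1$); whichever extremal element is born at stage $k$ contributes $\lambda(k,k)=t_k$, and the other extremal born at stage $k+1$ contributes $\lambda(k+1,k)=t_k+t_{k+1}$. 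The denominators $\prod_{j=1}^{k+1}\lambda(j,0)$ agree on both sides, so equating the two probabilities yields $(t_1^k-t_k)(t_k+t_{k+1})=0$ for every $k\ge 2$. Whenever $t_k+t_{k+1}>0$ this forces $t_k=t_1^k$; setting $t:=t_1$ identifies the model as Transitive Percolation.

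The main obstacle is the degenerate case in which $t_k+t_{k+1}=0$ makes the $D_k$ identity trivially $0=0$. To close this gap I would bring in auxiliary test causets such as the four-element N-poset (two minima $a,b$, two maxima $c,d$, with $a\prec c$, $a\prec d$, $b\prec d$). Its CSG probability is proportional to $\lambda(2,2)=t_2$ arising from the final-stage transition, whereas a suitable alternating labeling---one that places $a$ and $b$ with $a$ larger than $b$ among the negative labels and $c,d$ with $d$ smaller than $c$ among the non-negative labels---makes every newborn at every stage form exactly one link, giving a probability proportional to $t_1^3$. Equating forces $t_2=t_1^2$ whenever $t_1>0$, independently of whether $t_3$ vanishes. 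Analogous ``reshuffled'' alt labelings of larger test orders yield the remaining identities $t_k=t_1^k$, and the corner case $t_1=0$ is handled by noting that any model with $t_1=0$ and some $t_k>0$ already fails on the N (or on the diamond) because those causets then acquire nonzero alt probability but zero CSG probability. This reduces the model to Transitive Percolation with $t=0$, completing the forward direction.
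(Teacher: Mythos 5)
The paper does not actually prove this lemma---it is imported from \cite{honan:2018,Gupta:2018} with no in-text argument---so there is nothing internal to compare your proof against; I assess it on its own terms. Your $(\Leftarrow)$ direction is complete and is the same telescoping computation the paper uses in the proof of Claim 3.4: the TP/ATP transition probability $p^{m}q^{n-\varpi}$ multiplies along the unique path to any node, in either tree, to give $p^{L}q^{\binom{n}{2}-R}$ with $L$ and $R$ isomorphism invariants. Your $(\Rightarrow)$ strategy of probing the hypothesis with matched labelings of test orders is sound, and both explicit computations check out: the generalised diamond gives $(t_1^k-t_k)(t_k+t_{k+1})=0$ and the N-poset gives $t_1t_2=t_1^3$.

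The one soft spot is the sentence ``analogous reshuffled alt labelings of larger test orders yield the remaining identities $t_k=t_1^k$.'' That is precisely the step needed to kill the degenerate case $t_k=t_{k+1}=0$ with $t_1>0$, and you assert it without exhibiting the test orders, so as written the argument is complete only for $k=2$ and for those $k$ with $t_k+t_{k+1}>0$. The gap is closable along exactly your lines: let $F_k$ be the order consisting of a $k$-antichain of minimal elements all covered by one maximal element $d$, padded with a single disconnected extra element whenever $k+1$ is odd so that the total cardinality is even. In labeled poscau $d$ must carry the top label and is born after all $k$ minima, contributing $\lambda(k,k)=t_k$; in alternating poscau $d$ can carry the largest label, which is born at the \emph{penultimate} stage when only $k-1$ minima exist (contributing $\lambda(k-1,k-1)=t_{k-1}$), the last minimum then being born backward below $d$ alone (contributing $t_1$), and every other birth contributing $t_0=1$. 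The denominators agree, so equating gives the clean recursion $t_k=t_1t_{k-1}$ for every $k\ge 2$ with no spurious $(t_k+t_{k+1})$ prefactor, whence $t_k=t_1^k$ by induction; this single family in fact makes your diamond computation redundant. With that construction supplied your proof is complete, up to the boundary convention you already flag of whether $t_1=0$ (which then forces $t_k=0$ for all $k\ge 1$) is admitted as Transitive Percolation with $t=0$, a case the paper's definition of TP ($t$ a positive constant) formally excludes.
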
 
Thus, at any finite stage of growth, Transitive Percolation (TP) and Alternating Transitive Percolation (ATP) cannot be distinguished. However the processes are different when run to infinity. For example, in TP,  the past finite causet grown almost surely  has infinitely many posts i.e. infinitely many elements $\{ k_1, k_2, \dots\}$ such that $0\le k_1 < k_2 < k_3 \dots $ and every element of the causet is related to all the $k_i$. In ATP, almost surely a two-way infinite causet is grown in which there are again infinitely many posts in the past and in the future: elements $\{ \dots k_{-2}, k_{-1}, k_{0}, k_{1}, k_{2} \dots\} $ such that $\dots k_{-2} < k_{-1} <k_0 < k_1 < k_2  \dots $ and such that every element of the causet is related to all the $k_i$. 
TP realises the heuristic of a bouncing universe with a beginning and ATP realises the heuristic of a bouncing universe with no beginning. 

\paragraph{Covariance:} It is a tenet of causal set theory that the atoms of spacetime have no structure; it is of no physical relevance what mathematical objects the elements of a causal set are.\footnote{Our choice of labeled causets -- with their ground sets of integers -- for our world of discourse in this paper  is purely for convenience.} Only the cardinality of the causet and the order relation are physical. This implies that the mathematical identity of and labels of the causet elements are not physical and one can consider this an analogue of the ``coordinate invariance'' or ``general covariance'' of continuum General Relativity. 

In a CSG model of past-finite sequential growth, this label invariance  is manifested thus: given a pair of order-isomorphic finite labeled causets, $\lc{C}_n$ and $\lc{C}'_n$, which are nodes in labeled poscau, the probability of reaching  $\lc{C}_n$ is equal to the probability of reaching $\lc{C}'_n$, that is,
\begin{equation}\label{dgc_sec1}
\lc{C}_n\cong\lc{C}'_n\implies\mathbb{P}(\lc{C}_n)=\mathbb{P}(\lc{C}'_n).
\end{equation}
Condition \eqref{dgc_sec1} is known as Discrete General Covariance (DGC) and it can be generalised to pertain to the alternating sequential growth framework simply by letting $\lc{C}_n$ and $\lc{C}'_n$ in equation (\ref{dgc_sec1}) denote isomorphic nodes in alternating poscau.

Every CSG model satisfies the DGC condition. In contrast, the only Alternating CSG dynamics which satisfies discrete general covariance is Alternating Transitive Percolation:
\begin{claim}\label{alt_tp_cov} An Alternating CSG model satisfies the discrete general covariance condition if and only if it is an Alternating Transitive Percolation model.
\end{claim}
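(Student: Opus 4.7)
My plan is to prove the biconditional by establishing the two directions separately.

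For the direction ``ATP $\Rightarrow$ DGC'', I will use the explicit form of the ATP transition probability given in equation \eqref{atp_020321}: $\mathbb{P}(\lc{C}_n \to \lc{C}_{n+1}) = p^m q^{n-\varpi}$. Since alternating poscau is a rooted tree, the probability of reaching any node $\lc{C}_N$ is the product of the transition probabilities along the unique path from the root to $\lc{C}_N$. The exponent of $p$ in this product, $\sum_i m_i$, telescopes to the total number of links in $\lc{C}_N$, since each link is counted exactly once---at the stage when the later of its two endpoints is born. The exponent of $q$, $\sum_i (n_i - \varpi_i)$, similarly equals the total number of incomparable pairs in $\lc{C}_N$. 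Both quantities are order-isomorphism invariants, so $\mathbb{P}(\lc{C}_N)$ depends only on the isomorphism class of $\lc{C}_N$; this is DGC.

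For the converse ``DGC $\Rightarrow$ ATP'', I will derive, for each $n \geq 2$, a recursion for $t_n$ in terms of lower-index couplings which, together with the projective gauge $t_0 = 1$, yields $t_n = t_1^n$ by induction. The recursion comes from an $(n+2)$-element isomorphism class $\Lambda_n^{*}$ consisting of a ``claw'' (one minimum with $n$ pairwise incomparable children) plus an isolated element, viewed as a node of alternating poscau at stage $n+1$. In labeling (A), the claw's minimum is assigned the smallest (most negative) label, so that its (backward) birth creates $n$ simultaneous links and contributes a factor $\lambda(n,n) = t_n$ to the numerator of the probability. In labeling (B), the minimum is assigned the second-smallest label and the isolated element the smallest, so that at the stage of the minimum's birth only $k < n$ of the children have already been born (contributing $\lambda(k,k) = t_k$), each subsequent child's birth contributes $\lambda(1,1) = t_1$, and the isolated element's birth contributes $\lambda(0,0) = t_0$. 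Equating the two products and cancelling the common denominator $\prod_j \lambda(j,0)$ together with the common $t_0$ factors from stages with no new relations yields a recursion of the form $t_0^a t_n = t_1^b t_{n-c}$ which, by induction with base case $t_0 t_2 = t_1^2$, gives $t_n = t_1^n/t_0^{n-1}$.

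The main obstacle I anticipate is the bookkeeping around the alternating structure. For a backward birth, $\varpi$ and $m$ count descendants and links \emph{above} the new element rather than below; and the ground set of an alternating poscau node is constrained to be $[-k,k]$ or $[-k,k+1]$, which fixes, via the parity of $n+1$, which element is born at which stage and hence which labelings of $\Lambda_n^{*}$ are admissible. Depending on the parity of $n$ one may need to use the dual ``cup'' order $V_n$ (one maximum with $n$ pairwise incomparable parents) in place of $\Lambda_n$, but the computation is symmetric and yields the same recursion. For completeness one should also check that the remaining labelings of $\Lambda_n^{*}$ (and of its dual) produce probabilities consistent with the derived recursion, so that no additional independent conditions on the couplings are imposed.
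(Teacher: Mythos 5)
Your proposal is correct and follows essentially the same route as the paper: the forward direction is the same telescoping-product argument (the paper writes $\mathbb{P}(\lc{C}_n)=p^{L}q^{\binom{n}{2}-R}$ with $L$ the number of links and $R$ the number of relations), and the converse extracts the constraint on the couplings by equating the probabilities of two natural labelings of a claw-plus-antichain order, one in which the apex is born in a late backward transition (contributing $t_n$) and one in which the children accrete singly (contributing powers of $t_1$). The only substantive difference is that the paper's test order is the $(2n+1)$-order consisting of a $2n$-antichain of which $n$ elements share a common ancestor, with the second labeling placing the apex at element $0$ so that $t_0^{2n-1}t_n=t_1^{n}t_0^{n}$ gives $t_n=t_1^{n}/t_0^{\,n-1}$ in one step, avoiding your recursion, the parity case split, and the (in any case redundant, given the forward direction) consistency check on the remaining labelings.
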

\begin{proof}

That Alternating Transitive Percolation satisfies DGC follows from equation \eqref{atp_020321} since it implies that the probability of reaching some $\lc{C}_n$ in alternating poscau is  $\mathbb{P}(\lc{C}_n)=p^Lq^{\binom{n}{2}-R}$, where $L$ and $R$ are the number of links and relations in $\lc{C}_n$,  respectively. These numbers $L$ and $R$ depend only on the order-isomorphism class of $\lc{C}_n$.

Now consider an Alternating CSG model defined by parameters $\{t_n\}$. 
Consider, for $n>0$,  the $(2n+1)$-order $C$  that contains a $(2n)$-antichain of which $n$ elements have a common ancestor, as shown in figure~\ref{covariance_proof_fig}.

\begin{figure}[t]
  \centering
	\includegraphics[width=0.6\textwidth]{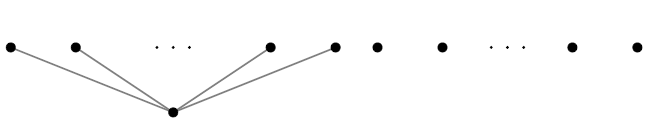}
	\caption{The $(2n+1)$-order $C$, that contains a $(2n)$-antichain of which $n$ elements have a common ancestor.}
	\label{covariance_proof_fig}
\end{figure}

\noindent Let $\lc{C}$ denote the representative of $C$ that is grown in the alternating framework in the following way: the element 0 and the elements born in the first $2n-2$ stages form an antichain, the element born at stage $2n-1$ is born to the past of $n$ of the existing elements, and the element born at stage $2n$ is unrelated to all existing elements. The probability of growing $\lc{C}$ in an Alternating CSG dynamics is $\mathbb{P}(\lc{C}) = t_0^{2n-2} t_n t_0 = t_0^{2n-1}t_n.$ 
Let $\lc{C}'$ denote another representative of $C$ that is grown in the alternating framework in the following way: the elements born in forward transitions are all born to the future of the element $0$, and the elements born in backward transitions are all born unrelated to all existing elements. The probability of growing $\lc{C}'$ in an Alternating CSG dynamics is $\mathbb{P}(\lc{C}') = t_1^nt_0^n$. If the alternating CSG model is covariant then $\mathbb{P}(\lc{C})=\mathbb{P}(\lc{C}')$, which implies that $\frac{t_1^n}{t_n}=t_0^{n-1}$. This is Alternating Transitive Percolation and can be cast into the form $t_n=t^n$ by setting $t_0=1$.

\end{proof} 

\paragraph{Causality:} Within the past-finite sequential growth framework, a dynamics is causal if it satisfies the ``Bell causality'' condition of \cite{Rideout:1999ub} which adapts the ``local causality'' condition of Bell's theorem to a causal structure that is discrete and dynamical. The Bell causality condition takes the form of an equality between ratios of transition probabilities,
\begin{equation}\label{bell_sec_1}
\frac{\mathbb{P}(\lc{C}_n\rightarrow \lc{C}_{n+1})}{\mathbb{P}(\lc{C}_n\rightarrow \lc{C}'_{n+1})}=\frac{\mathbb{P}(\lc{B}_l\rightarrow \lc{B}_{l+1})}{\mathbb{P}(\lc{B}_l\rightarrow \lc{B}'_{l+1})},
\end{equation}
where $\lc{B}_{l+1}$, $\lc{B}'_{l+1}$ and $\lc{B}_l$ are obtained from $\lc{C}_{n+1}$, $\lc{C}'_{n+1}$ and $\lc{C}_n$, respectively, by deleting one or more spectators\footnote{A spectator is an element that is spacelike to the newborn element in both transitions  $\lc{C}_n\rightarrow \lc{C}_{n+1}$ and $\lc{C}_n\rightarrow \lc{C}'_{n+1}$.} and then relabeling the remaining elements consistently. One concrete way to do this relabeling after deletion of spectators from $\lc{C}_n$ is to shift all the labels down, filling in the gaps without changing the total order, as necessary until the ground set is $[0,l-1]$: this is then $\lc{B}_l$. An example is shown in figure \ref{BellCauFig}.  For a model that satisfies DGC,  the algorithm for consistent relabeling after deletion of spectators plays no real role because the transition probabilities do not depend on the labeling, only the order-isomorphism class of the causets.  So we can say that (\ref{bell_sec_1}) holds for all relabelings, but is only one independent condition when the dynamics satisfies DGC.

What form can the Bell causality condition take within the alternating growth framework? While at first glance it may seem that equation \eqref{bell_sec_1} can be adapted to the alternating framework simply by letting $\lc{B}_{l+1}$, $\lc{B}'_{l+1}$, $\lc{B}_l$, $\lc{C}_{n+1}$, $\lc{C}'_{n+1}$ and $\lc{C}_n$ denote nodes in alternating poscau, this is not so. To see this, let $\lc{C}_n$ be a node in alternating poscau, and let $\lc{C}_{n+1}$ and $\lc{C}'_{n+1}$ denote two of its children. Now, construct $\lc{B}_l$  from $\lc{C}_n$ by removing the spectators and relabeling. Next, remove the spectators from $\lc{C}_{n+1}$ and relabel---this is where the problem arises since there may be no relabeling that produces a child of $\lc{B}_{l}$. In particular, this failure occurs whenever the number of spectators is odd because in that case if $\lc{C}_n\rightarrow \lc{C}_{n+1}$ is a forward transition
then $\lc{B}_l\rightarrow \lc{B}_{l+1}$ must be a backward transition, which leads to a contradiction. An example is shown in figure \ref{alternating_causality}. It is in these cases that the generalisation of equation \eqref{bell_sec_1} to the alternating dynamics becomes ill-defined. Instead, we will use a weakened causality condition (in similarity to the weakened causality conditions of \cite{Varadarajan:2005gg,Dowker:2005gj}) that states that an alternating dynamics is causal if equation \eqref{bell_sec_1} is satisfied whenever there is a relabeling such that the condition is well-defined. 

Having arrived at a proposed Bell causality condition for the alternating framework, we can ask whether the Alternating CSG dynamics satisfy it, beginning with Alternating Transitive Percolation. Since ATP is covariant (as we showed in claim \ref{alt_tp_cov}), the relabeling issue in the Bell causality condition is moot and we can use equation \eqref{atp_020321} to verify that equality \eqref{bell_sec_1} is satisfied,
\begin{equation}\begin{split}
\frac{\mathbb{P}(\lc{C}_n\rightarrow \lc{C}_{n+1})}{\mathbb{P}(\lc{C}_n\rightarrow \lc{C}'_{n+1})}=\frac{p^mq^{n-\varpi}}{p^{m'}q^{n-\varpi '}}=\frac{p^mq^{l-\varpi}}{p^{m'}q^{l-\varpi '}}=\frac{\mathbb{P}(\lc{B}_l\rightarrow \lc{B}_{l+1})}{\mathbb{P}(\lc{B}_l\rightarrow \lc{B}'_{l+1})},
\end{split}
\end{equation}
where $\varpi'$ and $m'$ denote the number of relations and links, respectively, formed by the element that is born at stage $n$ in the transition $\tilde{C}_n \rightarrow \tilde{C}'_{n+1}$. In this sense, the Alternating Transitive Percolation models are ``Bell causal''. Since the remaining Alternating CSG dynamics are not covariant, to ascertain whether they are causal either requires specifying a canonical method of relabeling by which $\tilde{B}_{l+1}$ should be obtained from $\tilde{C}_{n+1}$ \textit{etc.}, which renders the Bell causality condition itself label-dependent and hence not covariant, or the condition \eqref{bell_sec_1} should be imposed for each consistent relabeling that exists. 


Having discussed formally adapting equality \eqref{bell_sec_1} to the alternating growth framework, we turn to  the question of the physical interpretation of this proposed new Bell causality condition which  is far from clear. The ``local causality'' condition in Bell's theorem captures the heuristic that the outcome of a given event can only be influenced by the events inside its past lightcone. In this spirit, within the framework of past-finite growth, the ``Bell causality'' condition (equation \eqref{bell_sec_1}) states that at each stage of the growth process, the probability for each transition depends only on the past of the new-born element. But this interpretation is obliterated in the alternating growth framework. In a forward transition, the transition probability depends only on the past of the new-born element---but not on its entire past, since some of it has not yet been determined. The situation is even worse in the backward transitions where the transition probabilities depend on the future of the new-born element. One resolution is to require that equality \eqref{bell_sec_1} holds only for the forward transitions (\textit{i.e.}, when $n$ is even), leaving the backward transitions unconstrained by causality. Or it may be that we need an altogether new way of thinking about causality in the alternating framework, if sense can be made of it at all. 

\begin{figure}[h]
  \centering
	\includegraphics[width=0.6\textwidth]{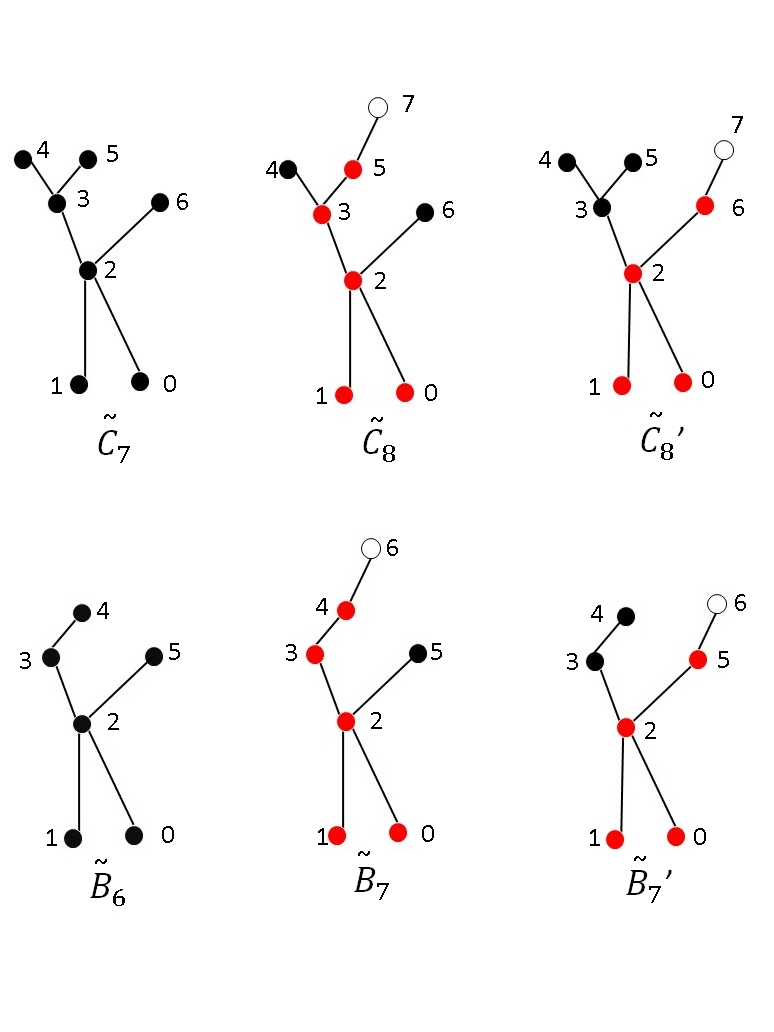}
	\caption{An illustration of the Bell Causality condition in past-finite sequential growth. The parent $\lc{C}_7$ and two of its children are shown on the top line. From these, the parent $\lc{B}_6$ and two of its children can be obtained by removing the element 4 (\textit{i.e.} the spectator) and relabeling. The new-born element in each child is shown in white. The past of the new-born element in each transition is shown in red.}
	\label{BellCauFig}
\end{figure}
\begin{figure}[h]
  \centering
	\includegraphics[width=0.5\textwidth]{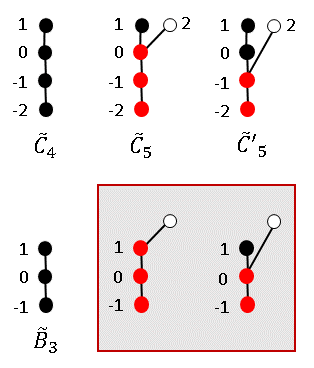}
	\caption{The Bell Causality condition is not always well-defined in the alternating growth framework. The parent $\lc{C}_4$ and two of its children are shown on the top line. The new-born element in each child is shown in white. The past of the new-born element in each transition is shown in red. $\lc{B}_3$ is constructed from $\lc{C}_4$ by removing the element 1 (\textit{i.e.} the spectator) and relabeling. Removing the spectator from $\lc{C}_5$ and $\lc{C}'_5$ results in the causets shown in the box, but there is no relabeling of these that corresponds to children of $\lc{B}_3$.}
	\label{alternating_causality}
\end{figure}

\paragraph{Causal immortality:} In past-finite sequential growth, the sample space is the space of all infinite past-finite causets, $\lc{\Omega}_{\mathbb{N}}$. This space contains a variety of cosmologies: some are future-infinite and some are future-finite, some contain infinite antichains and some do not. But in the CSG models, only a subset of all these potential configurations can be realised because the CSG models generate, with probability one, causets with no maximal elements \cite{Brightwell:2002vw}. We say that the CSG models have the property of ``causal immortality'' because the effect of each element/event reaches arbitrarily far into the future.

Similarly, in alternating sequential growth the sample space $\lc{\Omega}_{\mathbb{Z}}$ contains several causal set families (given in lemma  \ref{labeling_lemma}) but only a subset of these is realised by the Alternating CSG dynamics because these dynamics generate causets with no maximal nor minimal elements, as we show in claim \ref{immortality_claim} below. 

\begin{claim} \label{immortality_claim}
Every element in a causal set grown in an Alternating CSG model with $t_k>0$ for some $k>0$ almost surely has an element to its future and an element to its past. 
\end{claim}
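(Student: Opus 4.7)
The plan is to reduce the claim by symmetry and countable additivity, and then to apply a conditional Borel--Cantelli argument to the subsequence of forward transitions that follow an element's birth. Since the alternating CSG dynamics is manifestly symmetric under reversal of relations --- the same couplings $\{t_k\}$ govern forward and backward transitions --- the two assertions ``every element almost surely has a successor'' and ``every element almost surely has a predecessor'' are equivalent, so it suffices to prove the first. Because the ground set $\mathbb{Z}$ is countable, countable additivity further reduces the claim to: for each fixed element $x$, the probability that $x$ has at least one successor in the grown causet equals one.

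Fix such an $x$ and condition on its almost-surely-finite birth stage $N$. At every subsequent forward stage $n$ (i.e.\ each even $n \ge N$), the current partial causet $\lc{C}_n$ contains $x$, and by (3.1) the new element is born with past equal to any prescribed downset $A \subseteq \lc{C}_n$ with conditional probability $\lambda(|A|, m(A))/\lambda(n, 0)$, where $m(A)$ is the number of maximal elements of $A$. Let $P_n$ denote the conditional probability, given the history up to stage $n$, that the new element is born above $x$; this equals the sum of $\lambda(|A|, m(A))/\lambda(n, 0)$ over all downsets $A \ni x$. I would first check $P_n > 0$ at every such $n$: given that $t_K > 0$ for some $K > 0$, one can always exhibit at least one downset $A \ni x$ whose parameters bracket $K$ (so that the $t_K$ summand appears in $\lambda(|A|, m(A))$), making the corresponding contribution strictly positive. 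The main step is to show $\sum_n P_n = +\infty$ almost surely, after which L\'evy's conditional Borel--Cantelli lemma produces infinitely many successors of $x$ and hence, a fortiori, at least one.

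The main obstacle is the divergence $\sum_n P_n = +\infty$. My approach would be to adapt the causal-immortality proof for past-finite CSG models given in \cite{Brightwell:2002vw,Rideout:phd}: each forward transition in alternating growth obeys exactly formula (3.1), so the CSG lower bounds on the single-stage probability of finding a successor transfer directly, and the interleaved backward transitions do not interfere because internal temporality forbids a negative new-born element from being inserted between $x$ and a would-be successor. The subtlety is that backward transitions do enlarge the total causet and can enlarge $past(x)$, so the lower bound on $P_n$ must be robust to the evolving shape of $\lc{C}_n$. This should take the form $P_n \ge c_n$ with $c_n$ depending only on $n = |\lc{C}_n|$ and satisfying $\sum_n c_n = +\infty$, as suggested by the explicit computation in the antichain case, in which $P_n = \tfrac{1}{n} \left(\sum_i i \binom{n}{i} t_i\right) / \left(\sum_i \binom{n}{i} t_i\right)$, a harmonic-type series that diverges whenever $t_K > 0$ for some $K > 0$. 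The case $t_k = 0$ for all $k > 0$ shows that this hypothesis is essential: every transition then births an unrelated element and the grown causet is an antichain.
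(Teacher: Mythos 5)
Your overall route is the paper's: fix an element $x$, work stage by stage through the forward transitions, and show that the per-stage probabilities of acquiring a successor sum to infinity. But the step you yourself flag as ``the main obstacle'' --- a lower bound on $P_n$ that is robust to the evolving shape of $\lc{C}_n$ --- is exactly the content of the paper's proof, and your proposal leaves it as a hope that the CSG bounds ``transfer directly.'' The paper closes it with an observation that dissolves the worry entirely: if $x$ is maximal in $\lc{C}_r$ (and you may as well condition on this, since otherwise $x$ already has a successor), the downsets of $\lc{C}_r$ that do \emph{not} contain $x$ are precisely the downsets of $\lc{C}_r\setminus\{x\}$, and the normalization identity $\sum_A \lambda(|A|,m(A))=\lambda(|D|,0)$, summed over downsets $A$ of a causet $D$ and applied to $D=\lc{C}_r\setminus\{x\}$, gives $1-P_r=\lambda(r-1,0)/\lambda(r,0)$ \emph{exactly}, independently of the shape of $\lc{C}_r$. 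In particular the formula you derived ``in the antichain case'' is already the general formula for a maximal element, and the shape-independent bound $c_r$ you ask for holds with equality. Even so, the divergence of $\sum_{\text{even }r}P_r$ still needs the uniform estimate that $\sum_l l\binom{r}{l}t_l\,/\sum_l\binom{r}{l}t_l$ is bounded below; the paper obtains $P_r\ge \frac{1}{r}\cdot\frac{t_k}{t_0+t_k}$ for $r>k$ using the monotonicity of $u\mapsto u/(t_0+u)$ together with $\binom{r}{k}t_k\ge t_k$. Calling it ``a harmonic-type series that diverges whenever $t_K>0$'' asserts this without checking it, and it is the one place the hypothesis $t_k>0$ actually enters.

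Two smaller points. Since $P_r$ is a deterministic function of $r$ on the event that $x$ is still maximal, no conditional Borel--Cantelli machinery is needed: the probability that $x$ remains maximal forever is the plain product $\prod_{\text{even }r\ge N}(1-P_r)$, which vanishes if and only if $\sum P_r$ diverges --- and you only need one successor, not infinitely many. Your symmetry reduction is fine in spirit but ``manifestly symmetric'' is loose, since forward and backward transitions occur at different causet cardinalities; the honest statement, which is what the paper makes, is that the identical argument with $x$ minimal and $r$ ranging over odd values gives the predecessor half.
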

\begin{proof}
Consider a growth process with an Alternating CSG dynamics with $t_k>0$ for some $k>0$. Suppose that the labeled causet $\lc{C}_n$ has been grown by the beginning of stage $n>k$, and let $x\in \lc{C}_n$ be a maximal element.

First, we show that the probability that $x$ is maximal in the complete causal set is zero. Let $r\geq n$ be an even integer. Then the probability that $x$ is maximal at the end of stage $r$ (given that $x$ is maximal at the beginning of stage $r$) is,   \begin{equation}\label{pr_def} 1-p_r= \frac{\lambda(r-1,0)}{\lambda(r,0)}.
    \end{equation}
where $p_r$ is the effective parameter of \cite{Brightwell:2016}. Therefore the probability that $x$ is maximal in  the complete causet is,
    \begin{equation}\label{eq_prob_020321}
    \begin{split}
           \lim_{s\rightarrow\infty}\mathbb{P}(x \text{ is maximal at end of stage } s) 
        =\lim_{s\rightarrow\infty} \prod_{\text{even }n\leq r \leq s} ( 1-p_r) 
    \end{split}
    \end{equation}
which converges to a non-zero value if and only if the following series converges \cite{Jeffreys:2006}, \begin{equation}\label{eq_series} \lim_{s\rightarrow\infty}\sum_{\text{even }n\leq r \leq s}^{\infty}p_r. \end{equation} 
Rearranging equation \eqref{pr_def} we have,
    \begin{equation}\begin{split}
        p_r=\frac{\sum_{l=1}^{r}\binom{r-1}{l-1}t_l}{\lambda(r,0)}=\frac{1}{r}\bigg( \frac{\sum_{l=1}^{r}\frac{r!}{(r-l)!(l-1)!}t_l}{\lambda(r,0)}\bigg)\geq \frac{1}{r}\bigg(\frac{\sum_{l=1}^{r}\binom{r}{l}t_l}{t_0+\sum_{l=1}^{r}\binom{r}{l}t_l}\bigg) \geq \frac{1}{r}\bigg(\frac{t_k}{t_0+t_k}\bigg)
        \end{split}
    \end{equation}
    and therefore the series \eqref{eq_series} is divergent and probability \eqref{eq_prob_020321} vanishes.
    
The argument can be adapted to show that every element has an element to its past by letting $x$ be a minimal element and letting $r$ take odd values. \end{proof}

\paragraph{Observables:} Identifying the observables of quantum gravity is a challenge shared by all approaches. Within the sequential growth paradigm, the candidate observables are the measurable events that are covariant: a  measurable event $\mathcal{E}$ is covariant if $\lc{C}\in\mathcal{E}\implies\lc{C}'\in\mathcal{E}$ whenever $\lc{C}\cong\lc{C}'$.  The challenge is to understand which of these candidate covariant events have a comprehensible physical interpretation.
A rich class of observables known as ``stem-events'' has been identified within the past-finite sequential growth framework \cite{Brightwell:2002yu, Brightwell:2002vw}. Each ``stem-event'' corresponds to a logical combination of statements about which finite orders are stems\footnote{A finite order $S$ is a stem in the order $C$ if there exists a representative of $S$ that is a stem in some representative of $C$. A finite order $S$ is a stem in the labeled causet $\lc{C}$ if the order $S$ is a stem in the order $[\lc{C}]$ \cite{Dowker:2019qiz}.} in the growing causet.

What are the analogous observables within the alternating growth framework? Stem-events are indeed measurable in the alternating framework:

\begin{lemma} Stem-events are measureable in an alternating growth model. \end{lemma}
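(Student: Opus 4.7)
The plan is to reduce the claim to showing that, for each finite order $S$, the atomic event $E_S := \{\lc{C} \in \lc{\Omega}_{\mathbb{Z}} : S \text{ is a stem in } \lc{C}\}$ is measurable. A stem-event is, by definition, a (countable) Boolean combination of such atomic events taken over the countable collection of finite orders, so once each $E_S$ is shown to lie in the $\sigma$-algebra generated by the cylinder sets of alternating poscau, the lemma follows from closure of $\sigma$-algebras under countable unions, intersections and complements.

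For a fixed $S$, I would decompose $E_S$ along labeled representatives. Let $\lc{S}$ range over labeled causets with ground set a finite subset $L\subset\mathbb{Z}$ satisfying $|L|=|S|$ and $[\lc{S}]=S$; these form a countable collection because $|L|$ is fixed and the finite subsets of $\mathbb{Z}$ are countable. Then
\begin{align*}
E_S = \bigcup_{\lc{S}} E_{\lc{S}}, \qquad E_{\lc{S}} := \{\lc{C}\in\lc{\Omega}_{\mathbb{Z}} : \lc{S} \text{ is a stem in } \lc{C}\}.
\end{align*}
For a fixed $\lc{S}$ on ground set $L$ with $M := \max L$, the event $E_{\lc{S}}$ is the intersection of two sub-events: (a) the order on $L$ induced by $\lc{C}$ coincides with the order of $\lc{S}$; and (b) for every $x\in\mathbb{Z}\setminus L$ with $x<M$ and every $y\in L$ with $y>x$, one has $x\not\prec y$ in $\lc{C}$. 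Condition (b) captures exactly the downward-closedness required for $L$ to be a stem, since natural labeling already rules out any $x\geq M$ lying below an element of $L$.

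The heart of the argument is the observation that under alternating growth the relation between any two already-existing elements is frozen the moment both have been born. This is because the generalised internal temporality condition forbids a new element from being born strictly between two existing ones, so each partial causet $\lc{C}_n$ is a convex-subcauset of every subsequent $\lc{C}_{n+k}$ and of the infinite limit $\lc{C}$; in particular, no later transitive closure can introduce a relation between two elements already present in $\lc{C}_n$. Consequently, for each pair $(x,y)$ of integers, the event $\{x\not\prec y\text{ in }\lc{C}\}$ is a finite union of cylinder sets based at the first stage by which both $x$ and $y$ have appeared, and is therefore measurable. Condition (a) is then a finite conjunction of such events indexed by pairs in $L$, and condition (b) is a countable intersection of finite conjunctions of such events; both are measurable, so each $E_{\lc{S}}$ is measurable, hence so is $E_S$.

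The main potential obstacle I anticipate is making rigorous the "freezing" assertion: it must be carefully checked that once both $x$ and $y$ lie in some $\lc{C}_n$, no later backward transition can insert an element $z$ with $x\prec z\prec y$ that would transitively force a new relation between $x$ and $y$. This is precisely ruled out by the convex-subcauset nesting remarked upon earlier in the paper, so once that observation is invoked the remainder of the argument is routine $\sigma$-algebra bookkeeping.
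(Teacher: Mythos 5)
Your proof is correct, but it takes a genuinely different route from the paper's. The paper works stage-by-stage: for each $m$ it forms the union $\Gamma_{m,k}(C_n)$ of cylinder sets over all level-$k$ nodes of alternating poscau that contain a stem isomorphic to $C_n$ lying inside $[-m,m]$, then intersects over $k$ and unions over $m$. That construction implicitly relies on a finite pigeonhole step --- if at every stage $k$ \emph{some} subset of $[-m,m]$ is a suitable stem, then (since there are finitely many candidates and the stem property at stage $k+1$ implies it at stage $k$) a single subset works at all stages and hence in the limit. Your decomposition instead fixes the candidate stem up front, summing over the countably many labeled representatives $\lc{S}$ on finite subsets $L\subset\mathbb{Z}$, and reduces each $E_{\lc{S}}$ to pairwise relation events $\{x\prec y\}$ and $\{x\not\prec y\}$. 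The key fact you invoke --- that the relation between two already-born elements is frozen, because each partial causet is a (convex-)subcauset of all later ones, so each pairwise event is a finite union of cylinders at the level where both elements first coexist --- is exactly right and is stated in the paper's discussion of Internal Temporality for alternating growth; your restriction of condition (b) to pairs $x<M$, $x\notin L$, $y\in L$, $y>x$ correctly uses the natural labeling to discard the remaining pairs. The trade-off is that your argument makes the structural reason for measurability explicit and avoids the pigeonhole step, at the cost of a more elaborate indexing over labeled representatives, whereas the paper's version is phrased directly in terms of nodes of alternating poscau and runs in parallel with its proof that the set of two-way infinite causets is measurable.
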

\begin{proof}
First, we give a precise definition of stem-events within the alternating growth framework. For each finite order $C_n$ define the set,
\begin{equation}\label{stem_set_def}\begin{split}
stem(C_n) := &\{  \lc{D} \in \tilde{\Omega}_{\mathbb{Z}} \ |\  C_n \ \textrm{is a stem in} \ \lc{D} \}.
\end{split}
\end{equation}
A stem-event is an element of the $\sigma$-algebra generated by the collection of the $stem(C_n)$'s. Now, we show that each $stem(C_n)$ can be constructed countably from the cylinder sets associated with the nodes of alternating poscau and the result follows.

Let $\lc{C}_n$ be an arbitrary representative of $C_n$. Let $\lc{\Omega}_{m,k}(C_n)$ denote the set of causets $\lc{D}_k$ of cardinality $k$ which are nodes in alternating poscau satifying the following:  $\lc{D}_k$ contains a stem that does not contain any element outside of the interval $[-m,m]$ and that is isomorphic to $\lc{C}_n$. Take the union over the set $\lc{\Omega}_{m,k}(C_n)$ of the associated cylinder sets and call that union $\Gamma_{m,k}(C_n)$:

\begin{equation}\Gamma_{m,k}(C_n):= \bigcup_{\lc{D}_k\in \lc{\Omega}_{m,k}(C_n)} cyl(\lc{D}_k). \end{equation}

Then take the intersection over $k$,
\begin{equation}\Gamma_{m}(C_n):= \bigcap_k \Gamma_{m,k}(C_n),\end{equation}

and the union over $m$,
\begin{equation}stem(C_n)= \bigcup_m \Gamma_{m}(C_n).\end{equation}

\end{proof}
 But the freedom to grow past-infinite causal sets means that the stem-events have a weak distinguishing power---they tell us nothing about the past-infinite part of a casual set and they cannot distinguish between causets with no minimal elements which have no stems. We can make progress by noticing that stems are to past-finite growth what convex-suborders are to alternating growth. The ordering of labeled poscau is determined by the stem relation (\textit{i.e.} the order of labeled poscau is order-by-inclusion-as-stem, cf. definition \ref{def_lab_poscau}), while the ordering of alternating poscau is order-by-inclusion-as-convex-subcauset (cf. definition \ref{def_alt_poscau}). Each node in labeled poscau is a stem in the growing causet, while each node in alternating poscau is a convex-subcauset in the growing causet. Therefore, we propose that ``convex-events'' are the observables for alternating growth, as stem-events are for past-finite growth. 

To make this precise, for each finite order $C_n$ let $convex(C_n)\subset\lc{\Omega}_{\mathbb{Z}}$ be the set of causets that contain $C_n$ as a convex-suborder.

First we prove:
\begin{lemma}
For each finite order $C_n$, $convex(C_n)$ is measureable in an alternating growth model. 
\end{lemma}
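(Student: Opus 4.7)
The plan is to express $convex(C_n)$ as a countable union of measurable sets, each itself a finite union of cylinder sets associated to nodes of alternating poscau, in analogy with the preceding proof for $stem(C_n)$. The argument, however, will be structurally simpler: since being a convex-suborder is witnessed locally---once a finite convex-subcauset of type $C_n$ is present, no further growth can destroy it---only a single union (not the nested intersection-then-union used for stems) will be required.

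For each $m\geq 1$, I would let $\lc{\Omega}_m(C_n)$ denote the (finite) set of nodes of alternating poscau with ground set $[-m,m]$ that contain $C_n$ as a convex-suborder, and set
\begin{equation}
\Gamma_m(C_n) := \bigcup_{\lc{D}\in \lc{\Omega}_m(C_n)} cyl(\lc{D})\,.
\end{equation}
Each $\Gamma_m(C_n)$ is measurable as a finite union of cylinder sets. The central claim to establish is
\begin{equation}
convex(C_n) = \bigcup_{m=1}^{\infty} \Gamma_m(C_n)\,,
\end{equation}
from which measurability of $convex(C_n)$ follows immediately.

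The ``$\supseteq$'' direction is essentially immediate from transitivity of the convex-subcauset relation (remarked at the end of Section 2.3): if $\lc{E}\in cyl(\lc{D})$ with $\lc{D}\in\lc{\Omega}_m(C_n)$, then $\lc{D}$ is a convex-subcauset of $\lc{E}$ and $C_n$ is a convex-suborder of $\lc{D}$, hence $C_n$ is a convex-suborder of $\lc{E}$.

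For the ``$\subseteq$'' direction---where the only real work lies---I would start with $\lc{E}\in convex(C_n)$ witnessed by a finite convex-subcauset $\lc{C}'\subseteq\lc{E}$ with $[\lc{C}']=C_n$, choose $m$ large enough that the labels of $\lc{C}'$ all lie in $[-m,m]$, and consider the label-window restriction $\lc{E}_m:=\lc{E}|_{[-m,m]}$. The main step---and the only nontrivial point, though a mild one---is to observe that $\lc{E}_m$ is automatically a convex-subcauset of $\lc{E}$: the natural-labeling property $x\prec z\prec y\implies x<z<y$ forces any interpolating element $z$ between $x,y\in[-m,m]$ to lie in $[-m,m]$. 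Hence $\lc{E}_m$ is a node of alternating poscau, $\lc{C}'$ remains a convex-subcauset of $\lc{E}_m$, so $\lc{E}_m\in\lc{\Omega}_m(C_n)$ and $\lc{E}\in cyl(\lc{E}_m)\subseteq\Gamma_m(C_n)$, as required. The contrast with the stem case---where an intersection over cardinality was needed to rule out hidden past elements---disappears here precisely because being a convex-subcauset depends only on what happens strictly between elements of the witness, which the natural labeling already confines to a bounded label-window.
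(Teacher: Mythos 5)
Your proposal is correct and takes essentially the same approach as the paper, which likewise writes $convex(C_n)$ as a single countable union over cylinder sets of alternating-poscau nodes containing $C_n$ as a convex-suborder. The only difference is that you spell out the justification for the key equivalence (that the label-window restriction $\lc{E}|_{[-m,m]}$ is automatically a convex-subcauset of $\lc{E}$ and hence a valid witness node), which the paper asserts without proof.
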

\begin{proof}
$C_n$ is a convex-suborder in causet $\lc{C} \in \lc{\Omega}_{\mathbb{Z}}$ 
if and only if there exists a finite integer $N$ such that 
$C_n$ is a convex-suborder in the partial causet $\lc{C}|_{[-N, N]} $ which is $\lc{C}$ restricted to the interval $[-N,N]$.

For each $N\in \mathbb{N}$, let  $\Gamma_N (C_n)  := \bigcup_{\lc{D}_N } cyl(\lc{D}_N) $, where the union is over all labeled causets of cardinality N which are nodes in alternating poscau, $\lc{D}_N$, such that $C_n$ is a convex-suborder of $\lc{D}_N$. 

Then we have
\begin{align}
convex(C_n) = \bigcup_N \Gamma_N(C_n) \,.
\end{align}
\end{proof}

By definition, $convex(C_n)$ is a covariant event and is therefore in the physical $\sigma$-algebra of covariant measureable events. 

Let us also define generally a ``convex-event'' to be any event in the $\sigma$-algebra generated by all the  $convex(C_n)$'s.  Each convex-event is then a covariant measurable event with a clear physical meaning---it corresponds to a logical combination of statements about which finite orders are convex-suborders in the growing causet.

Convex-events form a large class of observables which provide us with information about the structure of the causal set. But they cannot distinguish between pairs of ``convex-rogues'', pairs of non order-isomorphic causal sets that have the same convex-suborders (an example is shown in figure \ref{convex_rogues}). In the past-finite framework, the stem-events are also not fully-distinguishing since they fail to distinguish between pairs of ``rogues''\footnote{If a pair of non order-isomorphic causal sets, $\lc{C},\lc{D}\in\lc{\Omega}_{\mathbb{N}}$, have the same stems as each other then each is called a ``rogue'' and together they form a ``rogue pair''. If $\lc{C}$ and $\lc{D}$ are a rogue pair then every stem-event contains either both or neither.}. However it was shown in \cite{Brightwell:2002vw} that in any CSG dynamics the set of rogues has measure zero and therefore, in a precise sense, the stem-events exhaust the set of observables in any CSG dynamics. Crucially, the result of \cite{Brightwell:2002vw} depends on the specifics of the CSG dynamics and does not hold for every random walk on labeled poscau but only for those models in which the set of rogues has measure zero.

Investigating the consequences of the claim that convex-events exhaust the comprehensible observables in an alternating CSG dynamics, we find that in the only Alternating CSG that satisfies DGC --- namely  Alternating Transitive Percolation --- the convex-events fail to provide any useful predictions. This is because 
in ATP (and TP)  \textit{every} finite order is almost surely a convex-suborder in the  causet grown: \textit{i.e.} the measure of every event $convex(C_n)$ is equal to 1 \cite{Brightwell:2016}.  If we define a model to be ``deterministic with respect to convex-events'' if every convex-event has measure zero or one, then ATP is deterministic with respect to the convex-events.
Indeed, the causet grown will almost surely contain infinitely many copies of every convex-suborder: no matter where  you are in an ATP universe, a copy of each finite order will occur in your future if you wait long enough just as a given finite bit string will almost surely occur infinitely many times in an infinite random string. The stem-events in TP, anchored as they are to the beginning, do not suffer from this problem.
So convex-events cannot, with probability one, distinguish between any two infinite causets grown in ATP -- any two infinite causets grown in ATP are almost surely a convex-rogue pair -- and one can make no useful predictions using convex-events. 

This example of ATP is important because TP holds a special position in the class of CSG models. It is a fixed point under the transformations known as cosmic renormalisation \cite{Martin:2000js} that are the basis for the hope that causal set cosmology might be self-tuning and avoid the fine-tuning of cosmological parameters we find in our current standard cosmological model \cite{Sorkin:1998hi}.  This failure  means one must give up on ATP as a useful cosmological model or one must look harder for meaningful observables or one gives up on growth models that allow two-way infinite causets altogether. 

In the rest of the paper, we take the first choice above: we adhere to the proposal of convex-events as the meaningful observables, accept that this means that ATP is not a useful cosmological model and explore models that allow two-way infinite causets  in which there are non-trivial predictions about convex-events. 
 First, we show that not every Alternating CSG model is deterministic with respect to the convex-events:
\begin{claim}\label{claim090702} 
An Alternating CSG dynamics  is not deterministic with respect to convex-events if its couplings are given by, \begin{equation}\label{eq300501} t_0=1 \text{ and } t_n=f(n)\lambda(n-1,0) \ \forall n\geq 1,\end{equation} where $f(n)$ is a function satisfying $\sum_1^\infty \frac{1}{f(n)}<\infty$ (e.g. $f(n)=x^n$ with $x>1$ or $f(n)=n^s$ with $s>1$). \end{claim}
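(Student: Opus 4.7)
The plan is to exhibit a single convex-event whose probability lies strictly between $0$ and $1$. Concretely, let $A_2$ denote the $2$-antichain (the $2$-order with no relations). I will show that $0 < \mathbb{P}(convex(A_2)) < 1$, which immediately contradicts determinism with respect to convex-events.

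The lower bound is essentially free. Consider the first backward transition, $\lc{C}_1 = \{0\} \to \lc{C}_2$. For the newborn $-1$ to be spacelike to $0$ we have $\varpi = 0$ and $m = 0$, so this has probability $\lambda(0,0)/\lambda(1,0) = 1/(1+f(1)) > 0$. Once two incomparable elements exist they remain incomparable under every subsequent extension, and any pair of incomparable elements is automatically a convex-subcauset, so $\mathbb{P}(convex(A_2)) \geq 1/(1+f(1)) > 0$.

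For the upper bound I will show that the complementary event ``the grown causet is an infinite chain'' has strictly positive probability, since a causet contains no copy of $A_2$ as a convex-suborder exactly when it is totally ordered. Conditional on $\lc{C}_n$ being an $n$-chain, internal temporality together with the requirement that no $2$-antichain appears forces the newborn to be placed as a new global maximum (forward transition) or as a new global minimum (backward transition); in either case we have $\varpi = n$ and $m = 1$, giving chain-preservation probability $p_{\mathrm{chain}}(n) := \lambda(n,1)/\lambda(n,0)$. A short Pascal-identity calculation, $\lambda(n,0) - \lambda(n,1) = \sum_{i=0}^{n-1}\binom{n-1}{i} t_i = \lambda(n-1,0)$, gives the clean identity $1 - p_{\mathrm{chain}}(n) = \lambda(n-1,0)/\lambda(n,0)$.

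The key inequality is then immediate from the form of the couplings: the term $\binom{n}{n}t_n = t_n$ is one summand in $\lambda(n,0)$, so $\lambda(n,0) \geq t_n = f(n)\lambda(n-1,0)$, and hence $1 - p_{\mathrm{chain}}(n) \leq 1/f(n)$. Since $\sum_{n \geq 1} 1/f(n) < \infty$, the standard criterion for infinite products implies $\prod_{n \geq 1} p_{\mathrm{chain}}(n) > 0$, so $\mathbb{P}(\text{infinite chain}) > 0$ and therefore $\mathbb{P}(convex(A_2)) < 1$. The main bookkeeping point to watch is that the chain-preserving transition contributes the same factor $\lambda(n,1)/\lambda(n,0)$ at both forward and backward stages, which follows from the fact that the Alternating CSG transition probability depends only on $\varpi$ and $m$ and not on the direction of the transition.
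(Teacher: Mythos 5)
Your proposal is correct and follows essentially the same route as the paper: the same witness event $convex(A_2)$, the same identification of its complement with the event that the grown causet is the two-way infinite chain, the same chain-survival product with $1-p_n=\lambda(n-1,0)/\lambda(n,0)\leq 1/f(n)$, and the same convergence criterion for infinite products. The only differences are cosmetic (you derive the bound via $\lambda(n,0)\geq t_n$ after a Pascal identity, where the paper rearranges $1-p_m$ into a reciprocal containing $f(m)$ as a summand).
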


\begin{proof}
Let $A_2$ denote the 2-antichain order, and let $C_{\infty}$ denote the two-way infinite chain order. Note that $\mathbb{P}(C_{\infty})=1-\mathbb{P}(convex(A_2))$, where  $\mathbb{P}(C_{\infty})$ is the probability of growing $C_{\infty}$ and $\mathbb{P}(convex(A_2))$ is the measure of $convex(A_2)$. By considering stage 1 of the growth we see that $\mathbb{P}(convex(A_2))>t_0/\lambda(1,0)>0$ in any alternating CSG dynamics. We will show that in the dynamics (\ref{eq300501}), $\mathbb{P}(C_{\infty})>0$  and therefore $0<\mathbb{P}(convex(A_2))<1$ and the result follows.

Now,  $\mathbb{P}(C_{\infty})=\prod_{n>0} p_n$, 
where (as in claim \ref{immortality_claim}) $p_n$ is the effective parameter given by,
\begin{equation}\label{eff_par_eq}\begin{split}p_n=\frac{\sum_{k=0}^{n-1}\binom{n-1}{k}t_{k+1}}{\lambda(n,0)}=&\frac{\lambda(n,0)-\lambda(n-1,0)}{\lambda(n,0)},\end{split}\end{equation}
and the product converges to a non-zero value if and only if the series $\sum ( 1-p_n)$ converges \cite{Jeffreys:2006}. We can write the $m^{th}$ term of this series as, \begin{equation}\label{eq17052}\begin{split}1-p_m =\frac{\lambda(m-1,0)}{\lambda(m,0)}=\bigg(\frac{\sum_{r=0}^{m-1}\binom{m}{r}t_r}{\lambda(m-1,0)}+\frac{t_m}{\lambda(m-1,0)}\bigg)^{-1},\end{split}\end{equation}
and then substitute the couplings given in \eqref{eq300501} to find,
\begin{equation}\begin{split} 1-p_m=\bigg(\frac{\sum_{r=0}^{m-1}\binom{m}{r}t_r}{\lambda(m-1,0)}+f(m)\bigg)^{-1}\leq \frac{1}{f(m)}.\end{split}\end{equation}

It follows that in the models given in \eqref{eq300501} the sum $\sum (1-p_n)$ converges by the comparison test against $\sum \frac{1}{f(n)}$ and hence $\mathbb{P}(C_{\infty})>0$.
\end{proof}

The existence of non-deterministic alternating growth models encourages us to continue to explore dynamics that allow two-way infinite causets to grow. We might  use the concept of convex-events to formulate constraints or guiding principles in searching for interesting alternating growth dynamics -- e.g. a stronger condition than that the dynamics is not deterministic w.r.t. convex-events is that the dynamics almost surely does not generate convex-rogues. It is not known whether the dynamics \eqref{eq300501}
satisfies this condition. 

\begin{figure}[h]
  \centering
	\includegraphics[width=0.4\textwidth]{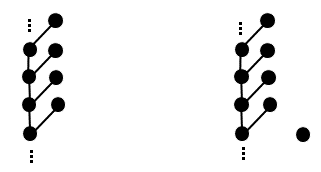}
	\caption{The ``infinite comb'' (left) and the infinite comb disjoint union a single element (right) are convex-rogues since they contain the same convex-suborders as each other.}
	\label{convex_rogues}
\end{figure}

\vspace{2mm}
In summary, in this section we generalised the sequential growth paradigm to accommodate two-way infinite cosmologies. The resulting alternating framework generates causets that have a natural labeling by $\mathbb{Z}$. We considered what form various physical conditions take in the alternating framework and whether an alternating generalisation of the CSG models satisfy them. Finally, we identified the convex-events as a physical class of observables. The convex-events cannot discriminate between causets grown in an ATP model, which model is the only alternating CSG model that satisfies DGC. This means that if we want to demand DGC in an alternating growth model, that model cannot be an alternating CSG model. 

In the next section we use the convex-events to provide an alternative to alternating sequential growth: a covariant framework for two-way infinite growth, analogous to the existing covariant framework for the growth of past finite causets \cite{Dowker:2019qiz, Zalel:2020oyf}. 
\FloatBarrier
\pagebreak

\section{Covariant growth}\label{section_cov_growth}

Sequential growth models are named for the way they are defined, with the causal set elements being born in a sequence of stages, with specified transition probabilities for the possible transitions at each stage. This linear order of the stages is a gauge---a kind of supertime---since it is a tenet of Causal Set Theory that only the partial order of the causet itself is physical. In other words, the definition of sequential growth models makes the elements of the growing causal set mathematically distinguishable or ``labeled'' --- since elements are distinguished/labeled by the stage at which they are born --- but some of this labeling information is unphysical since in Causal Set Theory only the order-isomorphism class of the causet is physical. The dissonance between the labeled nature of sequential growth and the label-independent nature of the physical world finds a resolution once one has identified the covariant, label-independent observables and restricted oneself to making statements only about them. Thus, sequential growth models are a proof of concept for the growth dynamics paradigm and a playground in which to explore the dichotomy of being and becoming  \cite{Sorkin:2007qh, spacetimeatoms}.

Covariant growth of past-finite causets is an alternative framework to sequential growth in which label independence is manifest from the start \cite{Dowker:2019qiz,Zalel:2020oyf}. Its motivation is rooted in the notion of partially ordered growth or \textit{asynchronous becoming}, in which the world comes into being --- becomes ---  in a manner compatible with a lack of physical global time through a partially ordered process of the birth of spacetime atoms  \cite{Sorkin:2007qh, spacetimeatoms}. Covariant growth models seek to bypass the introduction of the unphysical gauge in sequential growth --- the linear order of the stages at which the causet elements are born one by one --- and to deal only with covariant quantities throughout. This is an ambitious project and we anticipate that the struggle between the local nature of the dynamics of a gauge field and the global nature of gauge invariant quantities will play out in pursuing it. 

 Thus far, covariant growth has only been explored in the context of past-finite orders where the dynamics takes the form of a random walk up \textit{covtree}, a partial order that is a directed tree whose nodes are sets of orders.\footnote{Recall that ``order'' is short for ``order-isomorphism class'' (see section \ref{subsec_orders}).} At level $n$ of covtree, each node is a set of $n$-orders, interpreted as the set of $n$-stems of the growing past-finite causet. This interpretation is founded on the theorem that for each inextendible path up covtree there indeed exists an infinite order whose $n$-stems form the node in that path at level $n$ \cite{Dowker:2019qiz}. This dynamics pertains to covariant properties of the causet from the outset and no labeling is introduced. The random walk progresses in stages from each level of covtree to the next. At the beginning of stage $n$ of the random walk, the ``state'' of the process is a node in level $n-1$ -- that is, the $(n-1)$-stems of the growing causet have already been chosen. At stage $n$, the walk transitions to a new node in level $(n-1)$:  i.e. the set of $n$-stems of the growing causal set universe is chosen at random according to the transition probabilities of the model. And so on. 
 
  Note that in this scenario of covariant growth, the manifest label-independence comes at the ``cost'' of the model not making direct reference to the process of the birth of \textit{individual} spacetime atoms: in  a  sequential growth model -- i.e. a random walk up labeled poscau -- element $n$ is born at stage $n$ and that is not the case in covariant growth on covtree. A covtree node at stage $n$,  $\Gamma_n$,  is a collection of $n$-orders and corresponds to the statement ``$\Gamma_n$ is the set of $n$-stems in the growing universe''. In a real sense, however, in moving from a poscau process to a covtree process one is losing what one does not actually have.  Since, in a walk on labeled poscau, tempting as it is to interpret the node at stage $n$
as representing a momentary state of a growing order this is an unphysical picture because the concept of stage $n$ has no physical meaning:  there is no ``God's eye view'' of the universe in asynchronous becoming \cite{Sorkin:2007qh}.  Here we see the struggle between locality and global-ness inherent in a gauge theory.   

Our aim is to create a covariant framework  for two-way infinite growth and construct the analogue of covtree. The construction of covtree was motivated by the fact that the stem-events exhaust the set of observables in CSG models \cite{Brightwell:2002vw}. Indeed, covtree's algebra of observables is equal to the algebra of stem-events \cite{Dowker:2019qiz}. Therefore, pursuing further the analogy between stems and convex-suborders, in the rest of the paper we introduce and explore a new covariant framework, which we call $\Z$-covtree, whose sample space is $\Omega_{\Z}$ and whose set of observables is exactly the set of convex-events. We will see that the structure of 
$\Z$-covtree is very different from covtree. We will construct $\Z$-covtree via an intermediate construction of a larger tree we call convex-covtree. The next subsection is devoted to defining convex-covtree. 


\subsection{Defining convex-covtree}
\label{sec:CovtreePastInfinite}

Recall that, for any positive integer $n$, the set of $n$-orders is called $\Omega(n)$. Let $\Gamma_n$ denote a subset of $\Omega(n)$. Recall also that an $n$-convex-suborder means ``a convex-suborder of cardinality $n$''. Convex-covtree is a partial order, a directed tree whose nodes at level $n$ are subsets of $\Omega(n)$: a subset $\Gamma_n\subset\Omega(n)$ is a node in convex-covtree if and only if it is the set of $n$-convex-suborders of some (finite or infinite) order $C$. In the following, we formalise the definition of  convex-covtree. 

\begin{definition}\label{cert_def} An order $C$ is a \textbf{certificate} of $\Gamma_n$ if $\Gamma_n$ is the set of  $n$-convex-suborders of $C$.  A \textbf{labeled certificate} of $\Gamma_n$ is a representative of a certificate of $\Gamma_n$.
\end{definition}

A certificate may be finite or infinite, and if it is infinite it may be past-finite, future-finite or two-way infinite. Note that some $\Gamma_n\subset\Omega(n)$ have no certificates at all. If $\Gamma_n$ has an infinite certificate then it has a finite certificate, but the converse is not true. Examples are shown in figure \ref{fig_cert}.\footnote{Note that definition \ref{cert_def} of \textit{certificate} is different to that in \cite{Dowker:2019qiz} where a certificate of $\Gamma_n$ is an order whose set of $n$-\textit{stems} is $\Gamma_n$. If $C$ is a certificate of $\Gamma_n$ by definition \ref{cert_def}, then $C$ certifies that $\Gamma_n$ is a node in convex-covtree. If $C$ is a certificate of $\Gamma_n$ by the definition in \cite{Dowker:2019qiz}, then $C$ certifies that $\Gamma_n$ is a node in covtree. The properties of the certificates depend on which definition of certificate is used, \textit{e.g.} using the definition in \cite{Dowker:2019qiz} $\Gamma_n$ has an infinite certificate if and only if it has a finite certificate, while using definition \ref{cert_def} the existence of a finite certificate does not imply the existence of an infinite certifcate.}

\begin{figure}
    \centering
    \includegraphics[scale=0.5]{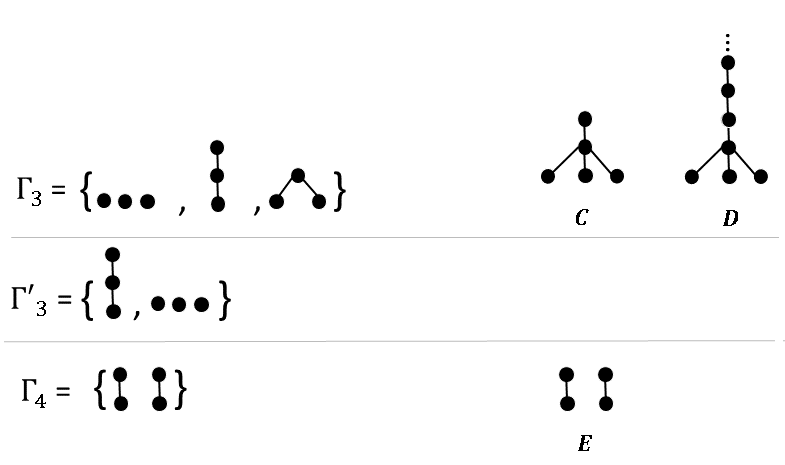}
    \caption{Illustration of certificates. $C$ and $D$ are certificates of $\Gamma_3$. $\Gamma'_3$ has no certificates since any order that contains the 3-chain and the 3-antichain as 3-convex-suborders also contains the ``L'' order as a 3-convex-suborder. $E$ is a certificate of $\Gamma_4$. $\Gamma_4$ has no infinite certificates.}
    \label{fig_cert}
\end{figure}

We use $\chi$ to denote the set of all $\Gamma_n$'s, for all $n$, that have at least one certificate:
    \vspace{-4mm}
    \begin{equation}
        \chi:=\bigcup_{n\in\N}~\{\Gamma_n\subseteq\Omega(n)~|~\exists\text{ a certificate of }\Gamma_n\}.
    \end{equation}
$\chi$ is the ground-set of convex-covtree. To define the partial order on $\chi$, we introduce the map  ${{\mathcal O}}_c^{-}$:
\begin{definition}
    For any $n$ and any set $\Gamma_n$ of $n$-orders, the map ${{\mathcal O}}_c^{-}$  takes $\Gamma_n$ to the set of $(n-1)$-convex-suborders of elements of $\Gamma_n$:
\vspace{-4mm}
\begin{equation} 
{\mathcal O}_c^{-}(\Gamma_n):=\{B \in \Omega(n-1)\ | \  \exists \ A\in \Gamma_n\ \mathrm{ s.t. }\ B \text{ is an $(n-1)$-convex-suborder in } A \}
\,. \label{o_minus_defn}
\end{equation}
\end{definition}

One way to think about the operation ${{\mathcal O}}_c^{-}$ on $\Gamma_n$ is to pick an $n$-order in $\Gamma_n$ and delete a maximal or minimal element of it to form an $(n-1)$-order. The set
${{\mathcal O}}_c^{-}(\Gamma_n)$ is the set of all $(n-1)$-orders that can be formed in this way. An illustration is shown in figure \ref{fig_operator}.

\begin{figure}[htpb]
    \centering
    \includegraphics[scale=0.5]{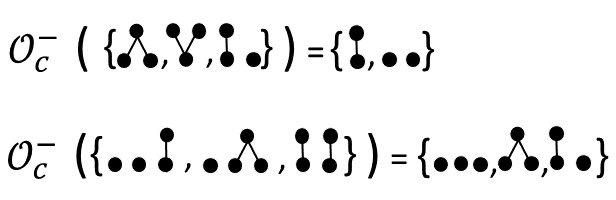}
    \caption{Illustration of the operation  ${{\mathcal O}}_c^{-}$.}
    \label{fig_operator}
\end{figure}

\begin{lemma} If $\Gamma_n \in \chi$ then ${{\mathcal O}}_c^{-}(\Gamma_n) \in \chi$. 
\end{lemma}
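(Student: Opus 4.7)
The plan is to show that if $\Gamma_n \neq \emptyset$ then any certificate $C$ of $\Gamma_n$ (which must have $|C| \geq n$) is itself a certificate of ${{\mathcal O}}_c^{-}(\Gamma_n)$; the degenerate case $\Gamma_n = \emptyset$ forces ${{\mathcal O}}_c^{-}(\Gamma_n) = \emptyset$, which is certified by any sufficiently small order. Assuming $\Gamma_n \neq \emptyset$, I must show the set of $(n-1)$-convex-suborders of $C$ equals ${{\mathcal O}}_c^{-}(\Gamma_n)$.

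The inclusion ${{\mathcal O}}_c^{-}(\Gamma_n) \subseteq \{(n-1)\text{-convex-suborders of } C\}$ is immediate from the transitivity of ``convex-subcauset of'' noted at the end of section 2.3: if $B$ is an $(n-1)$-convex-suborder of some $A \in \Gamma_n$ and $A$ is an $n$-convex-suborder of $C$, then $B$ is an $(n-1)$-convex-suborder of $C$. For the reverse inclusion, given an $(n-1)$-convex-suborder $B$ of $C$ realised by a concrete convex-subcauset $B' \subseteq C$, the goal is to exhibit an element $y \in C \setminus B'$ such that $A' := B' \cup \{y\}$ is convex in $C$ and $y$ is extremal (maximal or minimal) in $A'$: then the order-class of $A'$ is an $n$-convex-suborder of $C$ and hence lies in $\Gamma_n$, and $B'$, being $A'$ minus an extremal element, is convex in $A'$, placing $B$ in ${{\mathcal O}}_c^{-}(\Gamma_n)$.

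To produce $y$, let $U$ (resp.\ $L$) denote the set of elements of $C \setminus B'$ strictly above (resp.\ below) some element of $B'$. If $U \neq \emptyset$, take $y$ to be a minimal element of $U$; this minimum exists by local finiteness, since for any $u \in U$ the elements of $U$ that lie below $u$ are contained in the finite union $\bigcup_{b \in B',\, b \prec u} [b, u]$. A short case analysis on whether $a, b \in A'$ lie in $B'$ or equal $y$, using minimality of $y$ together with convexity of $B'$ in $C$, then confirms both that $A'$ is convex in $C$ and that $y$ is maximal in $A'$ (if some $b \in B'$ had $y \prec b$ then $y$ would sit between elements of $B'$ and by convexity would lie in $B'$, a contradiction). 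The case $U = \emptyset$, $L \neq \emptyset$ is handled symmetrically via a maximal element of $L$. Finally, if both $U$ and $L$ are empty then $|C| > |B'|$ supplies some $y \in C \setminus B'$ incomparable with every element of $B'$, for which $A' = B' \cup \{y\}$ is trivially convex with $y$ extremal. The main technical obstacle is the existence of a minimal element of $U$ when it is nonempty; local finiteness is the essential tool there, and once $y$ is correctly chosen the convexity and extremality checks are routine.
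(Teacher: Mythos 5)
Your proof is correct and follows essentially the same route as the paper's: both show that any certificate $C$ of $\Gamma_n$ is itself a certificate of ${\mathcal O}_c^{-}(\Gamma_n)$. The only difference is that you supply a full argument (the extremal-element extension of a convex-subcauset, using local finiteness) for the step the paper merely asserts, namely that every $(n-1)$-convex-suborder of $C$ is a convex-suborder of some $n$-convex-suborder of $C$.
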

\begin{proof}
There exists a certificate $C$ of $\Gamma_n$. Each element of $\Gamma_n$ is a convex-suborder of $C$. 
So each convex-suborder of each element of $\Gamma_n$ is a convex-suborder of $C$. An 
 $(n-1)$-order  is an $(n-1)$-convex-suborder of $C$ if and only if it is a convex-suborder of some $n$-convex-suborder of $C$. Therefore $C$ is a certificate of ${{\mathcal O}}_c^{-}(\Gamma_n) $. 
 \end{proof}

We can now give the definition of convex-covtree:
\begin{definition}
    Convex-covtree is the partial order $(\chi,\prec)$, where $\Gamma_n\in\chi$ is directly above -- covers -- $\mathcal{O}_c^{-}(\Gamma_n)\in\chi$.
\end{definition}

The nodes in the first three levels of convex-covtree are shown in figures \ref{fig:covtree_pastinfinite} and \ref{fig:covtree_pastinfinite2}.
The construction of convex-covtree is closely analogous to the construction of covtree in \cite{Dowker:2019qiz}, with the concept of convex-suborder replacing the concept of stem. Indeed, the two resulting structures share some features, including: (1) if $C$ is a certificate of a node $\Gamma_n$ then $C$ is a certificate of all nodes below $\Gamma_n$ and (2) every inextendible path has a certificate (as we will prove for convex-covtree in lemma \ref{lemma230221} and proposition \ref{lemma_1_211020} below), where the certificate of a path is defined as,
\begin{definition}\label{def_240221} An order $C$ is a \textbf{certificate of a path} $\mathcal{P}$ if it is a certificate of every node in $\mathcal{P}$.
\end{definition}
\noindent Properties (1) and (2) allow us to interpret a random walk up convex-covtree as a covariant process of growth: the growing order is a certificate of the path that is taken by the random walk. Each node in the path corresponds to a covariant property of the growing order, \textit{i.e.} $\Gamma_n$ is the set of $n$-convex-suborders of the growing order. At stage $n$, the walk transitions from the set of $(n-1)$-convex-suborders of the growing order to the set of $n$-convex-suborders.  At each stage of the random process, more physical information about the growing order is acquired.  

\subsection{Sample space for convex-covtree} 
In labeled alternating sequential growth models, there is a $1-1$ correspondence between the set of paths on alternating poscau and the set of labeled causets, ${\lc{\Omega}}_{\Z}$, and we refer to the latter as the sample space of the process. Events in the event algebra are subsets of this sample space. Covariant events are covariant subsets of this sample space. 

The framework of random walks up convex-covtree, is motivated by doing away with mention of labeled causets from the very start. In keeping with this, but keeping to the physical interpretation that the process is producing a growing order,   we conceive of the sample space of the process, not as a set of labeled causets,  but as a set of orders. 

\begin{definition} The  \textit{sample space} of a random walk on convex-covtree is the set of orders that are certificates of inextendible (maximal) paths in convex-covtree. 
\end{definition}
There is no $1-1$ correspondence between inextendible  paths in convex-covtree and orders: we have already seen this in the example of ATP where almost surely any causal set grown in ATP has the same convex-suborders as any other. So the single path in convex-covtree containing the node at level $n$ that is the set of all $n$-orders, for all $n$, has all the ATP orders as certificates. We will come back to this point in the discussion.  

Now, we can ask: which orders are in this sample space for walks on convex-covtree? In contrast to all  growth models defined to date,  a walk up convex-covtree can produce \textit{finite} orders. This is because convex-covtree contains maximal nodes, so some of its inextendible paths are finite. A finite inextendible path has one unique finite certificate, and so if a random walk ends at a maximal element of convex-covtree,  then a finite order is generated and the universe has a beginning \textit{and} an end.  This result and others about maximal nodes and finite inextendible paths will be proved in the next subsection \ref{subsec_inext}. The certificates of infinite paths are necessarily infinite (since they contain $n$-convex-suborders for every $n>0$) and every infinite order (past-finite, future-finite or neither) is a certificate of some infinite path.

In summary, the sample space of a random walk on convex-covtree contains all infinite orders and many (but not all) finite orders. 
It is natural to ask whether there is a way to consistently restrict the sample space to $\Omega_{\Z}$, in order that the sample space matches that of the alternating sequential growth models of the previous section. We will show in section \ref{subsec_inf_paths} that  this can be done and that in this case the observables are the convex-events. We will also show that an inconsistency arises ((2) is violated) when restricting the sample space to $\Omega_{\mathbb{N}}$, suggesting that convex-suborders are unsuitable for describing past-finite growth.

\begin{figure}[h]
    \centering
    \includegraphics[scale=0.6]{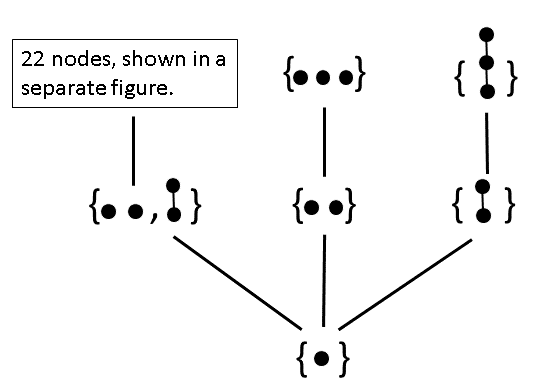}
    \caption{The first three levels of convex-covtree.}
    \label{fig:covtree_pastinfinite}
\end{figure}

\begin{figure}[h]
    \centering
    \includegraphics[scale=0.3]{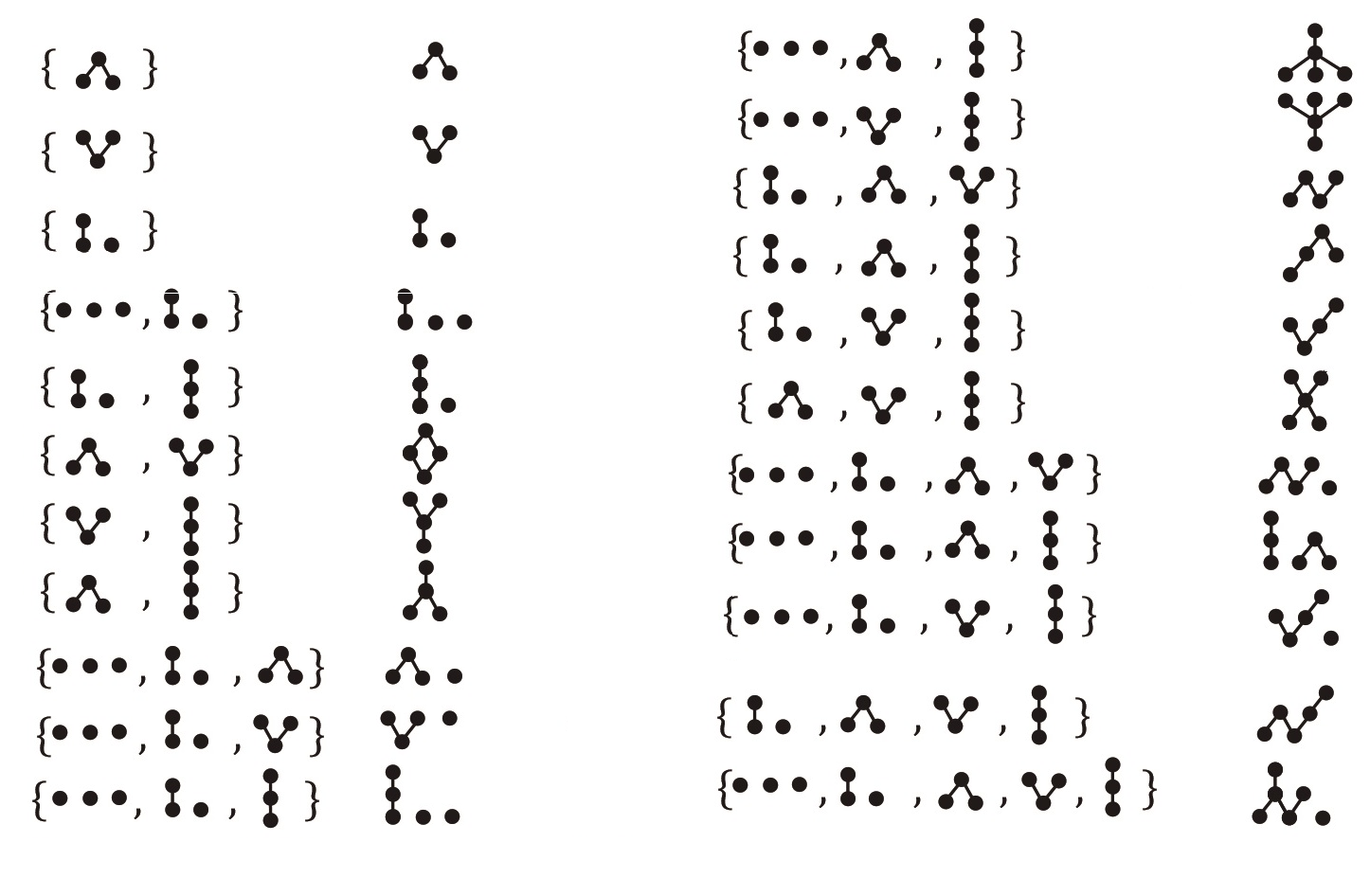}
    \caption{22 nodes of convex-covtree and their certificates. These are the level 3 nodes that appear directly above the doublet.}
    \label{fig:covtree_pastinfinite2}
\end{figure}

\FloatBarrier
\subsection{Finite inextendible paths} \label{subsec_inext}
 
A \textit{maximal node} is a node that has no descendants---it is maximal in the convex-covtree partial order. A subset $\Gamma_n\subset\Omega(n)$ is a \textit{singleton} if it contains only a single $n$-order, \textit{i.e.} $\Gamma_n=\{C\}$. Note that every singleton is a node in convex-covtree since if $\Gamma_n=\{C\}$ then $C$ is a certificate of $\Gamma_n$. 

\begin{lemma}\label{lemma_max_sing} A maximal node is a singleton $\Gamma_n=\{C\}$ whose only certificate is $C$. \end{lemma}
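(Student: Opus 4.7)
The plan is to characterize maximality in convex-covtree by the cardinalities of the certificates of $\Gamma_n$. I would prove two things: a maximal $\Gamma_n$ cannot have a certificate of cardinality other than $n$, and once every certificate has cardinality exactly $n$ the stated conclusion is essentially automatic. A short preliminary step rules out $\Gamma_n=\emptyset$: $\emptyset\subseteq\Omega(n)$ is always in $\chi$ (any order with fewer than $n$ elements certifies it) and $\mathcal{O}_c^-(\emptyset)=\emptyset$, so at every level the empty node has the empty node above it as a descendant; a maximal node is therefore nonempty.

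Next I would argue that a maximal $\Gamma_n$ admits no certificate $C$ with $|C|\geq n+1$. Given such a $C$, define $\Gamma_{n+1}$ to be the set of all $(n+1)$-convex-suborders of $C$; then $C$ certifies $\Gamma_{n+1}$, so $\Gamma_{n+1}\in\chi$. The inclusion $\mathcal{O}_c^-(\Gamma_{n+1})\subseteq\Gamma_n$ is automatic, since a convex-suborder of a convex-suborder of $C$ is a convex-suborder of $C$. The reverse inclusion is the crux and reduces to an \emph{extension claim}: every $n$-convex-suborder of $C$ occurs as an $n$-convex-suborder of some $(n+1)$-convex-suborder of $C$. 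Granting this, $\mathcal{O}_c^-(\Gamma_{n+1})=\Gamma_n$, so $\Gamma_{n+1}$ is a descendant of $\Gamma_n$, contradicting maximality. Thus every certificate $C$ satisfies $|C|=n$; the only $n$-convex-suborder of an $n$-order is itself, forcing $\Gamma_n=\{C\}$. Any further certificate $D$ likewise satisfies $|D|=n$ and $\{D\}=\Gamma_n=\{C\}$, so $D=C$, giving uniqueness.

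The main obstacle is the extension claim. To prove it I would fix a representative $\lc{C}$ of $C$ and a convex subcauset $\lc{B}\subseteq\lc{C}$ of cardinality $n$, and produce an $x\in\lc{C}\setminus\lc{B}$ with $\lc{B}\cup\{x\}$ convex (then deleting $x$ as a maximal or minimal element of $\lc{B}\cup\{x\}$ recovers $\lc{B}$ as a convex-subcauset, which is what is needed). Write $R=\lc{C}\setminus\lc{B}$. If some $x\in R$ is unrelated to $\lc{B}$, then $\lc{B}\cup\{x\}$ is trivially convex. Otherwise no $x\in R$ can have both a $\lc{B}$-predecessor and a $\lc{B}$-successor, since such an $x$ would be sandwiched between two elements of $\lc{B}$, violating convexity of $\lc{B}$. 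Hence some $x\in R$ has a $\lc{B}$-successor but no $\lc{B}$-predecessor (or dually); choose such an $x_0$ that is maximal in $\lc{C}$'s order among those of this type. A case analysis then shows $\lc{B}\cup\{x_0\}$ is convex: a ``lower issue'' (some $z\in R$ with $u\prec z\prec x_0$, $u\in\lc{B}$) is excluded because $x_0$ has no $\lc{B}$-predecessor; an ``upper issue'' (some $z\in R$ with $x_0\prec z\prec v$, $v\in\lc{B}$) would force $z$ to have no $\lc{B}$-predecessor either (else $z$ would be sandwiched between $\lc{B}$-elements), making $z$ of the same type as $x_0$ but strictly above it, contradicting maximality. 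The dual case is symmetric, completing the extension claim and hence the lemma.
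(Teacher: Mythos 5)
Your proof is correct and rests on the same core idea as the paper's: a certificate of cardinality greater than $n$ produces a node strictly above $\Gamma_n$, so a maximal node can only have the trivial certificate $C$ itself. The paper gets there faster, by noting that a certificate $D\neq C$ (or any certificate of a non-singleton) must have $|D|>n$ and then asserting $\{D\}\succ\Gamma_n$; that assertion silently relies on exactly your ``extension claim'' --- every $n$-convex-suborder of $C$ sits inside some $(n+1)$-convex-suborder of $C$ --- which the paper states without proof inside the lemma showing that $\chi$ is closed under ${\mathcal O}_c^{-}$. Your proposal is therefore more complete than the paper's own argument on this point, and your case analysis for the extension claim is sound. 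Two small remarks. First, when the certificate $\lc{C}$ is infinite, the existence of an element maximal among those with a $\lc{B}$-successor but no $\lc{B}$-predecessor is not automatic; you should note that local finiteness of causal sets together with the finiteness of $\lc{B}$ forbids an infinite ascending chain of such elements (such a chain would force infinitely many elements into some interval below a fixed $b\in\lc{B}$), or else first reduce to a finite certificate. Second, you prove only the implication ``maximal $\Rightarrow$ singleton with unique certificate,'' which is what the statement literally asserts; the paper's proof also establishes the converse, which is used immediately afterwards to exhibit a concrete maximal node, and it follows at once from your setup (a singleton whose only certificate is $C$ admits no certificate of cardinality $>n$, hence no node above it).
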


\begin{proof}
Let $\Gamma_n=\{C\}$ and let its only certificate be $C$. Suppose for contradiction that $\Gamma_{n+1}\succ\Gamma_n$. Then there exists some $D$ with cardinality $>n$ that is a certificate of $\Gamma_{n+1}$ and hence of $\Gamma_{n}$. Contradition. Therefore $\Gamma_n$ is maximal.

Suppose that $\Gamma_n=\{C\}$ has a certificate $D\not=C$. Then $D$ has cardinality $>n$ and therefore $\{D\}\succ\Gamma_n\implies$ $\Gamma_n$ is not maximal. Similarly, if $\Gamma_n$ is not a singleton then it has a certificate $D$ with cardinality $>n\implies \{D\}\succ\Gamma_n$.
\end{proof}

\noindent The singleton $\Gamma_4=\{$\twoch \twoch$ \}$ (also shown in figure \ref{fig_cert}) is an example of a maximal node. To see that $\Gamma_4$ has no certificate of cardinality $>4$ it is sufficient to attempt to construct such a certificate by adding a single element to (a representative of) \twoch \twoch . For example, we can add the new element to form the $5$-order \twoch \twoch \oneach , but this $5$-order is not a certificate of $\Gamma_4$ since it contains the \twoch \twoach \ as a $4$-convex-suborder. Continuing in this way, we find that it is impossible to form a certificate of $\Gamma_4$ by adding an element to \twoch \twoch. Indeed, \twoch \twoch  is the unique certificate of $\Gamma_4$.

The existence of maximal nodes implies the existence of finite inextendible paths. We can characterise finite inextendible paths as follows:

\begin{proposition}\label{lemma_1902} An inextendible path $\mathcal{P}$ is finite if and only if it contains a singleton $\{C_n\}$, where $C_n$ is not the $n$-chain or the $n$-antichain. \end{proposition}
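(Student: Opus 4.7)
The plan is to treat the two implications separately, using Lemma~\ref{lemma_max_sing} and Proposition~\ref{lemma_1_211020} as the main inputs.

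For the ``only if'' direction, a finite inextendible path ends at a maximal node, which by Lemma~\ref{lemma_max_sing} is a singleton $\{C_n\}$. I would then observe that both $\{(n+1)\text{-chain}\}$ and $\{(n+1)\text{-antichain}\}$ are genuine convex-covtree nodes (each is its own certificate), and that ${{\mathcal O}}_c^{-}(\{(n+1)\text{-chain}\}) = \{n\text{-chain}\}$ and ${{\mathcal O}}_c^{-}(\{(n+1)\text{-antichain}\}) = \{n\text{-antichain}\}$, because the $(n+1)$-chain (respectively antichain) has a unique max/min-removal type. Hence both $\{n\text{-chain}\}$ and $\{n\text{-antichain}\}$ have ancestors in convex-covtree and therefore cannot be maximal, so the singleton at the top of $\mathcal{P}$ must have its underlying order neither a chain nor an antichain.

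For the ``if'' direction, suppose $\mathcal{P}$ contains $\{C_n\}$ with $C_n$ neither the $n$-chain nor the $n$-antichain, and assume for contradiction that $\mathcal{P}$ is infinite. By Proposition~\ref{lemma_1_211020}, $\mathcal{P}$ has a certificate $C$, which must itself be infinite since it certifies nodes of arbitrary cardinality; in particular every $n$-convex-suborder of $C$ is isomorphic to $C_n$. The crux is to show that no such infinite $C$ exists. By Ramsey's theorem for pairs (colouring unordered pairs in $C$ as ``comparable'' or ``incomparable''), $C$ contains either an infinite antichain or an infinite chain as a suborder. An infinite antichain $\{a_1,a_2,\ldots\}\subseteq C$ immediately supplies the $n$-antichain as an $n$-convex-suborder of $C$, since any $n$-subset of an antichain is convex (nothing lies strictly between two incomparable elements), forcing $C_n$ to be the $n$-antichain---contradiction.

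In the remaining case $C$ has no infinite antichain but does contain an infinite chain. Using local finiteness I would refine to a saturated infinite chain $c_0\prec c_1\prec\cdots$ in which each consecutive pair is a cover, and then argue that some $n$-segment $\{c_i,\ldots,c_{i+n-1}\}$ is convex in $C$; this yields the $n$-chain as an $n$-convex-suborder and forces $C_n$ to be the $n$-chain---again a contradiction. The main obstacle will be this convex-segment extraction, since a saturated chain in a locally finite poset of finite width need not have convex $n$-segments (already the diamond on four elements illustrates this). The leverage I plan to use is the homogeneity hypothesis itself: if every $n$-segment of the saturated chain contained an ``intrusion'' $z$ strictly between some $c_j$ and $c_k$, then combining $z$ with appropriate pieces of the chain would produce $n$-convex-subcausets of $C$ of mutually non-isomorphic order type (some mixing a cover with an incomparability, others drawn purely from the chain and so being the $n$-chain), contradicting the hypothesis that all $n$-convex-suborders of $C$ lie in a single isomorphism class $C_n$. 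Turning this heuristic into a rigorous finite case analysis on the position of the intrusion relative to the chain is the bulk of the technical work.
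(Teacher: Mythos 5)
Your ``only if'' direction is fine and is essentially the paper's: a finite inextendible path terminates in a maximal node, which by Lemma~\ref{lemma_max_sing} is a singleton whose only certificate is its own element, and the chain and antichain singletons are excluded because the $(n+1)$-chain (resp.\ $(n+1)$-antichain) is a certificate of $\{n\text{-chain}\}$ (resp.\ $\{n\text{-antichain}\}$) of cardinality $>n$, so those singletons are not maximal. (Minor point: the nodes $\{(n+1)\text{-chain}\}$ sit \emph{above} $\{n\text{-chain}\}$ in the tree order, so ``ancestors'' is the wrong word, but the substance is right.) Your ``if'' direction takes a different route from the paper --- you extract a single infinite certificate of the whole path via Proposition~\ref{lemma_1_211020} and apply infinite Ramsey, whereas the paper proves the sharper quantitative statement that \emph{every} certificate of $\{C_n\}$ has cardinality $<n^2$ (via $w(C)<n$ and $h(C)\le n$) and then contradicts the existence of nodes $\Gamma_N$ with $N>n^2$ on the path. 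Your antichain branch is correct: a finite antichain is vacuously a convex-subcauset, so an infinite antichain forces $C_n$ to be the $n$-antichain.

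The genuine gap is exactly where you flag it: the infinite-chain branch. What you have written there is a heuristic, not an argument. The sets you propose to build by ``combining $z$ with appropriate pieces of the chain'' are not obviously convex in $C$ --- convexity is a condition relative to the ambient order, and an intruding element $z$ between $c_j$ and $c_k$ may itself have further elements between it and the $c_i$'s, so none of these hand-assembled $n$-element sets need be convex-subcausets, and the claim that they would realise two non-isomorphic order types is unsubstantiated. Since this is the only mechanism by which your proof produces a second $n$-convex-suborder, the contradiction never materialises and the proof does not close. The paper's resolution of precisely this obstruction is worth adopting: take a chain of length $n+1$ and pass to its \emph{convex hull} $\mathcal{H}$, which is a finite interval and hence genuinely a convex-subcauset with a unique maximal and a unique minimal element; then peel elements off $\mathcal{H}$ in two different orders (all from the bottom, versus first the top then the bottom) --- each peeling sequence must terminate in a copy of $C_n$, and since $C_n$ inherits a unique maximal element from $\mathcal{H}$, an induction on levels forces $\mathcal{H}$ to be a chain, whence the $n$-chain is an $n$-convex-suborder of $C$ and $C_n$ must be the $n$-chain, the desired contradiction. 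Without this (or some equally rigorous substitute), the proposition is not proved.
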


\noindent To prove proposition \ref{lemma_1902} we will need:
\begin{lemma} Let $C_n$ be an $n$-order that is not the $n$-chain or the $n$-antichain. Then every certifcate of $\{C_n\}$ has cardinality less than $n^2$.
\end{lemma}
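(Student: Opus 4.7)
My plan is to bound the width and height of any finite certificate $D$ of $\{C_n\}$ separately and conclude via Mirsky's inequality $|D|\le w(D)\cdot h(D)$.

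The width bound is the easy step: any antichain in any causet is automatically convex, since two incomparable elements admit no element strictly between them, so an antichain of size $n$ in $D$ would be an $n$-convex-suborder isomorphic to $C_n$. Since $C_n$ is not the $n$-antichain, this gives $w(D)\le n-1$.

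For the height I would aim to show $h(D)\le n-1$ by contradiction. Suppose $D$ contains a chain $y_1\prec y_2\prec\cdots\prec y_n$; we may assume (by passing to a saturated initial segment of a maximal chain of length $\ge n$) that each $y_i\prec y_{i+1}$ is a covering relation. Set $I:=[y_1,y_n]\subseteq D$, which is convex in $D$ and is therefore itself a certificate of $\{C_n\}$. If $|I|=n$ then $I$ is a convex $n$-chain, forcing $C_n$ to be the $n$-chain --- a contradiction. Otherwise $|I|\ge n+1$, and the heart of the argument is to produce two non-isomorphic $n$-convex-suborders of $I$ by "peeling" at the bottom (deleting $y_1$, then the new minimum, and so on; this preserves convexity because the removed element is always a minimum) versus dually at the top. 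Both peeled suborders must be $\cong C_n$. A case analysis on the indices $i^{*}=\max\{j:y_j\prec z\}$ and $j^{*}=\min\{j:z\prec y_j\}$, which describe where a non-chain element $z\in I$ sits relative to the chain --- together with the covering property, which forces $j^{*}\ge i^{*}+2$ and $z$ incomparable to $y_k$ for $i^{*}<k<j^{*}$ --- shows that bottom- and top-peeling yield causets with mismatched numbers of minimal versus maximal elements (when $i^{*}=1$ or $j^{*}=n$), or else mismatched chain-lengths on either side of the local branching around $z$; so the two $n$-convex-suborders are not isomorphic, contradicting both being $\cong C_n$. Hence $h(D)\le n-1$, and combining with the width bound gives $|D|\le(n-1)^2<n^2$ by Mirsky.

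The main obstacle is the height step. The peeling-asymmetry argument is cleanest when $|I|=n+1$, where there is exactly one non-chain element $z$; for $|I|>n+1$ one either iterates the peeling carefully (tracking how many non-chain elements survive at each stage) or --- more cleanly --- chooses the original chain to minimise $|[y_1,y_n]|$ over all length-$n$ chains in $D$ and shows that any non-chain element in a minimising interval admits a swap-and-reduce move that contradicts the minimality.
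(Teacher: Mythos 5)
Your overall strategy is the same as the paper's: bound width and height of a certificate separately and multiply, with the width step identical (antichains are automatically convex, so $w(D)\leq n-1$) and the height step built around the convex interval spanned by a long chain. The genuine gap is in the height step. Your ``peeling asymmetry'' argument is only worked out --- and even then only sketched --- for $|I|=n+1$, where there is a single extra element $z$ and one can indeed verify (e.g.\ by comparing numbers of minimal/maximal elements or the multisets of past-cardinalities) that bottom- and top-peeling give non-isomorphic $n$-orders. For $|I|>n+1$ neither of your proposed repairs is carried out, and neither is routine: iterated peeling destroys the interval structure after the first step (the peeled causet acquires several minimal elements, so there are many inequivalent peeling sequences to compare), while the minimising-interval idea founders because an extra element of $[y_1,y_n]$ need only be comparable to $y_1$ and $y_n$, so there is in general no way to ``swap'' it into a length-$n$ chain and shrink the interval. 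As written, the height bound is asserted rather than proved.

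The paper closes exactly this gap by using the peelings differently. It takes the convex hull $\mathcal{H}$ of an $(n+1)$-chain, so that $\mathcal{H}$ is an interval with a unique maximal and unique minimal element, and argues level by level: peeling all the excess off the bottom shows that $C_n$ itself must have a unique maximal element; peeling the top element first and then the bottom shows that level $h(\mathcal{H})-1$ of $\mathcal{H}$ is a singleton; iterating, every level of $\mathcal{H}$ is a singleton, so $\mathcal{H}$ is a chain and the $n$-chain is a convex-suborder --- contradicting $C_n\neq$ the $n$-chain. The point is that one never compares two peelings for isomorphism; one only reads off, from the single fact that every $n$-element peeling equals $C_n$, that each successive top level of $\mathcal{H}$ is a singleton. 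If you import this step your proof goes through; note also that the resulting bound $h\leq n$ (rather than your claimed $h\leq n-1$) already suffices, since $n(n-1)<n^2$.
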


\begin{proof} For any (finite or infinite) order $C$, let $w(C)$ and $h(C)$ denote the width and height of $C$, respectively. Note that $|C|\leq h(C)w(C)$. Additionally, if $C$ is a certificate of $\{C_n\}$ then $w(C)=w(C_n)<n$. We will show that if $C$ is a certificate of $\{C_n\}$ then $h(C)\leq n$ and the result follows.

Let $C$ be an order with $h(C)>n$ and suppose for contradiction that $C$ is certificate of $\{C_n\}$. Let $D$ be a chain of length $n+1$ in $C$ and let $\mathcal{H}$ be the convex hull of $D$. Then $|\mathcal{H}|=n+k$ for some $k>0$. Note that $\mathcal{H}$ is an interval by construction, \textit{i.e.} it has a single maximal element and a single minimal element. We will now show by induction that $\mathcal{H}$ is a chain and therefore $C$ is not a certificate of $\{C_n\}$.

One way to obtain $C_n$ from $\mathcal{H}$ is to remove the minimal element of $\mathcal{H}$ to form the order $\mathcal{H}_{-1}$, then remove a minimal element of $\mathcal{H}_{-1}$ to form $\mathcal{H}_{-2}$ and so on until $\mathcal{H}_{-k}=C_n$. Since $\mathcal{H}$ has a unique maximal element, $\mathcal{H}_{-k}=C_n$ has a unique maximal element.

Another way to obtain $C_n$ from $\mathcal{H}$ is to remove the maximal element of $\mathcal{H}$ to form the order $\mathcal{H}^{-1}$, then remove a minimal element of $\mathcal{H}^{-1}$ to form $\mathcal{H}^{-1}_{-1}$, then remove a minimal element of $\mathcal{H}^{-1}_{-1}$ to form $\mathcal{H}^{-1}_{-2}$ and continue to remove minimal elements until $\mathcal{H}^{-1}_{-k+1}=C_n$. The top level of $\mathcal{H}^{-1}_{-k+1}=C_n$ is level $h(C)-1$ of $\mathcal{H}$, and since $C_n$ has a unique maximal element we learn that $\mathcal{H}$ has only one element at level $h(C)-1$.

Suppose $\mathcal{H}$ has only one element at each of the levels $h(C), h(C)-1, ..., h(C)-r+1$ for some $r<h(C)$. Then $\mathcal{H}^{-r}_{-k+r}=C_n$ is constructed by removing the top $r$ levels of $\mathcal{H}$ and therefore the top level of $\mathcal{H}^{-r}_{-k+r}=C_n$ is level $h(C)-r$ of $\mathcal{H}$. Since $\mathcal{H}^{-r}_{-k+r}=C_n$ has a unique maximal element we learn that $\mathcal{H}$ has only one element at level $h(C)-r$. Therefore, by induction $\mathcal{H}$ has a single element at each level, \textit{i.e.} $\mathcal{H}$ is a chain.
\end{proof}

\begin{proof}[Proof to proposition \ref{lemma_1902}]
Let $\{C_n\}\in\mathcal{P}$ and suppose for contradiction that $\mathcal{P}$ is infinite. Then for any $N>n^2$ there exists a node $\Gamma_N\in\mathcal{P}$. Let $C$ denote a certificate of $\Gamma_N$ and note that $|C|\geq N> n^2$. Since $\Gamma_N\succ \{C_n\}$, $C$ is a certificate of $\{C_n\}$. Contradiction. That the converse is true follows from the fact that every maximal node is a singleton (lemma \ref{lemma_max_sing}).
\end{proof}

We can also identify the certificates of the finite inextendible paths:
\begin{lemma}\label{lemma230221} If $\mathcal{P}=\Gamma_1\prec\Gamma_2\prec..\prec\Gamma_k$ is a finite inextendible path then $C_k\in\Gamma_k$ is the unique certificate of $\mathcal{P}$.
\end{lemma}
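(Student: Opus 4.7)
The plan is to leverage the structural pin-down of maximal nodes from Lemma \ref{lemma_max_sing}, combined with the fact that certificates propagate downward along paths in convex-covtree. The statement should fall out almost immediately from these two ingredients.

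First, since $\mathcal{P}$ is a finite inextendible path, its top node $\Gamma_k$ has no descendants in convex-covtree, i.e.\ $\Gamma_k$ is a maximal node. Applying Lemma \ref{lemma_max_sing} directly, we conclude that $\Gamma_k = \{C_k\}$ for some $k$-order $C_k$, and moreover that $C_k$ is the unique certificate of $\Gamma_k$. This already identifies the candidate certificate and pins down its cardinality.

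Next, I would show that $C_k$ is in fact a certificate of every node in $\mathcal{P}$, not merely of $\Gamma_k$. By definition of convex-covtree, $\Gamma_{i-1} = \mathcal{O}_c^{-}(\Gamma_i)$ for each $i \le k$. The key observation --- implicit in the proof that $\mathcal{O}_c^{-}$ preserves $\chi$ --- is that if $C$ certifies $\Gamma_n$ (so that the $n$-convex-suborders of $C$ are exactly the elements of $\Gamma_n$), then the $(n-1)$-convex-suborders of $C$ are precisely the $(n-1)$-convex-suborders of the elements of $\Gamma_n$, which by definition is $\mathcal{O}_c^{-}(\Gamma_n)$. Hence $C$ certifies $\mathcal{O}_c^{-}(\Gamma_n) = \Gamma_{n-1}$ as well. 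Iterating this descent from $i=k$ down to $i=1$ shows that $C_k$ certifies each $\Gamma_i$ in $\mathcal{P}$, so by Definition \ref{def_240221}, $C_k$ is a certificate of $\mathcal{P}$.

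For uniqueness, suppose $D$ is any certificate of $\mathcal{P}$. Then in particular $D$ is a certificate of $\Gamma_k$, and by Lemma \ref{lemma_max_sing} the only certificate of $\Gamma_k$ is $C_k$, so $D = C_k$ (as orders).

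There is no serious obstacle: the result is essentially a corollary of Lemma \ref{lemma_max_sing} together with the descent-of-certificates property. The one technical point meriting care is verifying the precise statement that a certificate of $\Gamma_n$ is automatically a certificate of $\mathcal{O}_c^{-}(\Gamma_n)$ --- this uses the biconditional that an $(n-1)$-order is an $(n-1)$-convex-suborder of $C$ if and only if it is a convex-suborder of some $n$-convex-suborder of $C$, which is exactly the content of the lemma establishing closure of $\chi$ under $\mathcal{O}_c^{-}$.
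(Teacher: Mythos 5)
Your proof is correct and follows essentially the same route as the paper's: existence via the downward propagation of certificates along the path, and uniqueness via the maximality of $\Gamma_k$ (the paper rules out a larger certificate $C_l$ directly by noting $\{C_l\}\succ\Gamma_k$ would contradict inextendibility, which is exactly the argument inside Lemma \ref{lemma_max_sing} that you cite instead). Your version is somewhat more explicit than the paper's terse ``Clearly, $C_k$ is a certificate of $\mathcal{P}$,'' but the substance is identical.
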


\begin{proof} Clearly, $C_k$ is a certificate of $\mathcal{P}$ and there are no other certificates of $\mathcal{P}$ with cardinality $\leq k$. Suppose $C_l$ is a certificate of $\mathcal{P}$ with cardinality $l>k$. Then $\{C_l\}\succ\Gamma_k$. Contradiction.
\end{proof}

\noindent A corollary is that the corresponding sample space contains spacetimes of finite volume, namely the certificates of the finite inextendible paths. An $n$-order $C_n$ is an element of the sample space if there is no order $D\not=C_n$ whose only $n$-convex-suborder is $C_n$. For example, the sample space contains the $4$-order \twoch \twoch , but it does not contain the ``L'' order,  \twoch \oneach , since $\{  \twoch \oneach \}\prec \{\twoch \twoch \}$.

\begin{lemma} The sample space contains countably many finite orders. \end{lemma}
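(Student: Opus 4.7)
The plan is essentially to observe that this is a subset-of-a-countable-set argument, and the work has already been done by the preceding results. I would proceed as follows.

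First, I would identify what the finite orders in the sample space actually are. By the definition of the sample space, a finite order $C$ lies in it precisely when $C$ is the certificate of some finite inextendible path in convex-covtree. By Proposition \ref{lemma_1902}, every finite inextendible path terminates at a maximal singleton $\{C\}$, and by Lemma \ref{lemma230221} that terminal singleton has $C$ as its unique certificate, which is necessarily the certificate of the whole path. So the finite orders in the sample space are in bijection with the maximal singletons $\{C\}$ of convex-covtree, and in particular they form a subset of the collection of all finite orders.

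Second, I would invoke the fact, recorded earlier in section \ref{subsec_orders}, that $\Omega(n)$ is finite for every $n \in \mathbb{N}$. The set of all finite orders is
\[
\bigsqcup_{n \in \mathbb{N}} \Omega(n),
\]
which is a countable union of finite sets and hence countable. Since the finite orders in the sample space form a subset of this countable set, they are themselves countable.

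There is no real obstacle here: once one recognises that the finite orders in the sample space embed into $\bigsqcup_n \Omega(n)$, countability is automatic from the countability of $\mathbb{N}$ and the finiteness of each $\Omega(n)$. If one wanted to be slightly more informative, one could remark (using the analysis of maximal nodes above) that the finite orders in the sample space exclude both the $n$-chains and the $n$-antichains for every $n$, since for these orders the singleton $\{C_n\}$ is always dominated by the singleton of the corresponding $(n+1)$-chain or $(n+1)$-antichain; but this refinement is not needed for the statement of the lemma.
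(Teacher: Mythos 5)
Your argument is correct for the direction it addresses, but it proves only the easy half of the statement and omits the half that the paper's proof is actually devoted to. You show that the finite orders in the sample space form a subset of $\bigsqcup_{n}\Omega(n)$ and are therefore \emph{at most} countable; that is immediate. The paper's proof, by contrast, is a lower-bound argument: it sets $Q(n)$ to be the number of singletons $\{C_n\}$ at level $n$ of convex-covtree with $C_n$ neither the $n$-chain nor the $n$-antichain, notes that each such singleton lies on some finite inextendible path (Proposition \ref{lemma_1902}) while no two of them lie on the same path (an inextendible path meets each level at most once), and concludes that there are at least $\lim_{n\rightarrow\infty}Q(n)$ finite inextendible paths --- and hence, by Lemma \ref{lemma230221}, at least that many distinct finite orders in the sample space. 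Read in context (``the sample space \emph{contains} countably many finite orders,'' set against the uncountably many infinite ones), the content of the lemma is that this collection is countably \emph{infinite}, not merely countable.

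Your first paragraph does set up the right structure --- the correspondence between finite orders in the sample space and maximal singletons --- but you never argue that there are infinitely many maximal singletons; as written, your proof would go through unchanged even if the sample space contained no finite orders at all. To close the gap you need either the paper's counting of the $Q(n)$ singletons level by level, or an explicit infinite family of maximal singletons generalising the example of the disjoint union of two 2-chains discussed before Proposition \ref{lemma_1902}.
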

\begin{proof} Let $Q(n)$ denote the number of singletons $\{C_n\}$ at level $n$ in convex-covtree, where $C_n$ is not the $n$-chain or the $n$-antichain. Each of these $Q(n)$ nodes is in at least one finite path and no two are in the same path. Therefore there are at least $\lim_{n\rightarrow\infty}Q(n)$ finite inextendible paths. \end{proof}

It may seem that the sample space is entropically dominated by the infinite orders, as there are uncountably many of these and only countably many finite orders. But if one assigns transition probabilities uniformly such that the probabilities to transition from a given node of convex-covtree to any of its children are equal, then the event that spacetime has finite cardinality happens with probability $>\frac{1}{22}$ (since this is the probability of reaching a singleton that does not contain a chain or an antichain by level 3). By Proposition   \ref{lemma_1902} the models  which almost surely produce infinite universes are exactly those that satisfy $\mathbb{P}(\Gamma)=0$ whenever $\Gamma$ is a singleton node that does not contain a chain or an antichain.\footnote{For any $n>1$, if $\Gamma_n$ is a singleton that contains a chain then it is contained in a unique inextendible path,  $\{\ \oneach \}\prec \{\twoch\}\prec\{\threech\}\prec ... \ .$ Similarly, if $\Gamma_n$ is a singleton that contains an antichain  then it is contained in a unique inextendible path,
$\{\oneach \}\prec \{\twoach\}\prec\{\threeach \ \ \ \ \ \}\prec \dots \ .$}

\subsection{Infinite paths and $\Z$-covtree}\label{subsec_inf_paths}
We now prove that:
\begin{proposition}\label{lemma_1_211020}
Every infinite path in convex-covtree has a certificate.
\end{proposition}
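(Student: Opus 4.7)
The plan is to prove this by a compactness / König's lemma argument, paralleling the approach used in \cite{Dowker:2019qiz} for covtree: I will extract an infinite labeled causet $\tilde{E}$ on $\mathbb{N}$ from a sequence of finite labeled certificates and show that its underlying order $E$ is a certificate of every $\Gamma_n$ in the path. First I observe that because the path is infinite, every $\Gamma_n$ is non-maximal, so by lemma \ref{lemma_max_sing} it admits certificates of unboundedly large cardinality: any certificate of $\Gamma_m$ with $m\geq n$ is automatically a certificate of $\Gamma_n$ (by iterating $\mathcal{O}_c^{-}$), and such a certificate has cardinality at least $m$. For each $m$ I would fix a finite labeled certificate $\tilde{C}_m$ of $\Gamma_m$ with a natural labeling on the ground set $\{1,\dots,k_m\}$ for some $k_m\geq m$.

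Next, since for each fixed $M$ there are only finitely many labeled causets on $\{1,\dots,M\}$, a standard diagonal extraction (equivalently König's lemma on the finitely branching tree whose level-$M$ nodes are the restrictions $\tilde{C}_m|_{\{1,\dots,M\}}$ that occur for infinitely many $m$) yields a subsequence $(\tilde{C}_{m_j})$ such that for every $M$ the restriction $\tilde{C}_{m_j}|_{\{1,\dots,M\}}$ is eventually constant, equal to some $\tilde{E}_M$. The $\tilde{E}_M$'s are consistent under further restriction, and their union $\tilde{E}:=\bigcup_M\tilde{E}_M$ is a labeled causet on $\mathbb{N}$ with underlying order $E$. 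The easy containment, that every $n$-convex-suborder of $E$ lies in $\Gamma_n$, follows from the key natural-labeling observation: if $S\subset\tilde{E}$ is a convex-subcauset of cardinality $n$ with labels in $\{1,\dots,M\}$, any $z$ strictly between two elements of $S$ in the order carries a label strictly between theirs, hence a label in $\{1,\dots,M\}$; this means that convexity of $S$ in $\tilde{E}$, in $\tilde{E}_M$, and in $\tilde{C}_{m_j}$ itself all coincide for $j$ large enough, so whenever $m_j\geq n$ the isomorphism class $[S]$ lies in $\Gamma_n$.

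The main obstacle is the reverse containment: showing that every $F\in\Gamma_n$ actually appears as a convex-suborder of $E$. A naive compactness limit could fail here because, although $F$ is a convex-subcauset of each $\tilde{C}_{m_j}$ with $m_j\geq n$, the labels of its copies can drift to arbitrarily high values as $j$ grows and thus be absent from any fixed $\tilde{E}_M$. To handle this I would interleave, within the diagonal extraction, an enumeration of the countable collection of pairs $(n,F)$ with $F\in\Gamma_n$ as $(n_1,F_1),(n_2,F_2),\dots$: at stage $k$ of the construction I would further thin the subsequence so that a copy of each $F_i$ with $i\leq k$ is locked into $\tilde{E}_{M_k}$ at labels bounded by a fixed value $B_k$, using the fact that among the infinitely many labeled certificates of $\Gamma_m$ with $m\geq n_i$ there is freedom to choose natural labelings that place a prescribed convex-subcauset near the start of the labeling (for instance, by extending a natural labeling of its down-set in the ambient certificate). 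The technical heart of the argument is therefore an extension lemma of the form: any labeled partial causet on $\{1,\dots,M\}$ arising as the restriction of a labeled certificate of every $\Gamma_m$ admits, on a larger ground set $\{1,\dots,M'\}$, a labeled extension with the same restriction-of-every-certificate property that moreover contains a prescribed $F\in\Gamma_n$ as a convex-subcauset; iterating this lemma through the enumeration and applying pigeonhole to the finite set of possible labeled copies produces the required $\tilde{E}$.
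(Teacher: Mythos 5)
There is a genuine gap, and it is structural rather than technical. Your whole construction labels each finite certificate by an initial segment $\{1,\dots,k_m\}$ of $\mathbb{N}$ and takes a limit whose ground set is $\mathbb{N}$; the resulting order $E$ is therefore necessarily past-finite. But some infinite paths in convex-covtree have \emph{no} past-finite certificate at all. Take the path whose level-$n$ node is the set of $n$-convex-suborders of the two-way infinite chain with a single diamond inserted in the middle (this is the path $\mathcal{P}$ discussed in the paper's appendix): for $n>3$ every finite certificate of $\Gamma_n$ is a diamond sandwiched between two $(n-3)$-chains, and any certificate of the whole path must be two-way infinite, hence lies in $\Omega_{\Z}$ only. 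Your diagonal limit applied to this path produces the one-way infinite chain (the diamond drifts to ever-higher labels and is lost), which is not a certificate. This is exactly the ``reverse containment'' failure you flag, but the fix you propose cannot work: the extension lemma at its heart is false. In every finite certificate of $\Gamma_m$ for this path, every copy of the diamond has a down-set of cardinality at least $m-3$, so in \emph{any} natural labeling by $\{1,\dots,k_m\}$ the diamond's minimal element carries a label at least $m-2$; there is no freedom to ``place a prescribed convex-subcauset near the start of the labeling'' once $m$ exceeds your bound $B_k$. (This is also why the paper proves in its appendix that there is no $\mathbb{N}$-analogue of $\Z$-covtree: the metric space of $\mathbb{N}$-certificates modulo convex-rogues is not complete, so precisely this kind of compactness argument anchored at the bottom fails.)

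The paper's proof sidesteps this by never anchoring the labeling to a minimum. Using the key lemma that for any $\Gamma_n$ in an infinite path some later node $\Gamma_m$ \emph{contains as an element} a certificate $C_m$ of $\Gamma_n$, it builds a nested sequence $\lc{C}_{m_1}\subset\lc{C}_{m_2}\subset\cdots$ in which each $\lc{C}_{m_k}$ sits inside $\lc{C}_{m_{k+1}}$ as a convex-subcauset (not as a stem), with ground sets that are integer intervals free to extend downward as well as upward; the union then lands in $\lc{\Omega}_{\Z}$, $\lc{\Omega}_{\Z^-}$ or $\lc{\Omega}_{\N}$ as the path dictates. Your first paragraph (unbounded certificates, inheritance down the path) and your ``easy containment'' argument are both correct, but to repair the proof you would need to replace the $\mathbb{N}$-anchored König extraction by a two-sided nesting of this kind.
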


Together, lemma \ref{lemma230221} and proposition \ref{lemma_1_211020} enable us to interpret a walk on convex-covtree as a process in which an order grows---they guarantee that each realisation of the walk will produce some order. A path has more than one certificate if its certificates are convex-rogues and, in this case, which certificate is the growing order is up for interpretation (\textit{e.g.} we can consider all certificates of a given path to be physically equivalent). 

To prove proposition \ref{lemma_1_211020}, we adapt the algorithm from \cite{Dowker:2019qiz} that generates a certificate for any infinite path $\mathcal{P}$.
We will need the concept of ``minimal certificates'':
\begin{definition}
Given some $\Gamma_n$, we order its finite certificates by inclusion. Let $C_1$ and $C_2$ be two finite certificates of $\Gamma_n$. We say $C_1\leq C_2$ if and only if $C_1$ is a convex-suborder in $C_2$. A \textbf{minimal certificate} of $\Gamma_n$ is minimal in this order.  
\end{definition}
We will also need the following lemma:
\begin{lemma}\label{claim_theorem_1_prep} Let $\mathcal{P}=\Gamma_1\prec \Gamma_2 \prec \Gamma_3\prec \dots$ be an infinite path in convex-covtree. Then for any $\Gamma_n\in\mathcal{P}$ there exists some $\Gamma_m\in\mathcal{P}$ that contains a certificate of $\Gamma_n$.  
\end{lemma}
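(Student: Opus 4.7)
The plan is to extract, from each node $\Gamma_m$ in the path above $\Gamma_n$, a minimal finite certificate $C_m$ of $\Gamma_n$ sitting inside a finite certificate $E_m$ of $\Gamma_m$, and then apply a pigeonhole argument on the $C_m$'s. Concretely, for each $m \geq n$ I first pick $E_m$ a finite certificate of $\Gamma_m$, using that every node has a finite certificate (an infinite certificate always implies a finite one, as noted after Definition \ref{cert_def}). Iterating $\mathcal{O}_c^{-}$ along the path---together with the routine extension fact that every $(j{-}1)$-convex-suborder of a finite order sits inside some $j$-convex-suborder of it---shows that $E_m$ is simultaneously a certificate of $\Gamma_j$ for every $n \leq j \leq m$, and in particular of $\Gamma_n$. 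Within $E_m$, peel off min/max elements as long as the result remains a certificate of $\Gamma_n$; this yields a minimal finite certificate $C_m \subseteq E_m$ of $\Gamma_n$, in the convex-suborder order defined just above.

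The main obstacle is to show that the cardinalities $|C_m|$ are uniformly bounded by some $B = B(\Gamma_n)$. The structural input is that in a minimal certificate $C$, every min or max element $x$ must be essential: there exists $\Phi_x \in \Gamma_n$ whose every convex-embedding into $C$ uses $x$, since otherwise $C\setminus\{x\}$ would itself be a certificate, contradicting minimality. Any fixed embedding of $\Phi_x$ has only $n$ elements, so for each $\Phi \in \Gamma_n$ at most $n$ min/max elements $x$ satisfy $\Phi_x = \Phi$, bounding the total number of min/max elements of $C$ by $n\,|\Gamma_n|$. A parallel essentiality analysis applied to proper convex-subs of $C$ obtained through iterated min/max peeling then bounds the height and width (and hence the cardinality) of $C$. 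Establishing this combinatorial bound rigorously---in particular controlling the non-extremal elements of $C$---is the most delicate step of the proof, and is where I expect most of the work to go.

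With $|C_m| \leq B$ in hand, the sequence $(C_m)_{m\geq n}$ takes values in the finite set of orders of cardinality at most $B$, so by the pigeonhole principle some fixed order $C$ equals $C_m$ for infinitely many $m$; pick any such $m$ with $m \geq |C|$. Then $C$ is a convex-suborder of $E_m$ of cardinality $|C| \leq m$, i.e.\ $C$ is a $|C|$-convex-suborder of $E_m$. Since $E_m$ certifies $\Gamma_{|C|}$ (by the first paragraph), the set of $|C|$-convex-suborders of $E_m$ is exactly $\Gamma_{|C|}$, whence $C \in \Gamma_{|C|}$. As $C$ is itself a certificate of $\Gamma_n$, the node $\Gamma_{|C|} \in \mathcal{P}$ contains a certificate of $\Gamma_n$, which is the required conclusion with $\Gamma_m := \Gamma_{|C|}$.
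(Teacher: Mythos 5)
Your strategy is essentially the paper's: reduce to minimal certificates of $\Gamma_n$ (minimal under convex-suborder inclusion), establish a uniform cardinality bound $B(\Gamma_n)$ for them, locate one inside a finite certificate $D$ of a node of $\mathcal{P}$ at a high enough level, and conclude that, since $D$ certifies the node of $\mathcal{P}$ at every intermediate level, that minimal certificate is itself an element of the node at its own cardinality. Your pigeonhole step is redundant: once $|C_m|\leq B$ for all $m$, you can simply take $m=B$ and use $C_B$ directly, which is in effect what the paper does --- it jumps straight to level $N$ and extracts a minimal certificate of $\Gamma_n$ as a convex-suborder of a finite certificate of $\Gamma_N$. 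The final deduction $C\in\Gamma_{|C|}$ is correct and identical in both arguments.

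The gap is the one you flag yourself: the uniform bound. Your essentiality argument correctly shows that a minimal certificate $C$ has at most $n|\Gamma_n|$ minimal-or-maximal elements (each such $x$ must lie in every convex copy of some $\Phi_x\in\Gamma_n$, and a single copy has only $n$ elements), but this says nothing about the interior of $C$: removing a non-extremal element does not leave a convex-suborder, so interior elements cannot be peeled and your essentiality dichotomy does not apply to them. ``A parallel analysis bounds the height and width'' is not an argument, and bounding $|C|$ by $h(C)w(C)$ would in any case require separate control of both quantities. For what it is worth, the paper asserts without further proof that every minimal certificate satisfies $n<|C|\leq n|\Gamma_n|$ when $\Gamma_n$ is not a singleton (the singleton case is handled separately and trivially there: any element of any higher node is a certificate of a singleton below it). So you have correctly isolated the crux of the lemma, and your extremal-element count even reproduces the paper's bound numerically, but as written it only controls the extremal elements and your proof does not close the step on which everything else depends.
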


\begin{proof}
First, note that it follows from the definition of convex-covtree that if $\Gamma_n$ is a singleton and $\Gamma_m\succ\Gamma_n$ then any $C\in\Gamma_m$ is a certificate of $\Gamma_n$. If $\Gamma_n$ is not a singleton, then every minimal certificate $C$ of $\Gamma_n$ satisfies $n< \vert C\vert \leq N$ where $N := n |\Gamma_n|$. Consider $\Gamma_N\in \mathcal{P}$ and let $D$ be a finite certificate of $\Gamma_N$. Since a certificate of a node is a certificate of all the nodes below it, $D$ is a certificate of $\Gamma_n$.  
Now, at least one minimal certificate of $\Gamma_n$ occurs as a convex-suborder in $D$. Choose one, call it $C$, let $m := \vert C \vert$ and consider 
$\Gamma_m\in \mathcal{P}$.  $\Gamma_m$
is the set of all convex-suborders of cardinality $m$ in $D$ and so $C$ is an element of $\Gamma_m$.  
\end{proof}

\begin{proof}[Proof of proposition \ref{lemma_1_211020}] 
Given an infinite path $\mathcal{P}=\Gamma_1\prec \Gamma_2\prec...$, the following inductive algorithm generates an infinite nested sequence of causal sets,
\newline $\lc{C}_{m_1}\subset \lc{C}_{m_2}\subset \lc{C}_{m_3} \subset...$ :
\newline \textit{ Step 1:}
\newline 1.0) Pick some natural number $m_0>0$ and 
consider $\Gamma_{m_0}\in \mathcal{P}$.
\newline 1.1) By  lemma \ref{claim_theorem_1_prep}, there exists some $\Gamma_{m_1}\in \mathcal{P} $
that contains some certificate $C_{m_1}$ of $\Gamma_{m_0}$. Pick a representative $\lc{C}_{m_1}$ of $C_{m_1}$.
\newline 1.2) Go to step 2.
\newline \textit{ Step} $k>1$:
\newline k.1) By lemma \ref{claim_theorem_1_prep}, there exists some $\Gamma_{m_{k}}\in\mathcal{P}$ that contains some certificate $C_{m_k}$ of $\Gamma_{m_{k-1}}$. Pick a representative  $\lc{C}_{m_k}$ of  ${C}_{m_k}$ such that 
$\lc{C}_{m_{k-1}}$ from the previous step is a sub-causet of $\lc{C}_{m_{k}}$.
\newline k.2) Go to step $k+1$.

\vspace{2mm}
By construction, the union $\lc{C}:=\bigcup_{i=1}^{\infty} \lc{C}_{m_i}$ is order-isomorphic to a labeled certificate of $\mathcal{P}$.
If the ground-set of $\lc{C}$ is $\mathbb{Z}$, $\mathbb{N}$ or $\mathbb{Z}^-$ then $\lc{C}$ is a labeled certificate of the path $\mathcal{P}$. If $\lc{C}$ has ground-set $[p,\infty)$ for some $p\in\mathbb{Z}$ then $\lc{C}$ is order-isomorphic to some causet $\lc{D}$ with ground-set $\mathbb{N}$. In this case $\lc{D}$ is a labeled certificate of $\mathcal{P}$. If $\lc{C}$ has ground-set $(-\infty,p]$ for some $p\in\mathbb{Z}$ then $\lc{C}$ is order-isomorphic to some causet $\lc{E}$ with ground-set $\mathbb{Z}^-$. In this case $\lc{E}$ is a labeled certificate of $\mathcal{P}$. 
Since in each case $\mathcal{P}$ has a labeled certificate, every infinite path has a certificate.
\end{proof}

As previously stated, the sample space of convex-covtree contains all infinite orders and countably many (but not all) finite ones. We set out to find a covariant counterpart to alternating sequential growth, and now we see that convex-covtree is not that framework. We now ask: 
can convex-covtree can be truncated into a tree whose sample space equals $\Omega_{\mathbb{Z}}$?

 For a start, we can consider the subtree of convex-covtree that contains only the nodes that have \textit{infinite} certificates or equivalently the subtree of convex-covtree that is the union of all infinite paths. By truncating the finite inextendible paths we remove the finite orders from the sample space and proposition \ref{lemma_1_211020} guarantees that each inextendible path in this truncated covtree has a certificate in $\Omega$. 
 
 However, there is no guarantee that every path has a certificate in $\Omega_{\mathbb{Z}}$. Indeed, there exist infinite paths that only have certificates in $\Omega_{\mathbb{N}}$ and others that only have certificates in $\Omega_{\mathbb{Z}^-}$. Recall that a certificate of a path is a certificate of all its nodes. Therefore if there exists some $\Gamma_n\in \mathcal{P}$ whose infinite certificates are only in $\Omega_\mathbb{N}$ then $\mathcal{P}$ only has certificates in $\Omega_\mathbb{N}$. 
 
 For example, consider the node $\Gamma_3=\{\lambdacauset \ , \threech\}$ whose unique minimal certificate is $\lambdafour \ $. We can construct any certificate of $\Gamma_3$ by starting with its minimal certificate and then adding elements to it. In particular, if $\Gamma_3$ has a certificate in $\Omega_{\mathbb{Z}}$ or $\Omega_{\mathbb{Z}^-}$ then we should be able to grow a certificate of $\Gamma_3$ by adding an element that is spacelike or to the past of every element in $\lambdafour$. There are 5 ways to add such an element, but none produces a certificate of $\Gamma_3$ (\textit{e.g.}  $\lambdafour \ \oneach$ contains the 3-antichain as a convex-suborder). Therefore, $\Gamma_3$ has no certificates in $\Omega_{\mathbb{Z}}$ or in $\Omega_{\mathbb{Z}^-}$. Finally, note that $\Gamma_3$ does have a certificate in $\Omega_\mathbb{N}$, namely the order that contains the $\lambdacauset$ topped with an infinite chain. Therefore the infinite path containing $\Gamma_3$ only has certificates in $\Omega_\mathbb{N}$. Similarly, if there exists some $\Gamma_n\in \mathcal{P}$ all of whose infinite certificates are in $\Omega_{\mathbb{Z}^-}$ then $\mathcal{P}$ only has certificates in $\Omega_{\mathbb{Z}^-}$ (see for example the node $\{\ \vee \ , \threech\}$ and the infinite path that contains it). 
 
 The following proposition identifies the paths that have certificates in $\Omega_{\mathbb{Z}}$ and which are therefore of interest to us,
\begin{proposition}\label{prop240221} An infinite path $\mathcal{P}$ has a certificate in $\Omega_{\Z}$ if and only if every node in $\mathcal{P}$ has a certificate in $\Omega_{\Z}$. \end{proposition}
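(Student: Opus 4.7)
The $(\Rightarrow)$ direction is immediate from Definition \ref{def_240221}: a certificate of $\mathcal{P}$ certifies every node of $\mathcal{P}$, so if it lies in $\Omega_{\Z}$ then every node has a certificate in $\Omega_\Z$.

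For the converse, I plan to adapt the inductive algorithm of Proposition \ref{lemma_1_211020} to produce a nested sequence $\lc{C}_{m_1}\subset \lc{C}_{m_2}\subset \cdots$ whose union $\lc{C}$ has ground set $\Z$ and is a labeled certificate of $\mathcal{P}$. The new technical ingredient is a \emph{relabeling lemma}: given any $\Z$-labeled representative $\lc{E}$ of an order in $\Omega_\Z$ and any finite convex subcauset $\lc{A}\subset\lc{E}$, one can produce a new natural labeling of the same order, again with ground set $\Z$, in which $\lc{A}$'s elements occupy any prescribed consecutive block $[a,b]$ carrying any prescribed natural labeling of $\lc{A}$. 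To prove it, partition $\lc{E}\setminus\lc{A}$ into $X$ (elements below some element of $\lc{A}$), $Y$ (above some element of $\lc{A}$), and $Z$ (unrelated to all of $\lc{A}$); convexity of $\lc{A}$ forces these sets to be disjoint. In any natural labeling extending the prescribed one on $\lc{A}$, the elements of $X$ must receive labels $<a$, those of $Y$ labels $>b$, and those of $Z$ may go on either side. The original $\Z$-labeling of $\lc{E}$ places infinitely many elements with labels above $\max\lc{A}$ (hence in $Y\cup Z$) and infinitely many below $\min\lc{A}$ (hence in $X\cup Z$), so a short case analysis shows $Z$ can be distributed between the two sides so that both sides are infinite, giving the new labeling ground set $\Z$.

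The modified algorithm then runs as follows. At step $k>0$, I use the hypothesis to pick a $\Z$-labeled representative $\lc{E}_k$ of a $\Z$-certificate of $\Gamma_{M_k}$ with $M_k$ chosen large. Since $\lc{E}_k$ is then also a certificate of $\Gamma_{m_{k-1}}$, it contains a convex subcauset isomorphic to $\lc{C}_{m_{k-1}}$; applying the relabeling lemma, I position this copy so that it coincides with $\lc{C}_{m_{k-1}}$ on the labels $[a_{k-1},b_{k-1}]$. I then set $\lc{C}_{m_k}:=\lc{E}_k|_{[a_k,b_k]}$, choosing the window $[a_k,b_k]\supset[a_{k-1},b_{k-1}]$ so that (a) $a_k<a_{k-1}$ and $b_k>b_{k-1}$, and (b) $\lc{E}_k|_{[a_k,b_k]}$ contains at least one copy of each of the finitely many orders in $\Gamma_{m_{k-1}}$; both are achievable because the copies needed for (b) sit at finitely many finite labels. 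Condition (b) makes $C_{m_k}$ a certificate of $\Gamma_{m_{k-1}}$, playing the role of Lemma \ref{claim_theorem_1_prep} in the original proof, and choosing $M_k\geq m_k := b_k - a_k + 1$ ensures $C_{m_k}\in\Gamma_{m_k}$. Condition (a) forces $\bigcup_k[a_k,b_k]=\Z$. Each $\lc{C}_{m_k}$ is a label-convex restriction of the union $\lc{C}$, hence convex in $\lc{C}$, so the remainder of the Proposition \ref{lemma_1_211020} argument goes through unchanged to show that the $n$-convex-suborders of $\lc{C}$ are exactly $\Gamma_n$ for every $n$. Thus $\lc{C}$ is a labeled certificate of $\mathcal{P}$ with ground set $\Z$, so $\mathcal{P}$ has a certificate in $\Omega_{\Z}$.

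The main obstacle is the relabeling lemma, specifically verifying that one can simultaneously match a prescribed labeling on $\lc{A}$, keep the new labeling natural, and exhaust $\Z$; the first two properties are handled by the convexity of $\lc{A}$ (no outside element is forced between elements of $\lc{A}$ in label), and the third rests on the observation that a $\Z$-labeled causet has infinitely many elements that cannot be causally below, and infinitely many that cannot be causally above, any given finite convex subcauset.
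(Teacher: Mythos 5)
Your overall strategy is sound and the forward direction matches the paper's (it is immediate from Definition \ref{def_240221}). For the converse you and the paper part ways on the mechanism that forces the ground set of the union to be all of $\Z$. The paper also runs the window-growing algorithm of Proposition \ref{lemma_1_211020}, but at each step it only asks, via Lemma \ref{claim_theorem_1_prep}, for a representative $\lc{C}_{m_k}$ extending the previous ground set on \emph{at least one} side; when only one side can be extended it climbs one node up the path, takes an infinite certificate in $\lc{\Omega}_{\Z}$ of that node containing $\lc{C}_{m_k}$, and restricts to a window enlarged by a single element on the deficient side. You instead relabel an entire $\Z$-labeled certificate $\lc{E}_k$ of a high node so that the previously built causet occupies its old label block, then carve out a window growing on both sides. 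Both mechanisms lean on the same underlying fact (a $\Z$-labeled certificate can be arranged to contain the current finite causet at its given labels with infinitely many elements to spare on each side), so your relabeling lemma makes explicit something the paper's step k.1 uses implicitly; that is a legitimate and arguably more transparent organisation. One bookkeeping wrinkle: you need $M_k\geq m_k$ for the claim $C_{m_k}\in\Gamma_{m_k}$, yet $m_k$ is only determined after $\lc{E}_k$ (hence $M_k$) is chosen. This is harmless --- you can drop that claim and instead verify directly that any $n$-convex subcauset of $\lc{C}$ lies in some $\lc{C}_{m_k}$ with $M_k\geq n$, is then convex in $\lc{E}_k$, and hence represents an order in $\Gamma_n$ --- but as written the induction is slightly circular.

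The one genuine gap is inside the relabeling lemma. Making both sides of the block infinite is necessary but not sufficient for a natural labeling with ground set $\Z$ to exist. Two further constraints must be met. First, $Z$ cannot be distributed freely: if $z_1\prec z_2$ with $z_1,z_2\in Z$ you cannot send $z_1$ above the block and $z_2$ below it, so the split of $Z$ must respect the order. Second, the set $D$ placed below the block must admit a natural labeling by $(-\infty,a-1]$ and the set $U$ above one by $[b+1,\infty)$; by Lemma \ref{labeling_lemma} this requires $D$ to be future-finite and $U$ past-finite as suborders, which an unconstrained cardinality-balancing choice need not deliver (\textit{e.g.} if the lower side acquires an element with infinitely many elements of the lower side above it). The lemma is nonetheless true, and the cleanest repair is a finite modification of the original labeling: keep every element on the side of the block dictated by its original label, and push only the finitely many non-$\lc{A}$ elements whose original labels lie within the label-span of $\lc{A}$ out to the appropriate near side ($X$ down, $Y$ up, $Z$ either way), preserving relative order. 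Each side then differs from a label-ray of the original $\Z$-labeling by a finite set, so it is automatically infinite and future-finite (respectively past-finite); naturalness across the block boundary follows from the convexity of $\lc{A}$, and an overall shift places the block at $[a,b]$. With the lemma established in this way, the rest of your argument goes through.
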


\begin{proof} Given an infinite path $\mathcal{P}=\Gamma_1\prec \Gamma_2\prec...$ each of whose nodes has a certificate in $\Omega_{\Z}$, the following inductive algorithm generates an infinite nested sequence of causal sets, $\lc{C}_{t_1}\subset \lc{C}_{t_2} \subset...$, whose ground-sets $[r_1,s_1],[r_2,s_2],...$ respectively, satisfy $r_1>r_2>....$ and $s_1<s_2<...$:
\newline \textit{Step 1:}
\newline 1.0) Pick some natural number $m_0>0$ and 
consider $\Gamma_{m_0}\in \mathcal{P}$.
\newline 1.1) By  lemma \ref{claim_theorem_1_prep}, there exists some $\Gamma_{m_1}\in \mathcal{P} $
that contains some certificate $C_{m_1}$ of $\Gamma_{m_0}$. Pick a representative $\lc{C}_{m_1}$ of $C_{m_1}$ and set $\lc{C}_{t_1}:= \lc{C}_{m_1}$.
\newline 1.2) Go to step 2.
\newline \textit{ Step} $k>1$:
\newline k.1) By lemma \ref{claim_theorem_1_prep}, there exists some $\Gamma_{m_{k}}\in\mathcal{P}$ that contains some certificate $C_{m_k}$ of $\Gamma_{t_{k-1}}\in\mathcal{P}$. Additionally, there exists a representative $\lc{C}_{m_k}$ of ${C}_{m_k}$ with ground-set $[p_k,q_k]$ that contains $\lc{C}_{t_{k-1}}$ as a sub-causet and satisfies at least one of $(a)$ $p_k<r_{k-1}$ or $(b)$ $q_k>s_{k-1}$. If there exists some $\lc{C}_{m_k}$ that satisfies both $(a)$ and $(b)$, set  $\lc{C}_{t_k}:= \lc{C}_{m_k}$. Otherwise, pick a representative $\lc{C}_{m_k}$ that satisfies $(a)$ or $(b)$. Go up one node along the path to $\Gamma_{1+m_k}\in\mathcal{P}$. Let $\lc{C}\in\lc{\Omega}_{\Z}$ be an infinite certificate of $\Gamma_{1+m_k}$ that contains $\lc{C}_{m_k}$ as a subcauset. Set $\lc{C}_{t_k}:=\lc{C}|_{[p_{k},q_{k}+1]}$ if $\lc{C}_{m_k}$ satisfies $(a)$ or  $\lc{C}_{t_k}:=\lc{C}|_{[p_{k}-1,q_{k}]}$  if $\lc{C}_{m_k}$ satisfies $(b)$.
\newline k.2) Go to step $k+1$.

By construction, the union $\lc{C}:=\bigcup_{i=1}^{\infty} \lc{C}_{t_i}\in\lc{\Omega}_{\Z}$ is a labeled certificate of $\mathcal{P}$. Therefore, if every node in $\mathcal{P}$ has a certificate in $\Omega_{\Z}$ then $\mathcal{P}$ has a certificate in $\Omega_{\Z}$. That the converse is true follows from definition \ref{def_240221}.
\end{proof}

Finally, we can define:
\begin{definition}
    $\Z$-covtree is the subtree of convex-covtree that contains exactly all nodes that have a certificate in $\Omega_{\Z}$.
\end{definition}

$\Z$-covtree is the two-way infinite analogue of covtree that we have set out to build. Proposition \ref{prop240221} guarantees that every inextendible path in $\Z$-covtree has at least one certificate in $\Omega_{\Z}$ and thus allows for every random walk on $\Z$-covtree to be interpreted as a dynamics with sample space  $\Omega_{\Z}$. To see the relationship between a walk on $\Z$-covtree and the corresponding dynamics, for each $\Gamma_n$ in $\Z$-covtree let $cert_{\Z}(\Gamma_n)\subset \lc{\Omega}_{\Z}$ denote the set of labeled  certificates of $\Gamma_n$ whose ground set is $\Z$. Let $\Sigma$ be the $\sigma$-algebra generated by all the $cert_{\Z}(\Gamma_n)$'s. A dynamics is then the probability measure space $({\Omega}_{\Z},\Sigma,\mathbb{P})$ where the measure $\mathbb{P}$ is given by $\mathbb{P}(cert_{\Z}(\Gamma_n))=\mathbb{P}(\Gamma_n)$. We will now show that the observables of these dynamics (\textit{i.e.} the elements of $\Sigma$) are the convex-events.

Recall that, for each finite order $C_n$, $convex(C_n)\subset\lc{\Omega}_{\mathbb{Z}}$ is the collection of causets that contain $C_n$ as a convex-suborder. Let $\mathcal{R}(\mathcal{C})$ denote the $\sigma$-algebra generated by the $convex(C_n)$'s. A convex-event is an element of $\mathcal{R}(\mathcal{C})$.\footnote{It may seem that  labeled causets have snuck back into the story. However, though in Section \ref{section_seq_growth} we formally defined a convex-event to be a set of labeled causets,  because the definition of $convex(C_n)$ is label independent, the  $convex(C_n)$'s and the convex-events generated by them are covariant and can be thought of -- in the obvious way -- as subsets of ${\Omega}_{\mathbb{Z}}$ -- i.e. sets of orders.}

\begin{lemma}\label{lemma_measure_space}
$\Sigma=\mathcal{R}(\mathcal{C})$.
\end{lemma}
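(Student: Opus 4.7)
The plan is to show the two inclusions $\Sigma \subseteq \mathcal{R}(\mathcal{C})$ and $\mathcal{R}(\mathcal{C}) \subseteq \Sigma$ by expressing each generator of one $\sigma$-algebra as a finite Boolean combination of generators of the other. The pivotal fact is that $\Omega(n)$ is finite for every $n$, so the power set $2^{\Omega(n)}$ is finite; this lets all the relevant unions and intersections stay finite, with no need to appeal to countable operations.

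First I would note that every $\lc{C} \in \lc{\Omega}_{\Z}$ determines a unique subset $\Gamma_n(\lc{C}) \subseteq \Omega(n)$, namely its set of $n$-convex-suborders, and that $\lc{C}$ is itself a labeled certificate of $\Gamma_n(\lc{C})$, so $\Gamma_n(\lc{C})$ is a node of $\Z$-covtree. Consequently the sets $\{cert_{\Z}(\Gamma_n)\}$, for fixed $n$ and $\Gamma_n$ ranging over level-$n$ nodes of $\Z$-covtree, partition $\lc{\Omega}_{\Z}$.

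For $\mathcal{R}(\mathcal{C}) \subseteq \Sigma$, I would observe that for any finite order $C_n$,
\begin{equation*}
convex(C_n) \cap \lc{\Omega}_{\Z} \;=\; \bigcup_{\Gamma_n \ni C_n} cert_{\Z}(\Gamma_n),
\end{equation*}
where the union is over subsets $\Gamma_n \subseteq \Omega(n)$ containing $C_n$ that are nodes of $\Z$-covtree (non-node $\Gamma_n$'s contribute $\emptyset$). Since $2^{\Omega(n)}$ is finite, this union is finite, so $convex(C_n) \in \Sigma$; as the $convex(C_n)$'s generate $\mathcal{R}(\mathcal{C})$, the inclusion follows. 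For the reverse $\Sigma \subseteq \mathcal{R}(\mathcal{C})$, I would use the dual identity
\begin{equation*}
cert_{\Z}(\Gamma_n) \;=\; \Bigl(\bigcap_{C \in \Gamma_n} convex(C)\Bigr) \cap \Bigl(\bigcap_{D \in \Omega(n) \setminus \Gamma_n} \bigl(\lc{\Omega}_{\Z} \setminus convex(D)\bigr)\Bigr),
\end{equation*}
which characterises membership in $cert_{\Z}(\Gamma_n)$ as ``contains every $C \in \Gamma_n$ as a convex-suborder and no $D \notin \Gamma_n$ as a convex-suborder''. Again $\Omega(n)$ is finite, so the intersection is finite, placing $cert_{\Z}(\Gamma_n)$ in $\mathcal{R}(\mathcal{C})$; since these generate $\Sigma$, we obtain $\Sigma \subseteq \mathcal{R}(\mathcal{C})$.

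There is no real obstacle: the argument is a definition chase, and finiteness of $\Omega(n)$ takes care of all set-theoretic bookkeeping. The only point one must not slip on is confirming that both identities are correct set-equalities between subsets of $\lc{\Omega}_{\Z}$ (in particular that restricting to $\Z$-covtree nodes on the $\Sigma$ side costs nothing, because $cert_{\Z}(\Gamma_n) = \emptyset$ whenever $\Gamma_n$ is not a $\Z$-covtree node), and this follows directly from the definitions of $cert_{\Z}$ and of $\Z$-covtree.
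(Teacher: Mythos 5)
Your proof is correct and follows essentially the same route as the paper: both directions rest on the identities $convex(C_n)=\bigcup_{\Gamma_n\ni C_n}cert_{\Z}(\Gamma_n)$ and $cert_{\Z}(\Gamma_n)=\bigcap_{C\in\Gamma_n}convex(C)\setminus\bigcup_{D\notin\Gamma_n}convex(D)$, with finiteness of $\Omega(n)$ guaranteeing that only finite Boolean operations are needed. Your explicit remark that $cert_{\Z}(\Gamma_n)=\emptyset$ for non-nodes, and the partition observation, are harmless additions to what is the paper's own argument.
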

\begin{proof}
\par We will show that any $convex(C_n)$ can be constructed by finite set operations on the $cert_{\Z}(\Gamma_m)$'s and vice versa, and the result follows.
\par Consider an $n$-order $B_n$. Let $\Gamma_n^i$ be the nodes in convex-covtree that contain $B_n$, where $i$ labels the individual nodes. Suppose $E\in cert_{\Z}(\Gamma_n^i)$ for some $i$. Then $B_n$ is an $n$-convex-suborder in $E$ and hence $E\in convex(B_n)$. Suppose $E\notin cert_{\Z}(\Gamma_n^i)$ for all $i$. Then $B_n$ is not an $n$-convex-suborder in $E$ and hence $E\notin convex(B_n)$. It follows that $convex(B_n)=\bigcup_i cert_{\Z}(\Gamma_n^i)$.
\par Consider some node $\Gamma_n=\{A_n^1, ..., A_n^k\}$ in convex-covtree. Let $\Omega(n)\setminus \Gamma_n =\{B_n^1, ..., B_n^l\}$. Suppose $E\in cert_{\Z}(\Gamma_n)$. Then $A_n^1,...,A_n^k$ are $n$-convex-suborders in $E$, and $B_n^1,...,B_n^l$ are not $n$-convex-suborders in $E$. Hence $E\in \bigcap \limits_{i=1}^k convex(A_n^i)\setminus \bigcup \limits_{j=1}^l convex(B_n^j)$. Suppose $E\notin cert_{\Z}(\Gamma_n)$. Then either $(i)$ there exists some $A_n^i\in \Gamma_n$ that is not an $n$-convex-suborder in $E \implies E\notin \bigcap \limits_{i=1}^k convex(A_n^i)$, or $(ii)$ there exists some $B_n^j\in \Omega(n)\setminus \Gamma_n$ that is an $n$-convex-suborder in $E \implies E\in \bigcup \limits_{j=1}^l convex(B_n^j)$. It follows that, $cert_{\Z}(\Gamma_n)=\bigcap \limits_{i=1}^k convex(A_n^i)\setminus \bigcup \limits_{j=1}^l convex(B_n^j)$.
\end{proof}

Lemma \ref{lemma_measure_space} strengthens the analogy between covtree and $\Z$-covtree---the observables of covtree are the stem-events while the observables of $\Z$-covtree are the convex-events. $\Z$-covtree is to alternating poscau what covtree is to labeled poscau. Convex-suborders are to two-way infinite dynamics what stems are to past-finite dynamics.
\FloatBarrier

\section{Discussion} \label{section_discussion}

In this work, we set out to build frameworks for growth dynamics for two-way infinite causal sets. We began by adapting the sequential growth paradigm to create \textit{alternating growth} models. We discussed the difficulties in attributing any physical significance to the \textit{process} of alternating growth and difficulties in  formulating and interpreting a  ``causality'' condition in this framework. We showed that the only alternating CSG model that satisfies Discrete General Covariance is Alternating Transitive Percolation. These may be considered as evidence against the existence of physically meaningful dynamical growth models for two-way infinite causal sets. 

On the positive side,  we identified a set of covariant observables that possess a clear physical interpretation, namely the convex-events. However we also showed that
that Alternating Transitive Percolation is deterministic with respect to the convex-events: the probability of any convex-event in Alternating Transitive Percolation is 0 or 1 and in particular the probability of any finite order being a convex-suborder of the growing causet is 1. There do exist Alternating CSG models for which this is not the case, suggesting that there may be models in which the convex-events may yet form a rich and interesting class of observables. This depends on future developments and whether some physically motivated and interesting alternating sequential growth models can be found. 

We then used the notion of convex-suborders and convex-events to adapt the covariant growth framework of \cite{Dowker:2019qiz} to two-way infinite growth. We encountered additional complications that are not present in the original construction, namely that the existence of a finite certificate does not guarantee the existence of an infinite certificate and that the existence of an infinite certificate does not guarantee the existence of a certificate in $\Omega_{\Z}$. Nevertheless, we were able to define a consistent covariant framework for two-way growth, $\Z$-covtree, whose observables are the convex-events.

Throughout, we were led to considering convex-suborders as the basic physical properties for two-way infinite growth by pursuing an analogy with stems and the role that they play in past-finite growth. In fact, convex-suborders are a generalisation of stems---a stem is a convex-suborder that contains its own past.\footnote{When considering both the $convex(C_n)$'s and the $stem(C_n)$'s as subsets of $\lc{\Omega}_{\mathbb{N}}$, a convex-event is a special case of a stem-event.}  Nevertheless, there may be other entities that could be considered as physical properties for two-way infinite dynamics, for example, downsets (subcausets that contain their own past---a generalisation of stem in which the condition of finite cardinality is relaxed), Moment of Time Surfaces (thickenned antichains \cite{Major:2005fy}), or intervals (special cases of convex-suborders). While these alternatives may prove fruitful in the future, we can identify a property unique to convex-suborders that is essential for our constructions: every infinite order contains at least one $n$-convex-suborder for every $n>0$. 

A significant downside of our new covariant framework is that the event that the completed order contains a post is not measurable since it is not a convex-event (Fig. \ref{no_posts}). Moreover, the cosmic renormalisation transformation associated with posts relies crucially on the cardinality of the past of the post,  while a post in a two-way infinite order will necessarily have an infinite past. Both posts and cosmic renormalisation play a pivotal role in the conception of causal set cosmology \cite{Sorkin:1998hi,Martin:2000js,Zalel:2020oyf} and so the two-way infinite growth models for causal set cosmology will require a new way of thinking about this cosmological paradigm.

Another challenge is to identify Alternating CSG dynamics in which there is a large and rich enough class of convex-events that serve usefully to discriminate between different realisations of the process, including with measures that lie strictly between 0 and 1.  To this end we may need to consider the sequence $(p_n)$, a representation of the CSG models that is related to the $t_k$'s by equation \eqref{eff_par_eq}. When $(p_n)$ is a constant sequence, the dynamics is Transitive Percolation and the measure of every convex-event is equal to 1. What behaviour does the sequence $(p_n)$ need to display in order for a dynamics to be probabilistic with respect to convex-events? How quickly must the sequence $(p_n)$ increase or decrease to give sufficiently different behaviour from the constant sequence of Transitive Percolation?

\begin{figure}[h]
    \centering
    \includegraphics[scale=0.6]{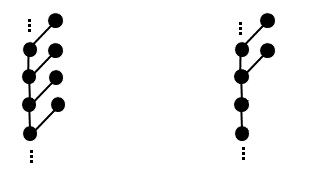}
    \caption{The 2-way infinite comb (left) and the future infinite comb above an infinite chain (right) have the same $n$-convex-suborders for every $n>0$, therefore every convex-event contains either both or neither. The order on the right contains posts while the order on the left does not. Therefore, the event that the completed order contains a post is not a convex-event.}
    \label{no_posts}
\end{figure}

\vspace{2mm}
\par \textbf{Acknowledgments:} The authors thank Benjamin Honan and Ashim Sen Gupta for discussions. FD thanks the participants of the online seminar series ``Cosmology Beyond Spacetime''  on 10 March 2021, organised by Nick Huggett and Christian W\"uthrich for discussions.  FD acknowledges the support of  the Leverhulme/Royal Society interdisciplinary  APEX grant APX/R1/180098. FD
is supported in part by Perimeter Institute for Theoretical Physics. Research at Perimeter Institute is supported by the Government of Canada through
Industry Canada and by the Province of Ontario through the Ministry of Economic Development and Innovation. FD is supported in part by STFC grant
ST/P000762/1.  SZ thanks the Perimeter Institute for hospitality while this work was being completed. SZ is partially supported by the Beit Fellowship for Scientific Research and by the Kenneth Lindsay Scholarship Trust.

\appendix
\section{Table of symbols defined in the text}
\FloatBarrier
\begin{table}[htpb!]\centering
\begin{tabular}{| l | l |}
\hline
  $\lc{C}, \lc{D}, ...$ &  labeled causets \\ \hline
  $C, D,...$ &  orders \\ \hline
  $\cong$ &  $\lc{C}\cong\lc{D}$ if $\lc{C}$ and $\lc{D}$ are equal up to an order-isomorphism\\ \hline
  $\tilde{\Omega}_{\mathbb{N}}$ & The set of labeled causets with ground-set $\mathbb{N}$\\ \hline
  $\tilde{\Omega}_{\mathbb{Z}}$ & The set of labeled causets with ground-set $\mathbb{Z}$\\ \hline
  $\tilde{\Omega}_{\mathbb{Z}^-}$ & The set of labeled causets with ground-set $\mathbb{Z}^-$\\ \hline
  $\tilde{\Omega}$ & The set of infinite labeled causets, $\tilde{\Omega}\equiv\tilde{\Omega}_{\mathbb{N}}\sqcup\tilde{\Omega}_{\mathbb{Z}}\sqcup\tilde{\Omega}_{\mathbb{Z}^-}$\\ \hline
  ${\Omega}$ & The set of infinite orders, ${\Omega}:=\tilde{\Omega}/\cong$\\ \hline
  ${\Omega}_{\mathbb{N}}$ &  The set of orders that have a representative in $\lc{{\Omega}}_{\mathbb{N}}$ \\ \hline
  ${\Omega}_{\mathbb{Z}}$ &  The set of orders that have a representative in $\lc{{\Omega}}_{\mathbb{Z}}$\\ \hline
  ${\Omega}_{\mathbb{Z}^-}$ & The set of orders that have a representative in $\lc{{\Omega}}_{\mathbb{Z}^-}$\\ \hline

  $\Omega(n)$ & The set of $n$-orders for some $n\in\mathbb{N}^+$ \\ \hline
  $\Gamma_n$ &  A subset of $\Omega(n)$\\ \hline
\end{tabular}
\caption{Table of symbols defined in the text.}
\end{table}
\FloatBarrier

\section{On infinite certificates of nodes and paths in convex-covtree}

By proposition \ref{lemma_1_211020}, every infinite path in convex-covtree has at least one certificate in $\Omega$. By proposition \ref{prop240221}, an infinite path in convex-covtree has a certificate in $\Omega_{\mathbb{Z}}$ if and only if each of its nodes has a certificate in $\Omega_{\mathbb{Z}}$. There exist nodes whose infinite certificates are only contained in $\Omega_{\mathbb{N}}$ or only in $\Omega_{\mathbb{Z}^-}$ (see section \ref{subsec_inf_paths} for examples), and therefore the infinite paths containing these nodes only have certificates in   $\Omega_{\mathbb{N}}$ or in $\Omega_{\mathbb{Z}^-}$, respectively.

There exists no node in convex-covtree whose infinite certificates are only contained in $\Omega_{\mathbb{Z}}$, since if a node has a certificate in $\Omega_{\mathbb{Z}}$ then it has a certificate in $\Omega_{\mathbb{N}}$ and in $\Omega_{\mathbb{Z}^-}$. To see this, let $\lc{C}\in\lc{\Omega}_{\mathbb{Z}}$ be a labeled certificate of some $\Gamma_n$ and let $\lc{C}|_{[k,l]}$ be a finite certificate of $\Gamma_n$. Then $\lc{C}|_{[k,\infty)}$ is order-isomorphic to some $\lc{D}\in\lc{\Omega}_{\mathbb{N}}$ and $\lc{D}$ is a certificate of $\Gamma_n$. Similarly, $\lc{C}|_{(\infty, l]}$ is order-isomorphic to some $\lc{E}\in\lc{\Omega}_{\mathbb{Z}^-}$ and $\lc{E}$ is a certificate of $\Gamma_n$.


There exist infinite paths in convex-covtree whose infinite certificates are only contained in $\Omega_{\mathbb{Z}}$. An infinite path only has certificates in $\Omega_\mathbb{Z}$ if and only if there is no one order in $\Omega_\mathbb{N}\cup\Omega_{\mathbb{Z}^-}$ that is a certificate of \textit{every} node in the path. For example, consider the path 
\vspace{1mm}
$$\mathcal{P}=\{\ \oneach \}\prec \{\twoch, \ \twoach\}\prec\{\threech, \lambdacauset, \vee  \}\prec\{ \fourch, \lambdafour,\topvee , \diamond \ \}\prec ... \ ,$$
 whose certificate is the order $D$ shown on the right of figure \ref{path_cert_fig}. Each node in $\mathcal{P}$ has a certificate in  $\Omega_{\mathbb{N}}$, as illustrated in figure \ref{path_cert_fig}, but there is no order in $\Omega_{\mathbb{N}}$ that is a certificate of every node in  $\mathcal{P}$. One way to see this is to notice that for every $n>3$, $\Gamma_n\in\mathcal{P}$ has a unique minimal certificate, namely the diamond sandwiched between two $(n-3)$-chains. Now, pick some $n>3$ and w.l.g. pick a representative of its minimal certificate, $\lc{C}_{2n-6}$, with ground-set $[0,2n-6]$. We seek a labeled minimal certificate $\lc{C}_{2n-4}$ of $\Gamma_{n+1}$ that contains $\lc{C}_{2n-6}$ as a subcauset, and find that $\lc{C}_{2n-4}$ must have ground-set $[-1,2n-5]$. Next we seek a labeled minimal certificate $\lc{C}_{2n-2}$ of $\Gamma_{n+2}$ that contains $\lc{C}_{2n-4}$ as a subcauset, and find that $\lc{C}_{2n-2}$ must have ground-set $[-2,2n-4]$ \textit{etc.} Since at each stage we add a positive and a negative integer to the ground-set, in the infinite limit the labeled certificate must have ground-set $\mathbb{Z}$.
 
Since the existence of a certificate in $\Omega_{\mathbb{N}}$ for each $\Gamma_n\in\mathcal{P}$ does not guarantee that $\mathcal{P}$ has a certificate in $\Omega_{\mathbb{N}}$ (\textit{i.e.} there is no analogue of proposition \ref{prop240221} for $\Omega_{\mathbb{N}}$) there is no subtree of convex-covtree that contains exactly all infinite paths that have certificates in $\Omega_{\mathbb{N}}$ , \textit{i.e.} there is no $\mathbb{N}$ analogue of $\mathbb{Z}$-covtree. Thus, convex-covtree cannot be truncated into a growth framework whose sample space is $\Omega_{\mathbb{N}}$, suggesting that convex-events (now treated as subsets of $\Omega_{\mathbb{N}}$) are not rich enough to exhaust the set of observables in past-finite dynamics.

One can understand this difference between $\Omega_{\mathbb{N}}$ and $\Omega_{\mathbb{Z}}$ using metric space techniques. For any two orders $C$ and $D$, let $C\sim D$ if and only if $C$ and $D$ are a convex-rogue pair, \textit{i.e.} if they share the same $n$-convex-suborders for all $n$. Let $\Omega_{\mathbb{N}}/\sim$ and $\Omega_{\mathbb{Z}}/\sim$ be quotient spaces under the convex-rogue equivalence relation, so that their elements are equivalence classes of orders denoted by $[C]$ \textit{etc.} We can consider these quotient spaces as metric spaces with metric $d([C],[D])=\frac{1}{2^n}$, where $n$ is the largest integer for which representatives of $[C]$ and $[D]$ have the same sets of $n$-convex-suborders. Given a node $\Gamma_n$ in convex-covtree we can associate with it a subset $[cert_{\mathbb{N}}(\Gamma_n)]\subseteq\Omega_{\mathbb{N}}/\sim$, namely the set of elements of $\Omega_{\mathbb{N}}/\sim$ whose representatives are certificates of $\Gamma_n$, and similiarly $[cert_{\mathbb{Z}}(\Gamma_n)]\subseteq\Omega_{\mathbb{Z}}/\sim$. Given a path $\mathcal{P}=\Gamma_1\prec\Gamma_2\prec...$, we can associate with it the sets $[cert_{\mathbb{N}}(\mathcal{P})]=\bigcap_{\Gamma_n\in\mathcal{P}}[cert_{\mathbb{N}}(\Gamma_n)]$ and $[cert_{\mathbb{Z}}(\mathcal{P})]=\bigcap_{\Gamma_n\in\mathcal{P}}[cert_{\mathbb{Z}}(\Gamma_n)]$. Since the metric space $(\Omega_{\mathbb{Z}}/\sim,d)$ is complete, by Cantor's lemma $[cert_{\mathbb{Z}}(\mathcal{P})]$ is non-empty whenever all the $[cert_{\mathbb{Z}}(\Gamma_n)]$'s are non-empty (cf. proposition  \ref{prop240221}). On the other hand, the metric space $(\Omega_{\mathbb{N}}/\sim,d)$ is not complete and therefore $[cert_{\mathbb{N}}(\mathcal{P})]$ can be empty when all the $[cert_{\mathbb{N}}(\Gamma_n)]$'s are non-empty.

\begin{figure}[h]
    \centering
    \includegraphics[scale=0.6]{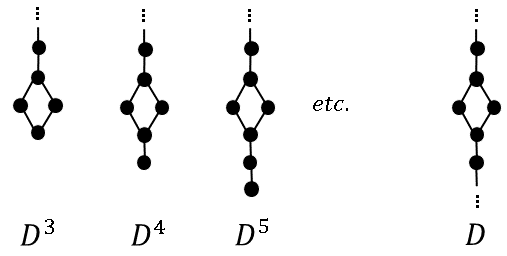}
    \caption{The order $D\in\Omega_{\Z}$ shown on the right is a certificate of the path $\mathcal{P}$. Every node in $\mathcal{P}$ has a certificate in $\Omega_{\mathbb{N}}$: $D^3$ is a certificate of $\Gamma_n\in\mathcal{P}$ only for $n\leq 3$, $D^4$ is a certificate of $\Gamma_n\in\mathcal{P}$ only for $n\leq 4$, $D^5$ is a certificate of $\Gamma_n\in\mathcal{P}$  only for $n\leq 5$, \textit{etc}. There is no order in $\Omega_{\mathbb{N}}$ that is a certificate of every node in $\mathcal{P}$.}
    \label{path_cert_fig}
\end{figure}

\bibliography{../Bibliography/refs}{}

\begin{thebibliography}{10}

\bibitem{Brightwell:2007aq}
Graham Brightwell, Joe Henson, and Sumati Surya.
\newblock {A 2D model of Causal Set Quantum Gravity: The emergence of the
  continuum}.
\newblock {\em Class. Quant. Grav.}, 25:105025, 2008.

\bibitem{Surya:2011du}
Sumati Surya.
\newblock {Evidence for a Phase Transition in 2D Causal Set Quantum Gravity}.
\newblock {\em Class. Quant. Grav.}, 29:132001, 2012.

\bibitem{Glaser:2017sbe}
Lisa Glaser, Denjoe O'Connor, and Sumati Surya.
\newblock {Finite Size Scaling in 2d Causal Set Quantum Gravity}.
\newblock {\em Class. Quant. Grav.}, 35(4):045006, 2018.

\bibitem{Loomis:2017jhn}
S.P. Loomis and S.~Carlip.
\newblock {Suppression of non-manifold-like sets in the causal set path
  integral}.
\newblock {\em Class. Quant. Grav.}, 35(2):024002, 2018.

\bibitem{Cunningham:2019rob}
William~J. Cunningham and Sumati Surya.
\newblock {Dimensionally Restricted Causal Set Quantum Gravity: Examples in Two
  and Three Dimensions}.
\newblock {\em Class. Quant. Grav.}, 37(5):054002, 2020.

\bibitem{Rideout:1999ub}
D.~P. Rideout and R.~D. Sorkin.
\newblock {A Classical sequential growth dynamics for causal sets}.
\newblock {\em Phys. Rev.}, D61:024002, 2000.

\bibitem{Rideout:phd}
David~P. Rideout.
\newblock {\em Dynamics of Causal Sets}.
\newblock PhD thesis, Syracuse University, 2002.

\bibitem{Rideout:2000fh}
D.P. Rideout and R.D. Sorkin.
\newblock {Evidence for a continuum limit in causal set dynamics}.
\newblock {\em Phys.Rev.}, D63:104011, 2001.

\bibitem{Brightwell:2002yu}
Graham Brightwell, H.~Fay Dowker, Raquel~S. Garcia, Joe Henson, and Rafael~D.
  Sorkin.
\newblock General covariance and the `problem of time' in a discrete cosmology.
\newblock In K.G. Bowden, editor, {\em Correlations: Proceedings of the ANPA 23
  conference, August 16-21, 2001, Cambridge, England}, pages 1--17. Alternative
  Natural Philosophy Association, 2002.

\bibitem{Brightwell:2002vw}
Graham Brightwell, H.~Fay Dowker, Raquel~S. Garcia, Joe Henson, and Rafael~D.
  Sorkin.
\newblock 'observables' in causal set cosmology.
\newblock {\em Phys. Rev.}, D67:084031, 2003.

\bibitem{Brightwell:2009x}
Graham Brightwell and Nicholas Georgiou.
\newblock Continuum limits for classical sequential growth models.
\newblock {\em Random Structures and Algorithms}, 36, 01 2009.

\bibitem{Criscuolo:1998gd}
A.~Criscuolo and H.~Waelbroeck.
\newblock {Causal set dynamics: A Toy model}.
\newblock {\em Class. Quant. Grav.}, 16:1817--1832, 1999.

\bibitem{Sorkin:2012xx}
Rafael~D. Sorkin.
\newblock Toward a fundamental theorem of quantal measure theory.
\newblock {\em Mathematical Structures in Computer Science}, 22:816--852, 10
  2012.

\bibitem{Dowker:2010qh}
Fay Dowker, Steven Johnston, and Sumati Surya.
\newblock {On extending the Quantum Measure}.
\newblock {\em J.Phys.A}, A43:505305, 2010.

\bibitem{Surya:2020cfm}
Sumati Surya and Stav Zalel.
\newblock {A Criterion for Covariance in Complex Sequential Growth Models}.
\newblock {\em Class. Quant. Grav.}, 37(19):195030, 2020.

\bibitem{Wuthrich:2015vva}
Christian W\"uthrich and Craig Callender.
\newblock {What Becomes of a Causal Set?}
\newblock {\em Brit. J. Phil. Sci.}, 68(3):907--925, 2017.

\bibitem{Sorkin:2007qh}
Rafael~D. Sorkin.
\newblock {Relativity theory does not imply that the future already exists: A
  Counterexample}.
\newblock In Vesselin Petkov, editor, {\em Relativity and the Dimensionality of
  the World}, Fundamental Theories of Physics. Springer, 2007.

\bibitem{spacetimeatoms}
Fay Dowker.
\newblock The birth of spacetime atoms as the passage of time.
\newblock {\em Annals of the New York Academy of Sciences}, 1326(1):18--25,
  2014.

\bibitem{Dowker:2019qiz}
Fay Dowker, Nazireen Imambaccus, Amelia Owens, Rafael Sorkin, and Stav Zalel.
\newblock {A manifestly covariant framework for causal set dynamics}.
\newblock {\em Class. Quant. Grav.}, 37(8):085003, 2020.

\bibitem{Brightwell:2011}
Graham Brightwell and Malwina Luczak.
\newblock {Order-invariant measures on causal sets}.
\newblock {\em The {A}nnals of {A}pplied {P}robability}, 21(4):1493--1536,
  2011.

\bibitem{honan:2018}
Benjamin Honan.
\newblock Causal set dynamics: Interpretations of {CPT} invariance and defining
  a new dynamics.
\newblock Master's thesis, Imperial College London, 2018.

\bibitem{Gupta:2018}
Ashim~Sen Gupta.
\newblock Realising charge-parity-time symmetry in causal set dynamics.
\newblock Master's thesis, Imperial College London, 2018.

\bibitem{Martin:2000js}
Xavier Martin, Denjoe O'Connor, David~P. Rideout, and Rafael~D. Sorkin.
\newblock On the `renormalization' transformations induced by cycles of
  expansion and contraction in causal set cosmology.
\newblock {\em Phys. Rev.}, D63:084026, 2001.

\bibitem{Ash:2002un}
Avner Ash and Patrick McDonald.
\newblock {Moment problems and the causal set approach to quantum gravity}.
\newblock {\em J. Math. Phys.}, 44:1666--1678, 2003.

\bibitem{Varadarajan:2005gg}
Madhavan Varadarajan and David Rideout.
\newblock {A General solution for classical sequential growth dynamics of
  causal sets}.
\newblock {\em Phys. Rev. D}, 73:104021, 2006.

\bibitem{Dowker:2017zqj}
Fay Dowker and Stav Zalel.
\newblock {Evolution of Universes in Causal Set Cosmology}.
\newblock {\em Comptes Rendus Physique}, 18:246--253, 2017.

\bibitem{Zalel:2020oyf}
Stav Zalel.
\newblock {The structure of covtree: searching for manifestly covariant causal
  set dynamics}.
\newblock {\em Class. Quant. Grav.}, 38(1):015001, 2021.

\bibitem{Bruno:2018}
Bruno Bento.
\newblock Causal set dynamics and the problem of time.
\newblock Master's thesis, Imperial College London, 2018.

\bibitem{Alon:1994}
N.~Alon, B.~Bollob\'as, G.~Brightwell, and S.~Janson.
\newblock Linear extensions of a random partial order.
\newblock {\em Ann. Appl. Probab.}, 4:108--123, 1994.

\bibitem{Dowker:2005gj}
Fay Dowker and Sumati Surya.
\newblock {Observables in extended percolation models of causal set cosmology}.
\newblock {\em Class. Quant. Grav.}, 23:1381--1390, 2006.

\bibitem{Brightwell:2016}
Graham Brightwell and Malwina Luczak.
\newblock The mathematics of causal sets.
\newblock In {\em Recent {T}rends in {C}ombinatorics}, volume 159 of {\em The
  {IMA} {V}olumes in {M}athematics and its {A}pplications}. Springer, {C}ham,
  2016.

\bibitem{Jeffreys:2006}
Sir~Harold Jeffreys and Bertha Swirles.
\newblock {\em Methods of Mathematical Physics}, page~52.
\newblock Cambridge University Press, 1966.

\bibitem{Sorkin:1998hi}
Rafael~D. Sorkin.
\newblock {Indications of causal set cosmology}.
\newblock {\em Int. J. Theor. Phys.}, 39:1731--1736, 2000.

\bibitem{Major:2005fy}
Seth Major, David Rideout, and Sumati Surya.
\newblock {Spatial hypersurfaces in causal set cosmology}.
\newblock {\em Class. Quant. Grav.}, 23:4743--4752, 2006.

\end{thebibliography}
\bibliographystyle{unsrt}

\end{document}